\documentclass[11pt]{article}
\usepackage{theorem}
\usepackage{latexsym}
\usepackage{graphicx}
\usepackage{amssymb}
\usepackage{amsmath}
\topmargin-2cm \textheight 234mm \textwidth  155mm
\oddsidemargin.5cm \evensidemargin.5cm
\parskip 1ex plus 0.2ex minus 0.2ex
\parindent 0mm
\reversemarginpar
\theoremheaderfont{\bf}
\theorembodyfont{\sl}
\newtheorem{theo}{Theorem}[section]
{\theorembodyfont{\rm} \newtheorem{defi}[theo]{Definition}}
{\theorembodyfont{\rm} }
{\theorembodyfont{\rm} \newtheorem{exa}[theo]{Example}}
{\theorembodyfont{\rm} \newtheorem{rem}[theo]{Remark}}
\newtheorem{prop}[theo]{Proposition}
\newtheorem{cor}[theo]{Corollary}

{\theorembodyfont{\rm} }
\newtheorem{lemma}[theo]{Lemma}
{\theorembodyfont{\rm}}
{\theorembodyfont{\rm}}
\newenvironment{proof}{{\sc Proof:}}{\mbox{}\hfill$\Box$\par}

\newcommand{\eqnref}[1]{~\mbox{$(${\rm \ref{#1}}$)$}}
\newcommand{\Section}[1]{\section{#1}\setcounter{equation}{0}}

\newcommand{\junk}[1]{}

\newcommand{\TS}{\textstyle}
\newcommand{\N}{{\mathbb N}}
\newcommand{\F}{{\mathbb F}}

\newcommand{\cC}{{\mathcal C}}

\newcommand{\cL}{{\mathcal L}}
\newcommand{\cS}{{\mathcal S}}
\newcommand{\cW}{{\mathcal W}}
\newcommand{\cI}{{\mathcal I}}
\newcommand{\cA}{{\mathcal A}}
\newcommand{\cB}{{\mathcal B}}
\newcommand{\cT}{{\mathcal T}}

\newcommand{\kk}{\mbox{$\underline{k}$}}

\newcommand{\rk}{\text{rk}\,}

\newcommand{\KV}{\text{KV}}

\newcommand{\edge}[1]{\mbox{$\longrightarrow$}\hspace{-1.3em}\raisebox{1.3ex}{${\scriptscriptstyle{#1}}$}\hspace{.8em}}
\newcommand{\im}{\mbox{\rm im}\,}

\newcommand{\col}{\mbox{\rm col}}

\newcommand{\T}{\mbox{$\!^{\sf T}$}}

\renewcommand{\mod}{\,{\rm mod}\,}

\newenvironment{liste}{\begin{list}{--\hfill}{\topsep0ex \labelwidth.4cm
   \leftmargin.5cm \labelsep.1cm \rightmargin0cm \parsep0ex \itemsep.6ex
   \partopsep1.4ex}}{\end{list}}
\newcounter{alp}
\newcounter{ara}
\newcounter{rom}
\newenvironment{romanlist}{\begin{list}{(\roman{rom})\hfill}{\usecounter{rom}
     \topsep0ex \labelwidth.7cm \leftmargin.7cm \labelsep0cm
     \rightmargin0cm \parsep0ex \itemsep.6ex
     \partopsep1.6ex}}{\end{list}}
\newenvironment{alphalist}{\begin{list}{(\alph{alp})\hfill}{\usecounter{alp}
     \topsep0ex \labelwidth.7cm \leftmargin.7cm \labelsep0cm
     \rightmargin0cm \parsep0ex \itemsep.6ex
     \partopsep1.6ex}}{\end{list}}

\title{Linear Tail-Biting Trellises: Characteristic Generators and the BCJR-Construction}
\date{March~23, 2010}
\author{Heide Gluesing-Luerssen\thanks{The author was partially supported by National Science Foundation
        grant \#DMS-0908379}\ \; and Elizabeth Weaver
       \\
       University of Kentucky\\
       Department of Mathematics\\
       715 Patterson Office Tower\\
       Lexington, KY 40506-0027, USA;
       \\
       $\{$heidegl,eweaver$\}$@ms.uky.edu
       }

\begin{document}
\maketitle
\noindent{\bf Abstract:}
We investigate the constructions of tail-biting trellises for linear block codes introduced by Koetter/Vardy~\cite{KoVa03}
and Nori/Shankar~\cite{NoSh06}.
For a given code we will define the sets of characteristic generators more generally than in~\cite{KoVa03}
and we will investigate how the choice of characteristic generators affects the set of resulting product trellises,
called KV-trellises.
Furthermore, we will show that each KV-trellis is a BCJR-trellis, defined in a slightly stronger sense than in~\cite{NoSh06}, and
that the latter are always non-mergeable.
Finally, we will address a duality conjecture of Koetter/Vardy by making use of a dualization technique of BCJR-trellises and prove
the conjecture for minimal trellises.

\noindent{\bf Keywords:} linear block codes, tail-biting trellises, characteristic generators,
tail-biting BCJR-trellises, minimal trellises

\noindent{\bf MSC (2000):} 94B05, 94B12, 68R10, 93B20

\Section{Introduction}\label{S-Intro}

During the last decade conventional and tail-biting trellis representations of linear block codes have gained a great deal of attention
because both types of trellises do not only reveal code structure but also may give rise to efficient decoding algorithms of Viterbi type.

As a consequence, the quest is on for constructing trellises with low complexity with respect to various measures.
For conventional trellises it is by now well-known that for a given linear block code (and with a fixed coordinate ordering)
there exists a unique minimal trellis and minimality coincides with biproperness as well as with non-mergeability; see \cite{Mu88,McE96,Fo88}.
This unique minimal trellis simultaneously minimizes the vertex count at each time as well as any other conceivable complexity measure.
Several methods exist to construct this minimal trellis, the BCJR-construction~\cite{BCJR74, McE96}, the Forney-construction based in past-future induced
realizations~\cite{Fo88} and others.
For an excellent overview on conventional trellises we refer to Vardy's survey~\cite{Va98}.

For tail-biting trellises the situation is considerably more difficult.
Due to an additional degree of freedom (the cardinality of the vertex space at time~$0$) minimal tail-biting trellises are not unique
and this applies to minimality with respect to various measures, e.~g.,
the state complexity profile, the maximum state dimension, the sum of the state dimensions as well as to all
these notions where we replace the state spaces by the edge spaces, see, for instance,~\cite{CFV99,KoVa03}.
Moreover, minimality, non-mergeability, and biproperness are not equivalent for tail-biting trellises.
The fact, however, that the complexity of a tail-biting trellis may be much lower than that of the minimal conventional trellis raised a great deal
of interest in these trellises and, in particular, in the question how to efficiently construct and to analyze tail-biting trellises;
see also~\cite{ShBe00,LiSh00} in addition to the aforementioned references.

A major breakthrough has been made by Koetter/Vardy in the excellent paper~\cite{KoVa03}.
Therein, Koetter/Vardy presented a construction from which all minimal tail-biting trellises can be derived.
Their construction accounts for all cyclic shifts of the code and results in a list of characteristic generators (codewords associated with spans),
which then comprise what they call the characteristic matrix.
It can be shown that each minimal trellis (with respect to any of the complexity measures mentioned above) is a KV-trellis, that is,
arises as the product of the elementary trellises associated with characteristic generators.
However, some care needs to be exercised at this point because not all minimal trellises of the given code can be obtained from the same list of characteristic generators.
We will present an example later on in~\ref{E-CMmintrellis}.
Furthermore, as observed in~\cite{KoVa03}, not all KV-trellises are minimal trellises.
But, as we will see later on, they are always non-mergeable.

A different construction of tail-biting trellises has been introduced by Nori/Shankar in~\cite{NoSh06}.
Their construction generalizes the well-known BCJR-construction of conventional trellises.
The additional degree of freedom, the choice of the vertex space at time~$0$, is made explicit by a displacement matrix
which can be regarded as capturing the circular past of the various trajectories (cycles) through the trellis.
In the conventional case this past is simply zero.
While the displacement matrix can be chosen arbitrarily, the best trellises arise when the matrix is based on the choice of
spans for the given basis of the code.
Later on we will restrict ourselves mainly to those displacement matrices.
Nori/Shankar~\cite{NoSh06} showed that a non-mergeable product trellis is isomorphic to the BCJR-trellis based
on the same span list.
A particularly nice feature of the tail-biting BCJR-trellises is that they naturally give rise to a dualization technique,
resulting in a trellis for the dual code with the same state complexity profile.

In this paper we will discuss the relation between the Koetter/Vardy and Nori/Shankar approach and present further properties.
We will show that BCJR-trellises (in a slightly stronger sense than in~\cite{NoSh06}) are non-mergeable and that all
KV-trellises are BCJR-trellises, hence non-mergeable.
We will further show that any one-to-one product trellis can be merged to a BCJR-trellis and merging is accomplished by taking suitable
quotients of the vertex spaces.
Finally, we will discuss a duality conjecture stated in~\cite{KoVa03}.
It deals with a particular way of dualizing a KV-trellis.
This leads to a trellis with the same state complexity profile, but for which
it is not a priori clear whether it represents the dual code.
We will show that, in general, this depends on the choice of the characteristic matrix and we will prove
that for minimal trellises there always exists a suitable choice of the dual characteristic matrix such that the conjecture holds true.
In that case the dual trellis obtained via Koetter/Vardy's characteristic matrix turns out to be the same as the BCJR-dual.
The main tool for the proof is a dualization technique going back to Mittelholzer~\cite{Mi95} and Forney~\cite{Fo01} which
amounts to dualizing the edge spaces with respect to a particular bilinear form.
As the proof will show, for minimal trellises the resulting edge space dual coincides with the BCJR-dual.
We will conclude the paper with a list of open problems.

\Section{Preliminaries}
In this section we introduce the basic notions for tail-biting trellises.
Throughout the paper let~$\F$ be a finite field.

A {\sl tail-biting trellis\/} $T=(V,E)$ over~$\F$ of depth~$n$ is a directed edge-labeled graph
with the property that the vertex set~$V$ partitions into $n$ disjoint sets
$V=V_0\cup V_1\cup\ldots\cup V_{n-1}$
such that every edge in~$T$ that starts in~$V_i$ ends in $V_{i+1\mod n}$ and where the edges are labeled with
field elements from~$\F$.
Thus, the edge set~$E$ decomposes into
$E=\bigcup_{i=0}^{n-1}E_i,$ where $E_i$ is the set of edges starting in~$V_i$ and ending in~$V_{i+1\mod n}$.
A {\sl cycle\/} in~$T$ is a closed path of length~$n$ in the trellis.
We will always assume that the cycles start and end (at the same point) in~$V_0$.
If~$|V_0|=1$, the trellis is called {\sl conventional\/} (we will not consider conventional trellises where $|V_0|>1$).
We call the trellis {\sl reduced\/} if every vertex and every edge appears in at least one cycle.
The trellis is called {\sl biproper\/} if any two edges starting at the same vertex or ending at the same vertex are labeled distinctly.
The interval $\cI:=\{0,\ldots,n-1\}$ is the {\sl time axis\/} of the trellis.

In the sequel we will always identify the edges with their triples consisting of starting point, label, and endpoint point.
Thus, the edge sets are given by
\[
   E_i=\{(v,a,\hat{v})\mid \text{ there exist an edge } v\edge{\;a}\,\hat{v} \text{ where }v\in V_i,\,\hat{v}\in V_{i+1},\,a\in\F\},\,i\in\cI.
\]
Hence~$E_i\subseteq V_i\times\F\times V_{i+1}$ and $(V,E)$ completely specifies the trellis.
The edge sets have also been called trellis section~\cite{CFV99,FT93} or local constraints~\cite{Fo01}.

Notice that we have to compute modulo~$n$ on the time axis~$\cI$.
If the trellis is conventional, however, there is no need in doing so because in that case all paths are cycles
and the endpoints of edges in~$E_{n-1}$ do not carry any information.

Each tail-biting trellis gives rise to its edge-label code consisting of the edge-label sequences of
all cycles in~$T$.
We will need to make this and other related notions precise.

\begin{defi}\label{D-TBTbasics}
Let~$T=(V,E)$ be a tail-biting trellis over~$\F$ of depth~$n$ and put
$\cW:=V_0\times\F\times\ldots\times V_{n-1}\times\F$.
\begin{alphalist}
\item The {\sl label code\/} of~$T$ is defined as
      \begin{align*}
        &\cS(T):=\{(v_0,c_0,v_1,\ldots,v_{n-1},c_{n-1})\in\cW
        \mid\\[.5ex]
        &\hspace*{10em}\text{ there exist a cycle }v_0\edge{c_0}v_1\edge{c_1}\ldots v_{n-1}\edge{\!\!\!\!c_{n-1}}\!\!v_0\text{ in~$T$}\}.
      \end{align*}
      The trellis~$T$ is called {\sl linear\/} if~$T$ is reduced and the vertex sets $V_i,i\in\cI,$ are vector spaces over~$\F$ and
      $\cS(T)$ is a subspace of~$\cW$.
      In this case the edge spaces $E_i,i\in\cI,$ are subspaces of $V_i\times\F\times V_{i+1}$ because they are projections of
      $\cS(T)$.
\item The {\sl edge-label code\/} of~$T$ is defined as
      \[
         \cC(T):=\{(c_0,\ldots,c_{n-1})\in\F^n\mid
         \text{ there exist a cycle }v_0\edge{c_0}v_1\edge{c_1}\ldots v_{n-1}\edge{\!\!\!\!c_{n-1}}\!\!v_0\text{ in~$T$}\}.
      \]
      We say that~$T$ {\sl represents the code\/}~$\cC\subseteq\F^n$ if $\cC(T)=\cC$.
      Note that if~$T$ is linear then $\cC(T)$ is a linear block code.
      The trellis~$T$ is called {\sl one-to-one\/} if there exists a bijection between $\cC(T)$ and the cycles in~$T$.
      That is, every word in $\cC(T)$ appears as the edge label sequence of exactly one cycle.
\item If~$T$ is linear the {\sl state complexity profile\/} (SCP) is defined as $(s_0,\ldots,s_{n-1})$, where $s_i=\dim V_i$.
      The {\sl edge complexity profile\/} (ECP) is defined as $(e_0,\ldots,e_{n-1})$, where $e_i=\dim E_i$.
\item Two trellises $T=(V,E)$ and $T'=(V',E')$ are called {\sl isomorphic\/} if there exists a bijection $\phi:V\longrightarrow V'$ such that
      $\phi(V_i)=V'_i$ for all $i\in\cI$ and such that $(v,\,a,\,w)\in E_i$ if and only if $(\phi(v)\,\,a,\,\phi(w))\in E'_i$.
      If~$T$ and~$T'$ are linear, then we also require $\phi|_{\TS V_i}:V_i\longrightarrow V'_i$ be isomorphisms.
      Isomorphic trellises represent the same code.
\item Two trellises $T=(V,E)$ and $T'=(V',E')$ are called {\sl structurally isomorphic\/} if there exists a bijection
      $\phi:V\longrightarrow V'$ such that $\phi(V_i)=V'_i$ for all $i\in\cI$ and such that for all $(v,w)\in V_i\times V_{i+1}$
      the number of edges from~$v\in V_i$ to $w\in V_{i+1}$ equals the number of edges from $\phi(v)\in V'_i$ to $\phi(w)\in V'_{i+1}$
      (thus we disregard the edge labels).
      Again, if~$T$ and~$T'$ are linear we also require $\phi|_{\TS V_i}:V_i\longrightarrow V'_i$  be isomorphisms.
\end{alphalist}
\end{defi}

Throughout the paper we will only deal with linear block codes and linear (hence reduced) trellis representations.
Observe that, in general, structurally isomorphic trellises do not represent the same code.
However, we will only be interested in this notion for trellises of a fixed code.

We wish to briefly touch upon all these notions from the viewpoint of behavioral systems theory.
A trellis~$T$ can be regarded as a dynamical system with latent variables in the sense of \cite[Def.~1.3.4]{PoWi98},
where the latent variables represent the vertex sequence and satisfies the axiom of state at each time~$i$, see~\cite[Def.~1.3]{Wi89}.
In particular, the vertex space~$V_i$ is the state space at time~$i$.
In this context the edge spaces, consisting of all state-output-next state triples, are called the evolution law~\cite[Def.~1.4]{Wi89}.
Completeness as defined in~\cite[p.~184]{Wi89}, a crucial notion in behavioral systems theory relating
local and global behavior does not play a role in our setting since, due to the finite time axis, each system is complete.
Notice that the system is, in general, highly time-varying since even the state space dimensions depend on time.
The label code $\cS(T)$ is --- up to the ordering of labels and vertices --- the full behavior\footnote{Strictly speaking,
the full behavior is the space of all paths in~$T$ and not just the cycles.
}
while the edge-label code~$\cC(T)$ is the manifest behavior, see again \cite[Def.~1.3.4]{PoWi98}.
If we consider the code~$\cC$ as the given system, which, as usual in behavioral systems theory, is identified with its set of
trajectories (the codewords), then~$T$ is simply a state-space realization of~$\cC$ if $\cC(T)=\cC$.
Furthermore, isomorphic trellises are equivalent systems in the sense of \cite[p.~205]{Wi89}.
They merely differ by the labeling of the latent variable (the vertices).
The system theoretic notion of minimal realizations now becomes minimal trellises, that is, minimizing
the amount of data needed for a realization.
While for conventional trellises this is well-understood and amounts to the unique minimal trellis (up to isomorphism) with a
variety of universal properties, the situation is much more complicated for tail-biting trellises.
This leads to various notions of minimality for trellises of which only the following one will be relevant for us.
For other notions see for instance \cite{Mu88,McE96} or \cite[p.~2085]{KoVa03}.
In this context we will also introduce the notion of mergeability of trellises (In system theoretic language merging as defined below
is also referred to as lumping states, see \cite[p.~207]{Wi89}).

\begin{defi}\label{D-TBTmin}
Let $\cC\subseteq\F^n$ be a linear block code and~$T=(V,E)$ be a linear trellis of~$\cC$.
\begin{alphalist}
\item $T$ is called {\sl minimal\/} if there exists no linear trellis $T'=(V',E')$ of~$\cC$ such that
      $|V'_i|\leq |V_i|$ for all $i\in\cI$ and $|V'_j|<|V_j|$ for some $j\in\cI$.
\item $T$ is called {\sl non-mergeable\/} if for any $i\in\cI$ merging any two vertices $v_1,\,v_2\in V_i$
      results in a trellis $\hat{T}$ that does not represent $\cC$.
      Here merging
      means replacing the two vertices $v_1,\,v_2$ in the trellis by a single vertex $\hat{v}\in V_i$
      and replacing  each edge of the form $(v_j,\,a,\,w)\in E_i$, resp. $(w,\,a,\, v_j)\in E_{i-1}$, where $j=1,2$,
      by the edge $(\hat{v},\,a,\,w)\in E_i$, resp. $(w,\,a,\,\hat{v})\in E_{i-1}$.
\end{alphalist}
\end{defi}

Due to our restriction to linear trellises it seems necessary to also address the issue whether
merging leads to a potentially non-linear trellis.
It turns out however, that mergeability is compatible with the linear structure and hence
merging can always be accomplished such that the resulting trellis is linear again.

It is not a priori clear whether minimal (linear) trellises exist for a given linear code.
For conventional trellises this is indeed the case as is well-known since a long time and leads to minimizing many other
measures as well; see \cite[Thm.~4.26]{Va98}.
Moreover, all minimal conventional trellises are isomorphic and even minimal in the larger class of {\sl all\/}
trellises, not necessarily linear.
Furthermore, for conventional trellises minimality coincides with non-mergeability.
A nice way of constructing the unique minimal conventional trellis of a given code, using realization theory for
dynamical systems, has been given by Forney in~\cite{Fo88}.
For tail-biting trellises minimality has been studied in much detail by Calderbank/Forney/Vardy in~\cite{CFV99} and
Koetter/Vardy in~\cite{KoVa03}.
In particular, it has been shown that linear minimal tail-biting trellises exist (the minimal conventional trellis is one of them)
and there exist non-isomorphic ones.
This is actually easy to see by constructing the minimal conventional trellis for each cyclic shift of the given code and
then shifting back that trellis; see \cite[p.~2084]{KoVa03}.
This results in many (non-isomorphic) minimal trellises and in each one at least one vertex space~$V_i$ is trivial.
However, in general there exist other minimal trellises (that is, for which none of the vertex spaces is trivial);
see for instance \cite[Exa.~6.1]{KoVa03} for the $(8,4,4)$ Hamming code, which also appeared in \cite[p.~1449]{CFV99}.
Moreover, different from the situation for conventional trellises, minimality is stronger than non-mergeability for tail-biting trellises;
see for instance \cite[Exa.~3.3]{KoVa03}.
The main result of~\cite{KoVa03} is a construction from which all minimal tail-biting trellises of
a given code can be derived.
Our paper is mainly devoted to investigating this construction, which we will therefore introduce in the next section.

Finally, let us fix the following notation.
For a matrix $M\in\F^{m\times n}$ we denote
its row span as $\im M$, thus $\im M:=\{\alpha M\mid \alpha\in\F^m\}$.
If not otherwise stated,
$\cC\subseteq\F^n$ denotes a $k$-dimensional code of length~$n$ over the finite field~$\F$
with the following generator and parity check matrices:
\begin{equation}\label{e-Gdata}
  \cC=\im G,
    \text{ where }
     G=(g_{lj})=\begin{pmatrix}g_1\\\vdots\\g_k\end{pmatrix}=\begin{pmatrix}G_0&\ldots&G_{n-1}\end{pmatrix}\in\F^{k\times n},
\end{equation}
and
\begin{equation}\label{e-Hdata}
   \cC=\ker H\T:=\{c\in\F^n\mid cH\T=0\},
   \text{where }H\T=\begin{pmatrix}H_0\\\vdots\\ H_{n-1}\end{pmatrix}\in\F^{n\times(n-k)}.
\end{equation}
Hence $G_i\in\F^k$ and $H_i\in\F^{n-k}$ are the columns of~$G$ and rows of~$H\T$, respectively, while
$g_1,\ldots,g_k\in\F^n$ are the rows of~$G$.
For simplicity we will also assume that the support of~$\cC$ is~$\cI=\{0,\ldots,n-1\}$, that is, no column of~$G$ is zero.
Finally, it will be convenient to have a notion for the indicator function of a subset $\cA\subseteq\cI$.
Thus, we define $I^{\cA}\in\F^n$, where $I^{\cA}_j=1$ if $j\in\cA$ and $I^{\cA}_j=0$ else.

\Section{Product Trellises and KV-Trellises}\label{S-KV}
We begin with recalling the construction of tail-biting trellises based on products of elementary trellises.
The underlying notions and constructions are well-known, see \cite{KschSo95} for the conventional case and \cite{CFV99,KoVa03}
for the tail-biting case.
Thereafter, we will introduce the notion of a characteristic pair for a given code, from which in essence all minimal tail-biting
trellises can be derived.
This crucial and very fruitful concept goes back to Koetter/Vardy~\cite{KoVa03}, who in turn based it on previous work by
Kschischang/Sorokine~\cite{KschSo95}.
The relation between minimal trellises and characteristic matrices will be discussed in detail.

Due to the cyclic structure of the time axis~$\cI$ for tail-biting trellises, the following interval
notation has proven to be very convenient.
For $a,\,b\in\cI$ we define
\begin{equation}\label{e-interval}
  \left.\begin{array}{l}
   [a,\,b]=\left\{\begin{array}{ll}\{a, a+1,\ldots,b\}&  \text{if }a\leq b,\\[.3ex]
               \{a,a+1,\ldots,n-1,0,1,\ldots,b\}& \text{if }a > b,
           \end{array}\right. \\[2.4ex]
   (a,\,b]=[a,b]\backslash\{a\}\hspace*{11.3em} \text{for all }a,\,b.
  \end{array}\qquad\quad\right\}
\end{equation}
We call the intervals $(a,\,b]$ and $[a,\,b]$ {\sl linear\/} if $a\leq b$ and {\sl circular\/} else.
Notice that $(a,a]=\emptyset$.
It is easy to see that
\begin{equation}\label{e-compint}
   \cI\,\backslash\,(a,\,b]=(b,\,a] \text{ for all }a\not=b,
\end{equation}
hence the complement of a nonempty linear interval is circular and vice versa.
Observe also that $0\not\in(a,\,b]\Longleftrightarrow (a,\,b]$ is linear.

\begin{defi}\label{D-vectorspan}
For a nonzero vector $c=(c_0,\ldots,c_{n-1})\in\F^n$ we call the interval $(a,\,b]$ a {\sl span\/} of~$c$ if
$c_a\not=0\not=c_b$ and $c_j=0$ for $j\not\in[a,\,b]$.
In this case we will also call $[a,\,b]$ the {\sl closed span\/} of~$c$.
\end{defi}
For instance, the vector $v=(0,1,3,0,2,2)\in\F_5^6$ has the spans $(1,5],\,(2,1],\,(4,2]$, and $(5,4]$.
Notice that every nonzero vector has a unique linear span.
It turns out that not including the starting point~$a$ in the span of~$x$ is very convenient for later notation;
see, for instance, the next definition.

\begin{defi}\label{D-EltTrellis}
Let $c=(c_0,\ldots,c_{n-1})\in\F^n$ and $(a,b]$ be a span of~$c$.
The {\sl elementary trellis for the pair\/} $\big(c,(a,b]\big)$ is defined as $T_{c,(a,b]}:=(V,E)$, where
the vertex sets and edge sets $V=\cup_{i=0}^{n-1}V_i$ and $E=\cup_{i=0}^{n-1}E_i$
are as follows:
\\
if $a=b$ then,
\[
  V_i:=\{0\}\text{ for all }i\in \cI\text{ and }
  E_i:=\left\{\begin{array}{ll}\{(0,0,0)\},&\text{if }i\not=a\\[.3ex]
        \{(0,\alpha c_i,0)\mid \alpha\in\F\},&\text{if }i=a;\end{array}\right.
\]
if $a\not=b$, then
\[
    V_i:=\left\{\begin{array}{ll}0,&\text{if } i\not\in (a,\,b]\\[.3ex] \F,&\text{if } i\in (a,\,b]
         \end{array}\right.
    \text{ and }
    E_i:=\left\{\begin{array}{ll}\{(0,0,0)\},&\text{if }i\not\in[a,\,b]\\[.3ex]
                 \{(0,\alpha c_i,\alpha)\mid \alpha\in\F\},&\text{if }i=a\\[.3ex]
                 \{(\alpha,\alpha c_i,0)\mid \alpha\in\F\},&\text{if }i=b\\[.3ex]
                 \{(\alpha,\alpha c_i,\alpha)\mid \alpha\in\F\},&\text{if }i\in(a,b-1]
         \end{array}\right.
\]
Using the indicator function $I^{(a,b]}$, the above vertex and edge spaces can be written as $V_i=\im (I^{(a,b]}_i)\subseteq\F$ and
$E_i=\im(I^{(a,b]}_i,c_i,I^{(a,b]}_{i+1})\subseteq V_i\times\F\times V_{i+1}$.
\end{defi}
Note that $T_{c,(a,b]}$ is a one-to-one linear trellis and $\cC(T)=\im c=\{\alpha c\mid\alpha\in\F\}$, the one-dimensional code generated by~$c$.

With the aid of the product construction as introduced in \cite{KschSo95,KoVa03} we can derive trellises for higher-dimensional codes.
Recall that the product $\hat{T}:=T\times T'$ of two trellises $T=(V,E)$ and $T'=(V',E')$  of depth~$n$ over~$\F$ is defined as
the trellis $\hat{T}=(\hat{V},\,\hat{E})$, where
$\hat{V}_i:=V_i\times V'_i$ and
$\hat{E}_i=\big\{\big((v,v'),a+a',(w,w')\big)\,\big|\, (v,a,w)\in E_i,\,(v',a',w')\in E'_i\big\}$
for all $i\in\cI$.
The product trellis~$\hat{T}$ satisfies $\cC(\hat{T})=\cC(T)+\cC(T')$.

The most important application of this construction is the product of elementary trellises leading to
tail-biting trellis representations of linear block codes.
Recall the data for the given code~$\cC\subseteq\F^n$ as given in\eqnref{e-Gdata}.

\begin{theo}[\cite{KschSo95,KoVa02,KoVa03}]\label{T-KVtrellis}
Let
\begin{equation}\label{e-S}
   \cS:=\big[(a_l,b_l],\,l=1,\ldots,k\big]
\end{equation}
be a {\sl span list for~$G$}, that is, $(a_l,b_l]$ is a span (linear or circular) for the
row~$g_l,\,l=1,\ldots,k$.
Then product trellis $T_{G,\cS}:=T_{g_1,(a_1,b_1]}\times\ldots\times T_{g_k,(a_k,b_k]}$
represents the code~$\cC$, that is, $\cC(T_{G,\cS})=\cC$.
Moreover,~$T_{G,\cS}$ is linear (in particular reduced) and one-to-one.
The vertex spaces of~$T_{G,\cS}$ are given by
\begin{equation}\label{e-MMat}
  V_i=\im M_i,\text{ where }M_i=\begin{pmatrix}\!\!\mu^1_{i}& & \\ &\ddots& \\& &\mu^k_{i}\!\!\end{pmatrix}
  \in\F^{k\times k} \text{ and }
  \mu^l_{i}=\left\{\begin{array}{ll}\!\!1,&\text{if }i\in(a_l,b_l]\\ \!\!0,&\text{if }i\not\in(a_l,b_l]\end{array},\right.
\end{equation}
while the edge spaces are given by $E_i=\im(M_i, G_i,M_{i+1})$ for $i\in\cI$.
\end{theo}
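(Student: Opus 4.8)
The plan is to reduce all five assertions to a single bijective parametrization of the cycles of $T_{G,\cS}$ by the vectors $\alpha\in\F^k$, and then to read each claim off that parametrization; the one genuinely non-formal point, the one-to-one property, is isolated at the end.

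First I would establish the parametrization by induction on~$k$ using the product construction. By Definition~\ref{D-EltTrellis} the cycles of a single elementary trellis $T_{g_l,(a_l,b_l]}$ are parametrized by a scalar $\alpha_l\in\F$, the $\alpha_l$-cycle passing at time~$i$ through the vertex $\alpha_l I^{(a_l,b_l]}_i$ and carrying the label $\alpha_l(g_l)_i$ on its $i$-th edge. Since an edge of a product $T\times T'$ at time~$i$ is exactly a pair of factor edges with their labels added, a cycle of $T\times T'$ is exactly a pair of factor cycles. Iterating this identification $k$ times shows that the cycles of $T_{G,\cS}$ are in bijection with tuples $\alpha=(\alpha_1,\dots,\alpha_k)\in\F^k$, where the $\alpha$-cycle has vertex $(\alpha_1 I^{(a_1,b_1]}_i,\dots,\alpha_k I^{(a_k,b_k]}_i)=\alpha M_i$ at time~$i$ and edge label $\sum_{l=1}^k\alpha_l(g_l)_i=\alpha G_i$.

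With this in hand the structural claims are immediate. The vertex set at time~$i$ is the product $\prod_l V_i^{(l)}$ of the factor vertex sets, which equals $\{\alpha M_i\mid\alpha\in\F^k\}=\im M_i$, and the edge set is $\{(\alpha M_i,\alpha G_i,\alpha M_{i+1})\mid\alpha\in\F^k\}=\im(M_i,G_i,M_{i+1})$, giving the two displayed formulas. The label code $\cS(T_{G,\cS})$ is the image of the $\F$-linear map $\alpha\mapsto(\alpha M_0,\alpha G_0,\dots,\alpha M_{n-1},\alpha G_{n-1})$, hence an $\F$-subspace of~$\cW$; since the $\alpha$-cycle meets the vertex $\alpha M_i$ and the edge $(\alpha M_i,\alpha G_i,\alpha M_{i+1})$ for each~$i$, every vertex and every edge lies on some cycle, so $T_{G,\cS}$ is reduced, and the three conditions of Definition~\ref{D-TBTbasics} for linearity are met. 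For the edge-label code I would simply iterate the identity $\cC(T\times T')=\cC(T)+\cC(T')$ recorded before the theorem together with $\cC(T_{g_l,(a_l,b_l]})=\im g_l$, obtaining $\cC(T_{G,\cS})=\sum_l\im g_l=\im G=\cC$; equivalently it is $\{\alpha G\mid\alpha\in\F^k\}$.

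The hard part, and the only place where a hypothesis beyond the bookkeeping is needed, is the one-to-one property, because a product of one-to-one trellises need not be one-to-one: if two rows of~$G$ coincided, many distinct~$\alpha$ would yield the same edge-label sequence $\alpha G$. Here I would invoke that $G$ is a generator matrix of the $k$-dimensional code~$\cC$ and therefore has full row rank~$k$. Consequently the map $\alpha\mapsto\alpha G$ from the cycle set $\F^k$ to $\cC(T_{G,\cS})$ is injective, and since its image is all of $\im G=\cC$ it is a bijection; thus every codeword is the label sequence of exactly one cycle, which is precisely one-to-oneness. I expect the only care required elsewhere is checking that the inductive product identification of cycles is genuinely a bijection (including the degenerate spans with $a_l=b_l$, where the vertex sequence is trivial but the scalar $\alpha_l$ is still recorded by the label at position $a_l$), which is routine.
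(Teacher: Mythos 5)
The paper does not prove this theorem at all: it is imported verbatim from \cite{KschSo95,KoVa02,KoVa03}, so there is no internal proof to compare against. Your argument is correct and self-contained, and it reconstructs exactly the description the paper relies on implicitly everywhere afterwards, namely the parametrization of cycles by $\alpha\in\F^k$ with vertex $\alpha M_i$ at time~$i$ and edge label $\alpha G_i$ (this is precisely the form in which paths and cycles are handled in Proposition~\ref{P-edgepath} and Lemma~\ref{L-pathv0}). You also correctly isolate the two points that need the full row rank of~$G$: injectivity of $\alpha\mapsto\alpha G$ for one-to-oneness, and the potential ambiguity in decomposing a product edge into factor edges when two degenerate spans $(a_l,a_l]$ coincide --- the latter is exactly the caveat the paper records in the footnote following the theorem, and it cannot occur here since two rows supported on a single common coordinate would be linearly dependent. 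Nothing is missing.
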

\label{PageFootnote}
Observe that $\mu^l=I^{(a_l,b_l]}$ is the indicator function of $(a_l,b_l]$.
It is straightforward to see that, more generally, a product trellis
$T:=T_{g_1,(a_1,b_1]}\times\ldots\times T_{g_k,(a_k,b_k]}$ is one-to-one if and only if $g_1,\ldots,g_k$ are linearly
independent.\footnote{There is one extreme case to exclude from this statement: if some spans are of the form
$(a_l,a_l]$ then~$T$ is one-to-one if and only if the set~${\cal G}$ is linearly independent, where ${\cal G}$ is
obtained by taking for each span of the form $(a_l,a_l]$ only one generator
and taking all generators with non-empty span.
Since in our considerations $g_1,\ldots,g_k$ are always linearly independent this case will not be relevant for us.}

In the sequel we will only consider one-to-one product trellises (with the exception of Remark~\ref{R-nonmerg}).
In addition to this restriction our product trellises are confined even further when compared to the class considered by
Koetter/Vardy~\cite{KoVa03}.
Due to Definition~\ref{D-vectorspan} we only consider ``shortest spans'' for the generators, that is, the vector
is nonzero at the endpoints of the span (in other words, the closed span is a smallest linear or circular interval containing the support).
This restriction is justified by the fact that an elementary trellis based on a non-shortest span is mergeable and thus
so is the product trellis, see~\cite[Lemma~4.3]{KoVa03}.
Of course, even with shortest spans a product trellis may be mergeable; we will see examples later on.

The following shift property of product trellises is a direct consequence.
Recall that we compute with indices modulo~$n$. This, of course, also applies to the span lists.

\begin{rem}\label{R-shiftKV}
Let $\sigma:\F^n\longrightarrow\F^n,\ (c_0,\ldots,c_{n-1})\longmapsto(c_1,\ldots,c_{n-1},c_0)$ be
the left shift on~$\F^n$ and let $G^*\in\F^{k\times n}$ be the matrix consisting of the shifted
rows $\sigma(g_l),\,l=1,\ldots,k$.
If~$\cS$ as in\eqnref{e-S} is a span list for~$G$ then
$\cS^*=\big[(a_l-1,\, b_l-1],l=1,\ldots,k\big]$ forms a span list for~$G^*$.
Furthermore, with the notation as in\eqnref{e-MMat} the vertex spaces of the product trellis $T_{G^*,\cS^*}$ are given by
$V_i^*=\im M_{i+1}$ and the edge spaces are given by $E_i^*=\im(M_{i+1},G_{i+1},M_{i+2})$
for $i\in\cI$.
\end{rem}

For the product trellises $T_{G,\cS}$ it is not hard to give formulas for the SCP and the ECP in terms of the span list~$\cC$.
Let $\underline{k}:=\{1,\ldots,k\}$.

\begin{prop}\label{P-SCPECP}
Let~$\cS$ and $T_{G,\cS}$ be as in Theorem~\ref{T-KVtrellis}.
Put $\cA:=\{a_1,\ldots,a_k\}$ and $\cB=\{b_1,\ldots,b_k\}$. Moreover, for each $i\in\cI$ define
\begin{equation}\label{e-Li}
   \cL_i:=\{l\in\underline{k}\mid i\in(a_l,b_l]\}.
\end{equation}
Then the SCP of~$T_{G,\cS}$  is given by $(s_0,\ldots,s_{n-1})$, where $s_i=|\cL_i|$.
The ECP $(e_0,\ldots,e_{n-1})$ satisfies the following.
\begin{alphalist}
\item If $a_1,\ldots,a_k$ are distinct, then $e_i=s_i+I^{\cA}_i$ for $i\in\cI$.
\item If $b_1,\ldots,b_k$ are distinct, then $e_i=s_{i+1}+I^{\cB}_i$ for $i\in\cI$.
\item Let $a_1,\ldots,a_k$ be distinct and $b_1,\ldots,b_k$ be distinct. Then the SCP satisfies
      \[
          s_{i+1}=\left\{\begin{array}{ll}s_i,&\text{ if }i\in\cA\cap\cB\text{ or }i\not\in\cA\cup\cB,\\
                                                s_i+1,&\text{ if }i\in\cA\,\backslash\,\cB,\\
                                                s_i-1,&\text{ if }i\in\cB\,\backslash\,\cA,\end{array}\right.
      \]
      and the trellis $T_{G,\cS}$ is biproper.
\end{alphalist}
\end{prop}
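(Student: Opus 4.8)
The plan is to reduce every assertion to a rank computation for the structured matrices appearing in Theorem~\ref{T-KVtrellis}. Since $M_i=\diag(\mu^1_i,\ldots,\mu^k_i)$ is a $0/1$-diagonal matrix, its rank equals the number of nonzero diagonal entries, which is exactly $|\cL_i|$; this immediately gives $s_i=\dim\im M_i=|\cL_i|$. For the edge dimensions I would work with the surjection $\F^k\longrightarrow E_i,\ \alpha\mapsto(\alpha M_i,\alpha G_i,\alpha M_{i+1})$ and compute $e_i=k-\dim N_i$, where $N_i=\{\alpha\mid \alpha M_i=0,\ \alpha G_i=0,\ \alpha M_{i+1}=0\}$. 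The first and third conditions are transparent: $\alpha M_i=0$ forces $\alpha_l=0$ for all $l\in\cL_i$, and $\alpha M_{i+1}=0$ forces $\alpha_l=0$ for all $l\in\cL_{i+1}$. The only arithmetic input is the condition $\alpha G_i=0$, i.e. $\sum_l\alpha_l g_{l,i}=0$, and here the span definition does the work: since $(a_l,b_l]$ is a span of $g_l$, one has $g_{l,i}=0$ whenever $i\notin[a_l,b_l]$, while $g_{l,a_l}\neq0\neq g_{l,b_l}$ at the endpoints.

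\textbf{Parts (a) and (b).} For~(a) I would argue in two steps. First, dropping the last block, $\dim\im(M_i,G_i)=s_i+I^{\cA}_i$: the kernel of $\alpha\mapsto(\alpha M_i,\alpha G_i)$ consists of the $\alpha$ supported off $\cL_i$ satisfying $\sum_l\alpha_l g_{l,i}=0$, and for $l\notin\cL_i$ the entry $g_{l,i}$ is nonzero precisely when $a_l=i$; thus the extra scalar equation is nontrivial exactly when $i\in\cA$, cutting the dimension by $I^{\cA}_i$. Second, I would show that appending the columns $M_{i+1}$ does not raise the rank, i.e. $\alpha M_i=0$ and $\alpha G_i=0$ already force $\alpha M_{i+1}=0$ (left-properness of $E_i$): an index in $\cL_{i+1}\setminus\cL_i$ must satisfy $a_l=i$ with nonempty span, and by distinctness of the $a_l$ it is the unique such index, whose coefficient is killed by $\alpha G_i=0$ because $g_{l,i}=g_{l,a_l}\neq0$. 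Combining the two steps yields $e_i=s_i+I^{\cA}_i$. Part~(b) is entirely symmetric: one trims the first block instead, computes $\dim\im(G_i,M_{i+1})=s_{i+1}+I^{\cB}_i$ using that $g_{l,i}\neq0$ for $l\notin\cL_{i+1}$ forces $b_l=i$, and establishes right-properness from distinctness of the $b_l$.

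\textbf{Part (c) and biproperness.} Subtracting the identities from~(a) and~(b) gives $s_{i+1}=s_i+I^{\cA}_i-I^{\cB}_i$, and splitting according to whether $i$ lies in $\cA\cap\cB$, in $\cA\setminus\cB$, in $\cB\setminus\cA$, or in neither reproduces the stated recursion. Biproperness is then immediate from the earlier work: the left-properness from~(a) is the injectivity of the projection $E_i\to V_i\times\F$ (no two edges out of a vertex share a label), and the right-properness from~(b) is the injectivity of $E_i\to\F\times V_{i+1}$ (no two edges into a vertex share a label); together these are precisely biproperness. The step I expect to require the most care is the properness argument, which is exactly where distinctness of the endpoints is used: without it $\cL_{i+1}\setminus\cL_i$ (resp. $\cL_i\setminus\cL_{i+1}$) may contain several indices, and the single scalar equation $\alpha G_i=0$ no longer forces all their coefficients to vanish. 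A secondary subtlety is the degenerate spans $(a_l,a_l]$: such a weight-one generator contributes to neither $\cL_i$ nor $\cL_{i+1}$, so I must verify that it nonetheless accounts for the correct term through the condition $\alpha G_i=0$, ensuring the formulas $e_i=s_i+I^{\cA}_i$ and $e_i=s_{i+1}+I^{\cB}_i$ hold uniformly whether or not the span starting (resp. ending) at $i$ is empty.
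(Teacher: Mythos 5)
Your proposal is correct and follows essentially the same route as the paper's proof: both compute $\rk(M_i,G_i)$ (resp.\ $\rk(G_i,M_{i+1})$) first and then show the remaining block does not raise the rank, using distinctness of the $a_l$ (resp.\ $b_l$) at exactly the same point, with (c) and biproperness obtained by combining (a) and (b). The only difference is bookkeeping --- you work with left kernels of the row map $\alpha\mapsto(\alpha M_i,\alpha G_i,\alpha M_{i+1})$ where the paper works with column spaces --- which is a dual formulation of the identical argument.
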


Observe that if all spans $(a_l,b_l]$ are linear, then the condition $a_1,\ldots,a_k$ and $b_1,\ldots,b_k$ both
being distinct is equivalent to~$G$ being an MSGM in the sense of \cite[Def.~6.2, Thm.~6.11]{McE96}
(also called ``sets of atomic generators'' in \cite{KschSo95} or ``shortest basis'' in~\cite{Fo09}).
Moreover, in this case $T_{G,\cS}$ is the unique minimal conventional trellis of~$\cC$ and
the formulas given in~(a) and~(b) coincide with those known for conventional trellises; see, for instance,
\cite[p.~1080]{McE96}.
However, it is well-known that if not all spans are linear the (tail-biting) product trellis is in general not minimal; see,
for instance, Example~\ref{E-BCJRnotone} later on.

\begin{proof}
The first statement about the SCP is immediate from the definition of the vertex spaces $V_i=\im M_i$ in\eqnref{e-MMat}.
For the ECP, recall that by definition of the edge spaces we have $e_i=\rk(M_i,G_i,M_{i+1})$.
\\
(a) We proceed as follows.
For a matrix~$M\in\F^{k\times r}$ denote by $\col(M)\subseteq\F^k$ its column space and by $\col(M,t)$ its $t$-th column.
Fix~$i\in\cI$.
Let $\hat{e}_i:=\rk(M_i,G_i)$.
In a first step we will show that $\hat{e}_i=s_i+I^{\cA}_i$ and in a second step we will prove that $e_i=\hat{e}_i$.
For the first step, notice that $\hat{e}_i=s_i$ if $G_i\in\col(M_i)$ and $s_i+1$ else.
Moreover, the definition of the matrices~$M_i$ implies
\[
  G_i\in\col(M_i)\Longleftrightarrow g_{li}=0\text{ for all }l\not\in\cL_i.
\]
Since $(a_l,b_l]$ is a span for the row~$g_l$ one has $g_{li}=0$ for all $i\not\in[a_l,\,b_l]$ and $g_{l,a_l}\not=0$.
But then the above along with $(a_l,\,b_l]\cup\{a_l\}=[a_l,b_l]$ shows that $G_i\in\col(M_i)\Longleftrightarrow i\not\in\cA$.
All this proves $\hat{e}_i=s_i+I^{\cA}_i$.
In order to establish $e_i=\hat{e}_i$ it remains to show that $\col(M_{i+1})\subseteq\col(M_i,G_i)$.
From\eqnref{e-MMat} we know that $\col(M_{i+1},l)=0$ if $l\not\in\cL_{i+1}$ and $\col(M_{i+1},l)=\varepsilon_l$ if $l\in\cL_{i+1}$,
where~$\varepsilon_l\in\F^k$ denotes the $k$-th standard basis vector.
Hence fix~$l\in\cL_{i+1}$. If $l\in\cL_i$, then $\col(M_{i+1},l)=\varepsilon_l=\col(M_i,l)\in\col(M_i,G_i)$ and we are done.
If $l\not\in\cL_i$, then $i\not\in(a_l,b_l]$ and thus $\col(M_i,l)=0$.
But then the assumption $i+1\in(a_l,b_l]$ implies that $i=a_l$ and thus $g_{li}\not=0$.
In order to examine the other entries of the column vector~$G_i$ let $r\in\underline{k}\,\backslash\{l\}$.
If $r\in\cL_i$ then $\col(M_i,r)=\varepsilon_r$.
If $r\not\in\cL_i$ then $i\not\in(a_r,b_r]$ and since $i=a_l$, which is distinct from~$a_r$, this implies
$i\not\in[a_r,b_r]$, hence $g_{ri}=0$.
All this shows that $G_i-\sum_{r=1}^k g_{ri}\col(M_i,r)=g_{li}\varepsilon_l=g_{li}\col(M_{i+1},l)$.
As a consequence, $\col(M_{i+1},l)\in\col(M_i,G_i)$ and therefore $e_i=\rk(M_i,G_i,M_{i+1})=\rk(M_i,G_i)=s_i+I^{\cA}_i$.
\\
(b) We proceed similarly using the rank $\tilde{e}_i:=\rk(G_i,M_{i+1})$. This time we have
\begin{align}
  G_i\in\col(M_{i+1})&\Longleftrightarrow g_{li}=0\text{ for all $l$ such that }i+1\not\in (a_l, b_l] \nonumber\\
                     &\Longleftrightarrow g_{li}=0\text{ for all $l$ such that }i\not\in (a_l-1, b_l-1]. \label{e-GiMi+1}
\end{align}
The latter semi-open interval is $[a_l,b_l-1]$ if $a_l\not=b_l$ and empty otherwise.
Since $g_{l,b_l}\not=0$\eqnref{e-GiMi+1} implies the equivalence $G_i\in\col(M_{i+1})\Longleftrightarrow i\not\in\cB$.
Hence $\tilde{e}_i=s_{i+1}+I^{\cB}_i$ and it remains to show that $\col(M_i)\subseteq\col(G_i,M_{i+1})$.
Fix $l\in\cL_i$. Then $\col(M_i,l)=\varepsilon_l$.
If $l\in\cL_{i+1}$ then $\col(M_{i+1},l)=\varepsilon_l$ and we are done.
If $l\not\in\cL_{i+1}$, then $i+1\not\in(a_l,b_l]$, but $i\in(a_l,b_l]$, thus $i=b_l$.
In this case $\col(M_{i+1},l)=0$ and $g_{li}\not=0$.
For the other entries of~$G_i$ pick $r\in\underline{k}\,\backslash\{l\}$.
If $r\in\cL_{i+1}$ then $\col(M_{i+1},r)=\varepsilon_r$.
If $r\not\in\cL_{i+1}$, then $i+1\not\in(a_r,b_r]$, hence $i\not\in(a_r-1,b_r-1]$.
But since $i=b_l$ is different from~$b_r$ this implies $i\not\in[a_r,b_r]$.
Hence $g_{ri}=0$.
All this shows that $G_i-\sum_{r=1}^k g_{ri}\col(M_{i+1},r)=g_{li}\varepsilon_l=g_{li}\col(M_i,l)$
and therefore $\col(M_i,l)\in\col(G_i,M_{i+1})$.
In particular, $\rk(M_i,G_i,M_{i+1})=\rk(G_i,M_{i+1})$.
\\
(c) The identities for~$s_i$ follow from equating the expressions for~$e_i$ in~(a) and~(b).
Furthermore, in~(a) and~(b) we saw that $\rk(M_i,G_i)=\rk(G_i,M_{i+1})=\rk(M_i,G_i,M_{i+1})$.
This in turn implies $\ker(M_i,G_i)=\ker(G_i,M_{i+1})=\ker(M_i,G_i,M_{i+1})$, from
which the biproperness follows.
\end{proof}

At this point we wish to show a path property of product trellises that will prove useful later on when discussing mergeability.
The property holds true under a certain condition on the vertex spaces, satisfied by product trellises $T_{G,\cS}$, but also by
the BCJR-trellises studied in the next section.
In system theoretic language this path property means that the trellis is point controllable (in~$n$ steps)
in the sense of~\cite[p.~188]{Wi89}.

\begin{lemma}\label{L-pathv0}
Let~$\cC=\im G$ and let the span list~$\cS$ be as in\eqnref{e-S}.
Let $T=(V,E)$ be a trellis for~$\cC$ such that the vertex and edge spaces are given by
$V_i=\im M_i$ and $E_i=\im(M_i,G_i,M_{i+1})$ for certain matrices $M_i\in\F^{k\times r}$.
Let the sets~$\cL_i$ be defined as in\eqnref{e-Li} and suppose that for all $i\in\cI$ the $l$-th row
of~$M_i$ is zero whenever $l\not\in\cL_i$.
\\
Then for every $v\in V_0=\im M_0$ there exists a path through the trellis~$T$ that starts at~$v\in V_0$ and ends at $0\in V_0$.
Precisely, there exist vectors $\alpha^{(0)},\ldots,\alpha^{(n-1)}\in\F^k$ such that
\[
  \alpha^{(0)} M_0=v,\quad \alpha^{(i)}-\alpha^{(i+1)}\in\ker M_{i+1}\text{ for }i=0,\ldots,n-2,\quad \alpha^{(n-1)}M_0=0.
\]
\end{lemma}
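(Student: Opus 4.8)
The plan is to read the three displayed conditions as the statement that the vectors $\alpha^{(i)}$ describe an actual path of edges through $T$. Indeed, by hypothesis $E_i=\im(M_i,G_i,M_{i+1})$, so a choice of $\alpha^{(i)}\in\F^k$ produces the edge $\big(\alpha^{(i)}M_i,\,\alpha^{(i)}G_i,\,\alpha^{(i)}M_{i+1}\big)\in E_i$; consecutive edges join precisely when the endpoint $\alpha^{(i)}M_{i+1}$ of the $i$-th edge equals the starting point $\alpha^{(i+1)}M_{i+1}$ of the next, i.e. when $\alpha^{(i)}-\alpha^{(i+1)}\in\ker M_{i+1}$, and the path runs from $\alpha^{(0)}M_0=v$ to $\alpha^{(n-1)}M_n=\alpha^{(n-1)}M_0=0$. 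So it suffices to exhibit $\alpha^{(0)},\ldots,\alpha^{(n-1)}$ meeting the three conditions. The single device I would use throughout is the row-vanishing hypothesis: whenever $i\not\in(a_l,b_l]$ (equivalently $l\not\in\cL_i$) the $l$-th row of $M_i$ is zero, which says exactly $\varepsilon_l M_i=0$, i.e. $\varepsilon_l\in\ker M_i$. Thus the $l$-th coordinate of the $\alpha$-vector may be altered freely across any transition $i\to i+1$ with $i+1\not\in(a_l,b_l]$ without breaking connectivity.

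\textbf{The construction.} First I would fix any $\alpha\in\F^k$ with $\alpha M_0=v$ and then switch off each coordinate as soon as the path leaves the corresponding span. Concretely, for $i\in\cI$ and $l\in\underline{k}$ set $\alpha^{(i)}_l=\alpha_l$ if $0\le i\le b_l$ and $\alpha^{(i)}_l=0$ if $b_l<i\le n-1$. Then $\alpha^{(0)}=\alpha$, giving $\alpha^{(0)}M_0=v$. For the connectivity conditions, the coordinate $l$ changes value at exactly one transition, namely $i=b_l$ (and only when $b_l\le n-2$, so that this is a genuine transition index), whence $\alpha^{(i)}-\alpha^{(i+1)}=\sum_{l:\,b_l=i}\alpha_l\,\varepsilon_l$. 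Since the right endpoint is closed, $b_l+1\not\in(a_l,b_l]$ in every case, so each such $\varepsilon_l$ lies in $\ker M_{i+1}$ by the hypothesis; as $\ker M_{i+1}$ is a subspace, the whole difference lies in $\ker M_{i+1}$, as required.

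\textbf{The endpoint condition, which is the crux.} The one remaining identity $\alpha^{(n-1)}M_0=0$ is where the linear/circular dichotomy enters and is the step I expect to need the most care. The coordinates still active at time $n-1$ are precisely those with $b_l=n-1$; for these, excluding the empty span $(a_l,a_l]$ forces $a_l<n-1$, so $(a_l,b_l]$ is linear and $0\not\in(a_l,b_l]$, whence the $l$-th row of $M_0$ vanishes and these coordinates contribute nothing to $\alpha^{(n-1)}M_0$. Every coordinate with $b_l\le n-2$ has already been switched off by time $n-1$; in particular every circular span, for which $0\in(a_l,b_l]$ and the $l$-th row of $M_0$ may be nonzero, satisfies $b_l\le n-2$ and so contributes $0$ as well. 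Hence $\alpha^{(n-1)}M_0=0$. The main obstacle is thus not any nontrivial calculation but the index bookkeeping modulo $n$: one must confirm that circular coordinates are genuinely turned off before the path returns to time $0$ (they are, at $i=b_l\le n-2$), and that the harmless boundary case $b_l=n-1$, where no turn-off transition occurs, is automatically absorbed because the relevant row of $M_0$ is zero.
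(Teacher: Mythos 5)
Your proof is correct and follows essentially the same strategy as the paper's: both build the path by explicitly switching off each coordinate of $\alpha$ at a transition into a time index outside the corresponding span, where the row-vanishing hypothesis guarantees $\varepsilon_l\in\ker M_{i+1}$. The only difference is the switch-off schedule --- the paper kills coordinate $l$ at the first $j$ with $l\notin\cL_j$, so that $\alpha^{(n-1)}=0$ outright because $\bigcap_{i}\cL_i=\emptyset$ (which even yields a path to $0\in V_{n-1}$), whereas you kill it at time $b_l+1$ and instead observe that the surviving coordinates have linear (or empty) spans and hence zero rows in $M_0$ --- a cosmetic variation that proves the stated lemma equally well.
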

\begin{proof}
In the sequel, for any vector $\alpha\in\F^k$ let~$\alpha_l$ denote the $l$-th coordinate of~$\alpha$.
By assumption
\begin{equation}\label{e-kerMj}
  \{\alpha\in\F^k\mid \alpha_l=0\text{ for }l\in\cL_i\} \subseteq\ker M_i\text{ for all }i\in\cI.
\end{equation}
Let $\alpha^{(0)}\in\F^k$ such that $\alpha^{(0)} M_0=v$.
By assumption on the rows of~$M_0$ we may assume without loss of generality
\begin{equation}\label{e-alpha0}
    \alpha^{(0)}_l=0\text{ for }l\not\in\cL_0.
\end{equation}
Define $\alpha^{(1)}\in\F^k$ via
\begin{equation}\label{e-alpha1}
   \alpha^{(1)}_l=\left\{\begin{array}{ll} \alpha^{(0)}_l,&\text{if }l\in\cL_1\cap\cL_0\\ 0,&\text{if }l\not\in\cL_1\cap\cL_0.\end{array}\right.
\end{equation}
Then\eqnref{e-alpha0} and\eqnref{e-alpha1} show that $\alpha^{(1)}_l-\alpha^{(0)}_l=0$ for $l\in\cL_1$ and thus
$\alpha^{(1)}-\alpha^{(0)}\in \ker M_1$.
Assume now we proceed in this way and have constructed $\alpha^{(j)}\in\F^k$ such that
\begin{equation}\label{e-alphaj}
   \alpha^{(j)}_l=\left\{\begin{array}{ll} \alpha^{(j-1)}_l,&\text{if }l\in\bigcap_{i=0}^{j} \cL_i\\ 0,&\text{if }l\not\in\bigcap_{i=0}^{j} \cL_i
               \end{array}\right.
\end{equation}
and such that $\alpha^{(j)}-\alpha^{(j-1)}\in\ker M_j$.
Then put
\begin{equation}\label{e-alphaj1}
   \alpha^{(j+1)}_l=\left\{\begin{array}{ll} \alpha^{(j)}_l,&\text{if }l\in\bigcap_{i=0}^{j+1} \cL_i\\ 0,&\text{if }l\not\in\bigcap_{i=0}^{j+1} \cL_i.
               \end{array}\right.
\end{equation}
Then $(\alpha^{(j+1)}-\alpha^{(j)})_l=0$ for all $l\in\bigcap_{i=0}^{j+1} \cL_i$ by\eqnref{e-alphaj1} and
$\alpha^{(j+1)}_l=\alpha^{(j)}_l=0$ for all $l\in \cL_{j+1}\backslash \bigcap_{i=0}^{j} \cL_i$ by\eqnref{e-alphaj} and\eqnref{e-alphaj1}.
Hence $(\alpha^{(j+1)}-\alpha^{(j)})_l=0$ for all $l\in\cL_{j+1}$ and thus $\alpha^{(j+1)}-\alpha^{(j)}\in\ker M_{j+1}$.
Thus, proceeding with this construction we find $\alpha^{(n-1)}\in\F^k$ such that $\alpha^{(n-1)}-\alpha^{(n-2)}\in\ker M_{n-1}$ and
\[
  \alpha^{(n-1)}_l=0\text{ for } l\not\in\bigcap_{i=0}^{n-1} \cL_i.
\]
But then $\alpha^{(n-1)}=0$ because $\bigcap_{i=0}^{n-1} \cL_i=\{l\in\kk\mid i\in(a_l,\,b_l]\text{ for all }i\in\cI\}=\emptyset$,
where the latter is due to the fact that $(a,\,b]\not=\cI$ by definition of half-open intervals in\eqnref{e-interval}.
Hence $\alpha^{(n-1)} M_n=0$ as desired.
\\
Observe that the proof actually shows that there is even a path from $v\in V_0$ to $0\in V_{n-1}$ because $\alpha^{(n-1)}=0$.
\end{proof}

A main result of~\cite{KoVa03} is the construction of a certain characteristic matrix for a given code from
which all minimal trellises can be derived.
We will define the characteristic matrix slightly differently,
namely based on certain properties rather than on the outcome of a particular procedure.
This will facilitate later considerations.
The existence of this object will then follow from the algorithm derived in~\cite{KoVa03}.
However, some care has to be exercised due to the fact that our characteristic matrix will not
be uniquely determined by the code (see Example~\ref{E-chitrellisnonunique} below).

\begin{defi}\label{D-CharMat}
Let $\cC\subseteq\F^n$ be a $k$-dimensional code with support~$\cI$.
A {\sl characteristic pair\/} of~$\cC$ is defined to be a pair $(X,\cT)$, where
\begin{equation}\label{e-XY}
    X=\begin{pmatrix}x_1\\ \vdots\\ x_n\end{pmatrix}\in\F^{n\times n}\text{ and }
    \cT=\big[(a_l,b_l],\,l=1,\ldots,n\big]
\end{equation}
have the following properties
\begin{romanlist}
\item $\im X=\cC$, that is, $\{x_1,\ldots,x_n\}$ forms a generating set of~$\cC$.
\item $(a_l,\,b_l]$ is a span of~$x_l$ for $l=1,\ldots,n$.
\item $a_1,\ldots,a_n$ are distinct and $b_1,\ldots,b_n$ are distinct.
\item For all $j\in\cI$ there exist exactly $n-k$ row indices $l_1,\ldots,l_{n-k}$ such that
      $j\in(a_{l_i},\,b_{l_i}]$ for $i=1,\ldots,n-k$.
\end{romanlist}
We call~$X$ a {\sl characteristic matrix\/} of~$\cC$ and~$\cT$ the {\sl characteristic span list}.
The {\sl span matrix\/} of the characteristic pair $(X,\cT)$ is defined to be the matrix
\[
    S=(s_{lj})_{l=1,\ldots,n\hfill\atop j=0,\ldots,n-1}\in\N_0^{n\times n}, \text{ where }
    s_{lj}=\left\{\begin{array}{ll}1,&\text{if }j\in(a_l,\,b_l],\\[.3ex]0,&\text{if }j\not\in(a_l,\,b_l].
           \end{array}\right.
\]
\end{defi}

As we will see below, the characteristic span list is, up to ordering, uniquely determined by the code~$\cC$.
This is one of the results proven in~\cite{KoVa03}.
The only difference between our definition and the one of Koetter/Vardy is that in~\cite{KoVa03} for each span $(a_l,b_l]$
the corresponding row in~$X$ is the unique lexicographically first codeword having this span (with a slight modification
in the non-binary case).
As Koetter/Vardy point out in \cite[Remark, p.~2095]{KoVa03} there is no need for this particular choice.

Here are some first simple properties of characteristic pairs.

\begin{rem}\label{R-CharMat}
\begin{alphalist}
\item Every column of the span matrix~$S$ has exactly $n-k$ entries equal to~$1$ and~$k$ entries equal to~$0$.
      This is simply a reformulation of Definition~\ref{D-CharMat}(iv).
\item Let~$\sigma$ be the left shift on~$\F^n$ as in Remark~\ref{R-shiftKV}.
      If $(X,\cT)$ as in\eqnref{e-XY} is a characteristic pair for~$\cC$, then
      $\big(X^*,\cT^*\big)$ is a characteristic pair of~$\sigma(\cC)$, where
      \[
         X^*:=\begin{pmatrix}\sigma(x_1)\\ \vdots\\ \sigma(x_n)\end{pmatrix}
         \text{ and }
         \cT^*:=\big[(a_l-1,\, b_l-1],\,l=1,\ldots,n\big].
      \]
\item By part~(iv) of the definition there exist exactly~$k$ linear spans in the span list~$\cT$
      (recall that $(a,b]$ is linear iff $0\not\in(a,b]$).
      Therefore and due to~(iii), the corresponding rows of~$X$ form an MSGM of~$\cC$ in the sense of
      \cite[Def.~6.2, Thm.~6.11]{McE96}.
      Likewise, the previous item shows that for each $j=0,\ldots,n-1$ a characteristic matrix
      contains a shifted MSGM for the shifted code $\sigma^j(\cC)$.
      In this sense, the rows of a characteristic matrix may be called a set
      of ``shortest generators'' of~$\cC$ (and all its cyclic shifts) in the sense of \cite{Fo09}.
\end{alphalist}
\end{rem}

So far we have not established the existence of characteristic pairs for a given code.
This crucial result, along with the uniqueness of the span set, has been proven in \cite{KoVa03}
and an algorithm has been given for computing a characteristic matrix.
All this leads to the following result.

\begin{theo}\label{T-CharMat}
Let $\cC\subseteq\F^n$ be a $k$-dimensional code with support~$\cI$.
Then~$\cC$ has a characteristic pair and the characteristic span list is, up to ordering, uniquely determined by~$\cC$.
Thus, the rows of the span matrix~$S$ in Definition~\ref{D-CharMat} are uniquely determined up to ordering.
\end{theo}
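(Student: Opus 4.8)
The plan is to obtain existence by assembling a candidate pair $(X,\cT)$ from minimal span generator matrices (MSGMs) of the cyclic shifts of $\cC$, and to obtain uniqueness by pinning each characteristic span down to a uniquely determined MSGM span of a suitable shift. The two facts that drive both halves are the shift-covariance of characteristic data recorded in Remark~\ref{R-CharMat}(b),(c) and the classical uniqueness of the MSGM span set of a conventional code \cite[Thm.~6.11]{McE96}.

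For existence, note first that $\sigma^a(\cC)$ has full support for every $a\in\cI$ because $\cC$ does. Fix $a$ and pick an MSGM of $\sigma^a(\cC)$, consisting of $k$ codewords with pairwise distinct linear starting points and pairwise distinct linear ending points. Since coordinate~$0$ lies in the support, and since a linear span $(\alpha,\beta]$ with $0\le\alpha<\beta\le n-1$ can be nonzero at coordinate~$0$ only when $\alpha=0$, distinctness of the starting points forces exactly one of these generators, say $y^{(a)}$, to have span of the form $(0,\beta_a]$. Define $x_a:=\sigma^{-a}(y^{(a)})\in\cC$ with span $(a,b_a]$, where $b_a:=a+\beta_a$ read modulo~$n$. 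Letting $X$ have rows $x_a$ and $\cT$ the spans $(a,b_a]$, $a\in\cI$, property~(ii) holds by construction, the left endpoints are precisely the distinct elements of $\cI$ (half of~(iii)), and $\im X\subseteq\cC$.

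What remains for existence are exactly the interlocking statements that the right endpoints $b_a$ are distinct (the other half of~(iii)), that each column of the span matrix carries exactly $n-k$ ones~(iv), and that $\im X=\cC$~(i); this is the substantive content of \cite{KoVa03} and is proved there by tracking how spans migrate between the MSGMs of the consecutive shifts $\sigma^a(\cC)$ and $\sigma^{a+1}(\cC)$. I expect the distinctness of the right endpoints to be the real obstacle; one appealing route is time reversal, under which right endpoints of $\cC$ become left endpoints of the reversed code and MSGMs map to MSGMs, so that the left-endpoint statement already in hand transfers. Once distinctness holds, the rest is light: each $j\in\cI$ is then the unique left endpoint of one span and the unique right endpoint of another, so moving from column~$j$ to column~$j+1$ of the span matrix activates one span and deactivates one, leaving every column sum equal; the common value is the number of circular spans, and since the generators carrying linear spans have distinct starting points they are linearly independent, so their number $k'$ satisfies $k'\le k$, with the reverse inequality (equivalently, that they span $\cC$, which yields~(i)) being the last point one reads off the same migration analysis. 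This pins the common column sum at $n-k$.

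For uniqueness, let $(X,\cT)$ be any characteristic pair. By~(iii) its left endpoints exhaust $\cI$, so for each $a\in\cI$ there is a unique span $(a,b_a]\in\cT$ with left endpoint $a$, and it suffices to show that $\cC$ determines $b_a$. Since $a\notin(a,b_a]$, this span is linear after shifting by $a$; applying Remark~\ref{R-CharMat}(b) $a$ times shows that $\sigma^a(\cC)$ carries a characteristic pair in which $(a,b_a]$ appears as the linear span $(0,b_a-a]$. By Remark~\ref{R-CharMat}(c) the linear spans of that pair form the span set of an MSGM of $\sigma^a(\cC)$, and $(0,b_a-a]$ is the unique member of this set starting at~$0$ (by full support, exactly as in the existence step). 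Because the MSGM span set of $\sigma^a(\cC)$ is determined by $\sigma^a(\cC)$, hence by $\cC$ \cite[Thm.~6.11]{McE96}, the number $b_a-a$ and therefore $b_a$ is forced. Hence $\cT$, and with it the rows of the span matrix, is uniquely determined up to ordering.
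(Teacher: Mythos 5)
Your proposal is correct and follows essentially the same route as the paper: existence is ultimately deferred to Koetter/Vardy (exactly as the paper does, citing \cite[Lem.~5.7, Thm.~5.9, Thm.~5.10]{KoVa03}), and uniqueness is obtained by shifting spans to linear ones and invoking the uniqueness of the MSGM span set of the shifted code via \cite[Thm.~6.11]{McE96} together with Remark~\ref{R-CharMat}(b),(c). The only cosmetic difference is that you pin down each span individually through its unique starting point $a$ using the MSGM of $\sigma^a(\cC)$, whereas the paper fixes a time index $j$ and identifies the $k$ spans not containing $j$ with the MSGM span set of $\sigma^j(\cC)$ all at once.
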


\begin{proof}
The existence of a characteristic pair with the properties as in Definition~\ref{D-CharMat}(i)~-- (iv) has been shown in
\cite[Lem.~5.7, Thm.~5.9, Thm.~5.10 and its proof]{KoVa03}.
Since our notion of characteristic pair is slightly more general than the one in~\cite{KoVa03}, we need to pay special
attention to the uniqueness of the characteristic span list.
In order to do so let $(X,\cT)$  as in\eqnref{e-XY} be a characteristic pair of~$\cC$.
Fix $j\in\{0,\ldots,n-1\}$ and choose the~$k$ indices $l_1,\ldots,l_k$ for which $j\not\in(a_{l_i},b_{l_i}]$; they uniquely
exist due to property~(iv).
Then $0\not\in(a_{l_i}-j,b_{l_i}-j]$ and the latter is a span for the shifted row $\sigma^j(x_{l_i})$.
Hence those shifted rows have linear spans and, due to property~(iii) they form an
MSGM of~$\sigma^j(\cC)$ in the sense of \cite[Def.~6.2]{McE96}.
By \cite[Thm.~6.11]{McE96} the shifted span list is uniquely determined by the code~$\sigma^j(\cC)$ and hence by~$\cC$.
All this shows that the span list~$\cT$ is uniquely determined by~$\cC$ and coincides with the one of the
characteristic matrix as defined in~\cite[Def.~5.2]{KoVa03}.
\end{proof}

The matrix~$X$ of a characteristic pair is, in general, not uniquely determined by~$\cC$ (up to ordering of the rows), see
Example~\ref{E-chitrellisnonunique} below.

The importance of the characteristic pairs is that all minimal trellises of the code can be retrieved from them.
This is one of the main results in~\cite{KoVa03}.
Let us first introduce the following notion.

\begin{defi}\label{D-KVtrellis}
Let $\cC\subseteq\F^n$ be a $k$-dimensional code with support~$\cI$ and let $(X,\cT)$ be a characteristic pair
of~$\cC$ as in\eqnref{e-XY}.
Any trellis of the form $T_{x_{l_1},(a_{l_1},\,b_{l_1}]}\times\ldots\times T_{x_{l_k},(a_{l_k},\,b_{l_k}]}$,
where $x_{l_1},\ldots,x_{l_k}$ are linearly independent rows of~$X$, is called a {\sl $\KV_{(X,\cT)}$-trellis\/} of~$\cC$.
Every trellis that is a $\KV_{(X,\cT)}$-trellis for some characteristic pair $(X,\cT)$ of $\cC$ is called a $\KV$-trellis of~$\cC$.
\end{defi}

\begin{rem}\label{R-chi-trellis}
By Theorem~\ref{T-KVtrellis} every $\KV$-trellis is one-to-one.
Moreover, if $T=T_{x_{l_1},(a_{l_1},\,b_{l_1}]}\times\ldots\times T_{x_{l_k},(a_{l_k},\,b_{l_k}]}$ is a $\KV_{(X,\cT)}$-trellis, then
the SCP of~$T$ is given by $(v_1,\ldots,v_n)S$, where~$S$ is the span matrix of the characteristic pair~$(X,\cT)$ and
$v_i=1$ if $i\in\{l_1,\ldots,l_k\}$ and $v_i=0$ else.
This follows immediately from Theorem~\ref{T-KVtrellis}.
As a consequence, the SCP $(s_0,\ldots,s_{n-1})$ of a KV-trellis satisfies $s_i\leq\min\{k,\,n-k\}$ for all $i\in\cI$, which generalizes
the well-known Wolf bound~\cite{Wo78} (see also \cite[Thm.~5.5]{Va98}) to tail-biting trellises.
Finally, since a KV-trellis is a product trellis where the starting points of the spans are distinct and so are the end points,
the formulas for the ECP and SCP given in Proposition~\ref{P-SCPECP} apply.
\end{rem}

Unfortunately, the set of $\KV_{(X,\cT)}$-trellises depends on the choice of the characteristic pair $(X,\cT)$.
Indeed, we have the following example.

\begin{exa}\label{E-chitrellisnonunique}
Let $\F=\F_2$ and $\cC=\{(0000),(1100),(0111),(1011)\}\subseteq\F_2^4$.
The two pairs $(X,\cT)$ and $(X',\cT)$, where
\begin{equation}\label{e-charmat1}
   X=\begin{pmatrix}1&1&0&0\\0&1&1&1\\1&0&1&1\\0&1&1&1\end{pmatrix},\,
   X'=\begin{pmatrix}1&1&0&0\\0&1&1&1\\1&0&1&1\\1&0&1&1\end{pmatrix},\,
   \cT=\big[(0,1],\,(1,3],\,(2,0],\,(3,2]\big],
\end{equation}
are both characteristic pairs of~$\cC$.
Their common span matrix is
\[
   S=\begin{pmatrix}0&1&0&0\\0&0&1&1\\1&0&0&1\\1&1&1&0\end{pmatrix}.
\]
The last two rows of~$X$ are linearly independent and lead to a $\KV_{(X,\cT)}$-trellis with SCP $(2,1,1,1)$; see Remark~\ref{R-chi-trellis}.
But since the last two rows of~$X'$ are linearly dependent, this SCP does not appear for any $\KV_{(X',\cT)}$-trellis of~$\cC$
as one can see directly from the span matrix~$S$.
It is easy to see that the two matrices~$X$ and~$X'$ are the only two characteristic matrices for the code~$\cC$, up to ordering of the rows.
Indeed, examining the set of codewords it is clear that for the spans $(0,1],\,(1,3]$, and $(2,0]$ there is exactly
one codeword having this span, while the span $(3,2]$ is attained by the two codewords $(0111)$ and $(1011)$.
This leads exactly to the two choices of characteristic matrices given in\eqnref{e-charmat1}.
Since $(0111)$ is the lexicographically first of those two vectors, the matrix~$X$ is the characteristic matrix singled
out in~\cite{KoVa03}.
\end{exa}

The following result expresses the fact that the list of spans uniquely determines the structure of KV-trellises.
For structural isomorphisms recall Definition~\ref{D-TBTbasics}(e).

\begin{cor}\label{C-KVtrellises}
Let~$T=T_{G,\cS}$ and~$T'=T_{G',\cS'}$ be KV-trellises of~$\cC$.
Then the following are equivalent:
\begin{romanlist}
\item $T$ and~$T'$ are structurally isomorphic.
\item $T$ and~$T'$ have the same SCP and ECP.
\item $\cS=\cS'$, up to ordering.
\end{romanlist}
\end{cor}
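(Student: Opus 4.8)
The plan is to prove the cyclic chain (i) $\Rightarrow$ (ii) $\Rightarrow$ (iii) $\Rightarrow$ (i). The implication (i) $\Rightarrow$ (ii) is the routine one: a structural isomorphism restricts to linear isomorphisms $V_i\to V_i'$, so $\dim V_i=\dim V_i'$ and the SCPs agree; since the underlying bijection on $V_i\times V_{i+1}$ preserves the number of edges between each pair of vertices, it preserves the total edge count $|E_i|=|E_i'|$, and as $E_i,E_i'$ are $\F$-spaces this gives $e_i=\dim E_i=\dim E_i'=e_i'$, so the ECPs agree as well.

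For the crucial implication (ii) $\Rightarrow$ (iii) I would first recover the set of starting points of the spans from the two profiles. Because a KV-trellis is a product trellis whose spans have pairwise distinct left endpoints and pairwise distinct right endpoints, Proposition~\ref{P-SCPECP}(a) applies and yields $e_i=s_i+I^{\cA}_i$ with $\cA=\{a_{l_1},\ldots,a_{l_k}\}$. Hence $e_i-s_i\in\{0,1\}$ and
\[
  \cA=\{i\in\cI\mid e_i=s_i+1\}
\]
is completely determined by the SCP and the ECP; in particular $\cA=\cA'$. The second, decisive ingredient is Theorem~\ref{T-CharMat}: both $T$ and $T'$ are KV-trellises of the \emph{same} code $\cC$, so their span lists $\cS,\cS'$ are sub-multisets of the characteristic span list $\cT$ of $\cC$, which is unique up to ordering. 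Within $\cT$ the starting points are distinct, so the assignment of each span to its left endpoint is injective on $\cT$; consequently $\cS$ is recovered from $\cT$ and $\cA$ as $\cS=\{(a,b]\in\cT\mid a\in\cA\}$, and likewise $\cS'=\{(a,b]\in\cT\mid a\in\cA'\}$. Since $\cA=\cA'$, this yields $\cS=\cS'$ up to ordering.

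Finally, for (iii) $\Rightarrow$ (i) I would reorder so that $\cS=\cS'$ as ordered lists. By\eqnref{e-MMat} the matrices $M_i$ depend only on the spans, so both trellises share the same $M_i$, hence the same vertex spaces $V_i=\im M_i=V_i'$, and the identity map $\phi=\id$ is the natural candidate. Disregarding labels, the set of pairs $(v,w)$ joined by an edge is the common relation $\{(\alpha M_i,\alpha M_{i+1})\mid\alpha\in\F^k\}=\im(M_i,M_{i+1})$, which does not involve $G_i$ or $G_i'$. A short computation using the distinctness of the starting points shows $\rk(M_i,M_{i+1})=|\cL_i\cup\cL_{i+1}|=s_i+I^{\cA}_i=e_i$; since $e_i=\rk(M_i,G_i,M_{i+1})$, this means $G_i\in\col(M_i,M_{i+1})$, so \emph{each} joined pair $(v,w)$ carries exactly one edge in $T$, and the same holds in $T'$. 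Thus $T$ and $T'$ have identical unlabeled edge structure and $\phi=\id$ is a structural isomorphism.

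The main obstacle is the implication (ii) $\Rightarrow$ (iii): the profiles by themselves determine only the multiset of starting points and the multiset of endpoints, not how the starts are paired with the ends, and nested versus crossing configurations of spans can share the same SCP and ECP. The essential point that removes this ambiguity is that both trellises draw their spans from one and the same characteristic span list, inside which distinct starting points force a unique span, so the recovered set $\cA$ pins down the spans uniquely.
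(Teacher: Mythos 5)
Your proof is correct and follows essentially the same route as the paper: for (ii)$\Rightarrow$(iii) you recover $\cA$ from $e_i-s_i$ via Proposition~\ref{P-SCPECP}(a) and then invoke the uniqueness of the characteristic span list (Theorem~\ref{T-CharMat}) together with the distinctness of the starting points, exactly as the paper does, and for (iii)$\Rightarrow$(i) the paper's map $\alpha(M_i,G_i,M_{i+1})\mapsto\alpha(M_i,G'_i,M_{i+1})$ is just your identity-on-vertices argument written on the edge spaces. The only small caveat is that your intermediate claim that each joined pair carries exactly one edge can fail when some span is empty, i.e.\ of the form $(a,a]$ (then $\rk(M_i,M_{i+1})=s_i<e_i$ and one gets $|\F|$ parallel edges), but since both trellises then carry the same multiplicity $|\F|^{\,e_i-\rk(M_i,M_{i+1})}$ on each joined pair, the structural isomorphism conclusion is unaffected.
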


\begin{proof}
(i)~$\Rightarrow$~(ii) is obvious.
\\
(ii)~$\Rightarrow$~(iii) From Proposition~\ref{P-SCPECP}(a) it follows that the sets $\cA$ and $\cA'$ of
starting points of the spans in~$\cS$ and $\cS'$, respectively, are identical.
But for KV-trellises any starting point uniquely determines its span, see
Theorem~\ref{T-CharMat}. Hence $\cS=\cS'$, up to ordering.
\\
(iii)~$\Rightarrow$~(i)
Without loss of generality, let us assume $\cS=\cS'$. Then $M_i=M'_i$ for the product trellis matrices of~$T$ and~$T'$, respectively, defined
in\eqnref{e-MMat}.
But then the edge spaces of~$T$ and~$T'$ are given by $E_i=\im (M_i,G_i,M_{i+1})$ and
$E'_i=\im (M_i,G'_i,M_{i+1})$, where the $i$-th column of~$G$ and~$G'$ is denoted by $G_i$ and~$G'_i$, respectively.
By Proposition~\ref{P-SCPECP} we have $\dim E_i=\dim E'_i$.
This shows that $\phi_i\big(\alpha(M_i,G_i,M_{i+1})\big)=\alpha(M_i,G'_i,M_{i+1})$ is a well-defined
structural isomorphism between~$T$ and~$T'$.
\end{proof}

It is worth observing that the implication (ii)~$\Rightarrow$~(iii) is not true in general for product trellises.
For instance, the two matrices
\[
   G=\begin{pmatrix}1&1&0&1&0\\1&0&0&1&1\\1&1&1&0&1\end{pmatrix},\ G'=\begin{pmatrix}0&1&0&0&1\\0&0&1&1&1\\1&1&1&0&1\end{pmatrix}
   \in\F_2^{3\times5}
\]
generate the same code of dimension~$3$.
Choosing the span list $\cS=\big[(3,1],(0,4],(4,2]\big]$ for~$G$ and $\cS'=\big[(4,1],(3,2],(0,4]\big]$ for~$G'$
results in product trellises $T_{G,\cS}$ and $T_{G',\cS'}$ with the same SCP $(2,3,2,1,2)$ and ECP $(3,3,2,2,3)$ even though $\cS\not=\cS'$.
However, one can check that the two trellises are not structurally isomorphic.
As a consequence, the question whether~(i) above is equivalent to~(iii) remains open for (one-to-one) product trellises.

Let us now state the main result of~\cite{KoVa03}.

\begin{theo}[\mbox{\cite[Thm.~5.5]{KoVa03}}] \label{T-min-trellis}
Let $\cC\subseteq\F^n$ be a $k$-dimensional code with support~$\cI$ and
let~$T$ be a minimal trellis of~$\cC$ in the sense of Definition~\ref{D-TBTmin}(a).
Then $T$ is one-to-one and there exists a characteristic pair $(X,\cT)$ such that~$T$ is a $\KV_{(X,\cT)}$-trellis.
\end{theo}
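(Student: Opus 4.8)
The plan is to reduce the statement to two structural facts about a minimal trellis $T=(V,E)$ of $\cC$: first, that minimality forces $T$ to be one-to-one; and second, that a minimal one-to-one trellis is, up to isomorphism, a product trellis $T_{G,\cS}$ as in Theorem~\ref{T-KVtrellis} for a generator matrix $G$ of $\cC$ and a span list $\cS$ consisting of shortest spans with pairwise distinct starting points and pairwise distinct endpoints. Granting these, the remaining task is to recognize $(G,\cS)$ as the $k$-element selection out of a characteristic pair, which is where Theorem~\ref{T-CharMat} enters.

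To show that $T$ is one-to-one I would work with the label code $\cS(T)$ and the label projection $p\colon\cS(T)\to\cC(T)=\cC$, which is surjective and linear; by reducedness $\dim V_i=\dim\rho_i(\cS(T))$, where $\rho_i$ reads off the $i$-th vertex coordinate. The trellis is one-to-one exactly when $p$ is injective, so suppose it is not. Then there is a nonzero zero-label cycle $(v_0,0,v_1,0,\ldots,v_{n-1},0)\in\cS(T)$ with $v_j\neq 0$ for some $j$. Because this is an honest cycle, for every state $u_i\in V_i$ the edges leaving $u_i$ and those leaving $u_i+v_i$ carry the same labels to targets that differ by $v_{i+1}$; hence identifying $u_i$ with $u_i+v_i$, i.e. passing to $V_i/\langle v_i\rangle$, produces a linear trellis with the same edge-label code $\cC$ but strictly smaller vertex count at each $i$ with $v_i\neq0$. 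This contradicts Definition~\ref{D-TBTmin}(a), so $T$ is one-to-one.

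For the one-to-one trellis the projection identifies $\cS(T)$ with $\cC$, so $V_i\cong\cC/\cC^0_i$ with $\cC^0_i:=\{c\in\cC\mid$ the cycle of $c$ meets the zero state at time $i\}$, whence $s_i=k-\dim\cC^0_i$. The core step is to produce a single basis $g_1,\ldots,g_k$ of $\cC$ together with spans $(a_l,b_l]$ such that $\cC^0_i=\spann\{g_l\mid i\notin(a_l,b_l]\}$ for every $i$; this is precisely the condition under which $T$ is structurally the product $T_{g_1,(a_1,b_1]}\times\cdots\times T_{g_k,(a_k,b_k]}$ of Theorem~\ref{T-KVtrellis}, with $\cL_i=\{l\mid i\in(a_l,b_l]\}$ as in\eqnref{e-Li}. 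Minimality is used twice: a non-shortest span renders the corresponding elementary factor, and hence $T$, mergeable, contradicting minimality via \cite[Lemma~4.3]{KoVa03}; and if two starting points or two endpoints coincided one could trim a state dimension, so by Proposition~\ref{P-SCPECP}(c) the $a_l$ are distinct, the $b_l$ are distinct, and $T$ is biproper. I expect this paragraph to be the main obstacle: proving that minimality forces the family $\{\cC^0_i\}$ to be generated by a single interval-span basis, equivalently that a minimal trellis genuinely splits as a product with interval (span) supports, is a combinatorial lower-bound statement about the span matrix $S$ asserting that no trellis can undercut the state profile realized by characteristic-generator products and that every Pareto-minimal profile is realized by selecting $k$ rows of $S$; the distinctness of starting/endpoints and the column sums of $S$ from Remark~\ref{R-CharMat}(a) would have to be exploited in full here.

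It then remains to exhibit a characteristic pair $(X,\cT)$ of which $(G,\cS)$ is the $k$-row selection of Definition~\ref{D-KVtrellis}. Here I would invoke Theorem~\ref{T-CharMat}: the code $\cC$ admits a characteristic pair, and its span list $\cT$ is the unique (up to order) list of all spans arising as MSGM spans of the shifts $\sigma^j(\cC)$, cf. Remark~\ref{R-CharMat}(c). One shows that each shortest span $(a_l,b_l]$ produced above is such an MSGM span — linear ones for $\cC$ itself, circular ones for the appropriate shift — and hence lies in $\cT$; the matching generators $g_l$ may then be taken as rows of a characteristic matrix $X$, with the remaining $n-k$ characteristic rows completing $(X,\cT)$. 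Finally the bookkeeping of Remark~\ref{R-chi-trellis}, namely that the SCP equals $(v_1,\ldots,v_n)S$ for the indicator vector of the chosen rows, confirms that $T$ is the asserted $\KV_{(X,\cT)}$-trellis.
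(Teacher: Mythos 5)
Your overall architecture matches the paper's: reduce to (1) one-to-one-ness, (2) a product decomposition with shortest spans and distinct starting/end points, and (3) recognition of the resulting spans as characteristic spans. The paper obtains all three steps by citation --- \cite[Thm.~4.2]{KoVa03} for the product decomposition, \cite[Lemma~5.2]{KoVa03} for one-to-one-ness, and \cite[Lemma~5.4]{KoVa03} for the characteristic-generator property --- whereas you attempt to prove them. Your argument for one-to-one-ness (quotienting out a nonzero zero-label cycle to contradict minimality) is fine and is essentially the standard one. The decomposition step, however, is a genuine gap: producing a single basis $g_1,\ldots,g_k$ with $\cC^0_i=\spann\{g_l\mid i\notin(a_l,b_l]\}$ for \emph{all} $i$ simultaneously is exactly the content of \cite[Thm.~4.2]{KoVa03}, and you explicitly flag it as ``the main obstacle'' without resolving it. As written, the proof does not go through at this point.

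The final recognition step also fails as sketched. You claim that each shortest span $(a_l,b_l]$ of the minimal product decomposition ``is an MSGM span for the appropriate shift, hence lies in $\cT$,'' but this does not follow from the properties you have established (shortest spans, distinct $a_l$, distinct $b_l$). Example~\ref{E-BCJRnotone} gives a one-to-one product trellis with span list $[(0,2],(1,0]]$ in which the spans are shortest and the starting points and end points are each distinct, yet neither span occurs in the characteristic span list $[(1,2],(0,1],(2,0]]$; Example~\ref{E-nonchiNS} is similar. The missing ingredient is a further, quantitative use of minimality --- precisely \cite[Lemma~5.4]{KoVa03}, which shows that the generators of a \emph{minimal} product trellis are characteristic generators, i.e.\ that minimality forces the complementary generators at each time $j$ to form an MSGM of $\sigma^j(\cC)$. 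Without that lemma (or a proof of it), the passage from ``minimal product trellis with shortest spans'' to ``$\KV_{(X,\cT)}$-trellis'' is unjustified.
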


\begin{proof}
Since our result is phrased differently from what is in \cite[Thm.~5.5]{KoVa03} we think it is worthwhile to give a proof.
First of all,~$T$ is a product trellis by \cite[Thm.~4.2]{KoVa03} and is one-to-one due to
\cite[Lemma~5.2]{KoVa03}.
Hence~$T$ is a product trellis as in our Theorem~\ref{T-KVtrellis} because, due to minimality, the trellis is non-mergeable and
thus we may assume shortest spans for all generators; see also the paragraphs following Theorem~\ref{T-KVtrellis}.
Thus, $T=T_{x_1,(a_1,b_1]}\times\ldots\times T_{x_k,(a_k,b_k]}$ for certain $x_l\in\cC$ with spans $(a_l,b_l]$.
Now \cite[Lemma~5.4]{KoVa03} implies that $\big(x_1,(a_1,b_1]\big),\ldots,(x_k,(a_k,b_k]\big)$ are characteristic
generators in the sense of \cite[Def~5.2]{KoVa03}.
But this simply means that this list can be extended to a characteristic pair $(X,\cT)$ of~$\cC$ in the sense of our
Definition~\ref{D-CharMat} and thus $T$ is a $\KV_{(X,\cT)}$-trellis.
\end{proof}

One should bear in mind that not all KV-trellises are minimal.
For instance, the last two rows of~$X$ in Example~\ref{E-chitrellisnonunique} are linearly independent
and give rise to a trellis with SCP $(2,1,1,1)$, but the first two rows of~$X$ are also linearly independent
and lead to a trellis with SCP $(0,1,1,1)$.
This shows that the first trellis is not minimal in the sense of Definition~\ref{D-TBTmin}(a).
A larger example is given in \cite[Exa.~6.1]{KoVa03}.

It is important to notice that the above result does not imply that all minimal trellises of~$\cC$ arise from
a fixed characteristic pair.
Indeed, we have the following example.

\begin{exa}\label{E-CMmintrellis}
Let $\F=\F_2$ and $\cC:=\im G$, where
\[
    G=\begin{pmatrix}0&1&0&0&0&1\\0&0&1&1&1&0\\1&0&1&0&1&0\end{pmatrix}\in\F^{3\times6}.
\]
Put
\[
   X=\begin{pmatrix}1&0&0&1&0&0\\0&1&0&0&0&1\\0&0&1&1&1&0\\1&0&0&1&0&0\\0&1&0&0&0&1\\1&0&1&0&1&0\end{pmatrix},\quad
   X'=\begin{pmatrix}1&0&0&1&0&0\\0&1&0&0&0&1\\0&0&1&1&1&0\\1&0&0&1&0&0\\0&1&0&0&0&1\\1&1&1&0&1&1\end{pmatrix}
\]
and $\cT=\big[(0,3],(1,5],(2,4],(3,0],(5,1],(4,2]\big]$.
It is easy to verify that $(X,\cT)$ and $(X',\cT)$ are both characteristic pairs of~$\cC$.
Put $\cS=\big[(1,5],(2,4],(4,2]\big]$.
Then the product trellises $T=T_{G,\cS}$ and $T'=T_{G',\cS}$, where
\[
   G'=\begin{pmatrix}0&1&0&0&0&1\\0&0&1&1&1&0\\1&1&1&0&1&1\end{pmatrix}
\]
are both KV-trellises based on the same choice of characteristic spans; the first one is a $\KV_{(X,\cT)}$-trellis
and the second one is the corresponding $\KV_{(X',\cT)}$-trellis.
The KV-trellises are minimal as one can check by going through the SCP's of all possible $\KV_{(X,\cT)}$-trellises.
(There are~$10$ choices for selecting~$3$ linearly independent rows in~$X$ and none of them leads to a smaller trellis.)
As one can verify directly from the graphs below, the trellises~$T$ and~$T'$ are structurally isomorphic, but not isomorphic.
\begin{center}
  \includegraphics[height=3.8cm]{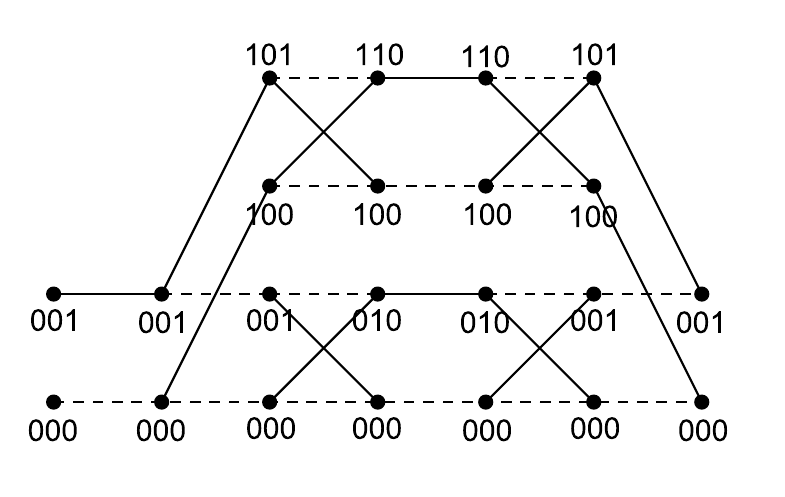}\quad\includegraphics[height=3.8cm]{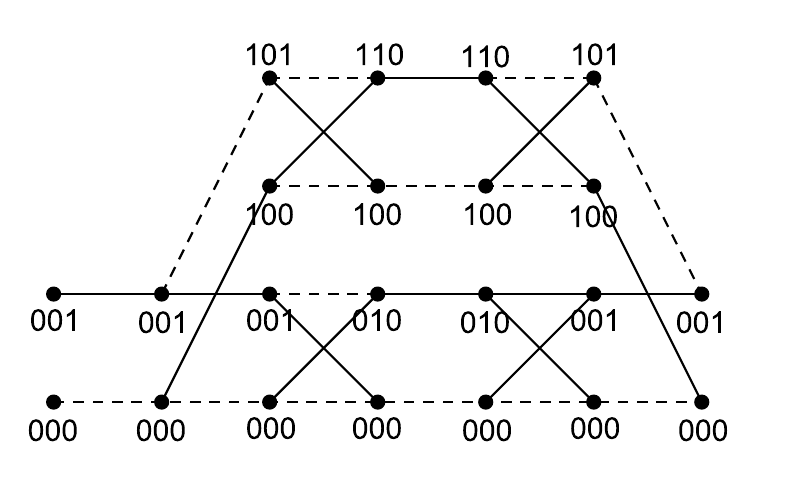}
  \\
  \hspace*{.8cm}\tiny{[$T_{G,\cS}$]\hspace*{5.3cm}[$T_{G',\cS}$]}
\end{center}
\end{exa}

In light of the previous example it should be observed that, at this point, the following question is open.
Given a code $\cC\subseteq\F^n$ and fix a characteristic pair $(X,\cT)$.
Is then each minimal trellis of~$\cC$ structurally isomorphic to a $\KV_{(X,\cT)}$-trellis?
From Corollary~\ref{C-KVtrellises} we know that the SCP and ECP of the given trellis uniquely specify
the row selection of~$X$ that may result in a potentially structurally isomorphic $\KV_{(X,\cT)}$-trellis~$T'$.
But it is not clear whether this row selection is linearly independent.
Only then the trellis~$T'$ represents~$\cC$ and thus is a $\KV_{(X,\cT)}$-trellis.

\begin{rem}\label{R-min-trellis}
As shown in \cite[Thm.~5.6]{KoVa03}, the same result as in Theorem~\ref{T-min-trellis} applies to many other notions
of minimality for tail-biting trellises, see~\cite[p.~2085]{KoVa03}.
\end{rem}

\Section{The Tail-Biting BCJR-Construction}\label{S-BCJR}
In this section we will present a different way of producing tail-biting trellises for a given code.
These BCJR-constructions were introduced in~\cite{NoSh06}.
We will show that the resulting trellises are non-mergeable and that product trellises can always be merged into a corresponding
BCJR-trellis by taking suitable quotients.
We will see that KV-trellises are isomorphic to their BCJR-counterpart, and, consequently, non-mergeable.
Throughout let~$\cC$ be as in\eqnref{e-Gdata} and\eqnref{e-Hdata}.

\begin{theo}[\mbox{\cite[Lemma~2]{NoSh06}}]\label{T-NS}
Let $D\in\F^{k\times(n-k)}$ be any matrix and for $i\in\cI$ define the matrices
\begin{equation}\label{e-NMat}
  N_i=\left\{\begin{array}{ll}D,&\text{if }i=0\\[.3ex]
                              N_{i-1}+G_{i-1} H_{i-1},&\text{if }i>0
     \end{array}\right.
\end{equation}
Then $N_n=N_0$ and the trellis~$T:=T_{(G,H,D)}$ with vertex spaces $V_i:=\im N_i\subseteq\F^{n-k}$ and edge spaces
$E_i=\im (N_i,G_i,N_{i+1})$ is a linear, reduced and biproper trellis representing the code~$\cC$, that is, $\cC(T)=\cC$.
We call~$D$ the displacement matrix for the trellis~$T$.
\end{theo}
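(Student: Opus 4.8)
The plan is to verify the claim in several stages, working directly from the definitions of the matrices $N_i$ in\eqnref{e-NMat} and the structure of a tail-biting trellis. First I would establish the cyclic consistency condition $N_n=N_0$. Unrolling the recursion gives $N_n=N_0+\sum_{i=0}^{n-1}G_iH_i$, and since $\sum_{i=0}^{n-1}G_iH_i$ is exactly the matrix product of $G$ (with columns $G_i$) and $H\T$ (with rows $H_i$), this sum equals $GH\T$. But $\cC=\im G\subseteq\ker H\T$ by\eqnref{e-Gdata} and\eqnref{e-Hdata}, so every row $g_l$ of $G$ satisfies $g_lH\T=0$, whence $GH\T=0$. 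Therefore $N_n=N_0=D$, and the vertex space $V_0=\im N_0$ is well-defined as both the initial and terminal space of a cycle. This is the step that genuinely uses the code structure and I expect it to be the cleanest part.

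Next I would show that $T$ represents $\cC$, i.e.\ $\cC(T)=\cC$. The natural approach is to exhibit, for each codeword, a cycle carrying it as its edge-label sequence, and conversely. Given $c=\alpha G\in\cC$ with $\alpha\in\F^k$, define the state sequence $w_i:=\alpha N_i\in V_i$ for $i\in\cI$. Then the triple $(w_i,c_i,w_{i+1})=(\alpha N_i,\alpha G_i,\alpha N_{i+1})=\alpha(N_i,G_i,N_{i+1})$ lies in $E_i=\im(N_i,G_i,N_{i+1})$ by definition, and the recursion\eqnref{e-NMat} guarantees $\alpha N_{i+1}=\alpha N_i+\alpha G_iH_i$, so consecutive edges share their intermediate vertex; the cyclic closure $w_n=\alpha N_n=\alpha N_0=w_0$ follows from the first step. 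This produces a cycle with label $c$, giving $\cC\subseteq\cC(T)$. For the reverse inclusion, any cycle has edge labels forming some sequence $(c_0,\ldots,c_{n-1})$, and because each $E_i$ is the image of $(N_i,G_i,N_{i+1})$ one can write each edge as $\alpha^{(i)}(N_i,G_i,N_{i+1})$; the shared-vertex condition forces compatibility of the $\alpha^{(i)}$ modulo the kernels of the $N_i$, and one checks that the resulting label lies in $\im G=\cC$. I anticipate this direction is the main obstacle, since one must argue carefully that the local choices of $\alpha^{(i)}$ can be reconciled into a single codeword; the biproperness established below will help pin down the edges uniquely.

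For linearity, I would invoke Definition~\ref{D-TBTbasics}(a): the vertex spaces $V_i=\im N_i$ are visibly $\F$-subspaces, and the label code $\cS(T)$ is the image of the linear map $\alpha\mapsto(\alpha N_0,\alpha G_0,\ldots,\alpha N_{n-1},\alpha G_{n-1})$, hence a subspace of $\cW$; reducedness follows because every vertex $\alpha N_i$ and every edge lies on the cycle built from $c=\alpha G$ as in the previous paragraph. Finally, biproperness amounts to showing that within each $E_i$ the starting vertex together with the label determines the endpoint, and symmetrically; concretely, if $\alpha(N_i,G_i,N_{i+1})$ and $\beta(N_i,G_i,N_{i+1})$ agree in their first two coordinates, then $(\alpha-\beta)N_i=0$ and $(\alpha-\beta)G_i=0$, and I must deduce $(\alpha-\beta)N_{i+1}=0$. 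This is immediate from\eqnref{e-NMat}, since $N_{i+1}=N_i+G_iH_i$ gives $(\alpha-\beta)N_{i+1}=(\alpha-\beta)N_i+\big((\alpha-\beta)G_i\big)H_i=0$. The same computation run backwards, using $N_i=N_{i+1}-G_iH_i$, yields the dual condition, completing biproperness and the proof.
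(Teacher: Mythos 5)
Your handling of $N_n=N_0$ (telescoping the recursion and using $GH\T=\sum_{j=0}^{n-1}G_jH_j=0$), of reducedness, and of biproperness is correct and is exactly what the paper indicates. The genuine gap is in the converse inclusion $\cC(T)\subseteq\cC$, which you yourself flag as ``the main obstacle'' but then do not close, and the tool you reach for --- biproperness --- is the wrong one: biproperness says a vertex and a label determine an edge, but it says nothing about which label sequences close up into cycles. The missing idea (this is precisely Proposition~\ref{P-edgepath} of the paper) is to aim at the \emph{parity-check} description $\cC=\ker H\T$ rather than at $\im G$. Concretely, write the edges of a cycle as $\alpha^{(i)}(N_i,G_i,N_{i+1})$ with matching conditions $\alpha^{(i)}N_{i+1}=\alpha^{(i+1)}N_{i+1}$; telescoping\eqnref{e-NMat} then gives $\alpha^{(n-1)}N_0=\alpha^{(0)}N_0+\sum_{j=0}^{n-1}\alpha^{(j)}G_jH_j$, so closure of the path is equivalent to $\sum_{j=0}^{n-1}\alpha^{(j)}G_jH_j=0$, i.e.\ to the label sequence $(\alpha^{(0)}G_0,\ldots,\alpha^{(n-1)}G_{n-1})$ lying in $\ker H\T=\cC$. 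Trying instead to show the label lies in $\im G$ by ``reconciling the $\alpha^{(i)}$ into a single codeword'' cannot work in general: the $\alpha^{(i)}$ need not be reconcilable into one $\alpha$, since these trellises need not be one-to-one (see Example~\ref{E-BCJRnotone}, where a nonzero cycle carries the zero label).

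A related, smaller inaccuracy: $\cS(T)$ is not the image of the single linear map $\alpha\mapsto(\alpha N_0,\alpha G_0,\ldots,\alpha N_{n-1},\alpha G_{n-1})$. That map only produces the ``diagonal'' cycles; when the matrices $N_i$ have nontrivial left kernels a cycle may use different parameters $\alpha^{(i)}$ in different sections. Linearity still holds because $\cS(T)$ is the image of the solution space of the linear matching and closure constraints on $(\alpha^{(0)},\ldots,\alpha^{(n-1)})$, but as written your description undercounts the cycles (and, if it were correct, the non-one-to-one behaviour of Example~\ref{E-BCJRnotone} could not occur). Your reducedness argument is unaffected, since the diagonal cycles already cover every vertex and every edge.
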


The fact $N_n=N_0$ simply follows from the identity $N_i=N_0+\sum_{j=0}^{i-1}G_jH_j,\,i=1,\ldots,n$ along with
$0=G H\T=\sum_{j=0}^{n-1}G_j H_j$.
The reducedness and biproperness are a consequence of the very definition of the vertex and edge spaces along with the
recursive definition of the matrices~$N_i$.
The displacement matrix~$D$ can be interpreted as the (circular) past at time zero of each cycle in~$T_{(G,H,D)}$.
If $D=0$, each cycle starts with zero history and the trellis~$T_{(G,H,D)}$ is conventional.
In fact, it is simply the classical BCJR-trellis and thus a minimal trellis, see \cite[Sec.~IV]{McE96} and~\cite{BCJR74}.
This case will also show up as a special version of Definition~\ref{D-NSspan} below, in which we define what will be
called a tail-biting BCJR-trellis.
One should regard the displacement matrix~$D$ as a design parameter for the construction of good trellises.
In Definition~\ref{D-NSspan} we will restrict ourselves to particularly useful choices of this parameter, which then will
always lead to non-mergeable trellises.

\begin{rem}\label{R-NS}
For every $U\in GL_k(\F)$ the trellises $T_{(G,H,D)}$ and $T_{(UG,H,UD)}$ are identical and for all $V\in GL_{n-k}(\F)$
the trellises $T_{(G,H,D)}$ and $T_{(G,VH,DV^{\sf T})}$ are isomorphic and the isomorphisms on the vertex spaces are given by
the restrictions of the linear map~$V\T$ to those spaces.
\end{rem}

The statement $\cC(T)=\cC$ of Theorem~\ref{T-NS} also follows from the following result which will be helpful later on.
\begin{prop}\label{P-edgepath}
Let $T=T_{(G,H,D)}$ be as in Theorem~\ref{T-NS}.
Furthermore, let $\alpha^{(i)}\in\F^k$ be such that the sequence of edges
$(\alpha^{(i)} N_i,\alpha^{(i)}G_i,\alpha^{(i)} N_{i+1}),\ i\in\cI$,
is a path through~$T$, that is,
$\alpha^{(i)} N_{i+1}=\alpha^{(i+1)}N_{i+1}$ for $i=0,\ldots,n-2$.
Then the edge-label sequence of the path is a codeword if and only if the path is a cycle.
In other words,
\[
  (\alpha^{(0)}G_0,\ldots,\alpha^{(n-1)}G_{n-1})\in\cC\Longleftrightarrow \alpha^{(n-1)}N_0=\alpha^{(0)}N_0.
\]
\end{prop}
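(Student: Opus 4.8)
The plan is to reduce the claimed equivalence to the parity check condition defining $\cC$ and then to evaluate the relevant parity check sum by a telescoping argument. Write $c=(\alpha^{(0)}G_0,\ldots,\alpha^{(n-1)}G_{n-1})$ for the edge-label sequence of the path. Since each $\alpha^{(i)}G_i$ is a scalar, we have $cH\T=\sum_{i\in\cI}(\alpha^{(i)}G_i)H_i=\sum_{i\in\cI}\alpha^{(i)}(G_iH_i)$. The central observation is that the recursion\eqnref{e-NMat} yields $G_iH_i=N_{i+1}-N_i$ for every $i\in\cI$, where the case $i=n-1$ uses the identity $N_n=N_0$ established in Theorem~\ref{T-NS}. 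This rewrites the parity check sum purely in terms of the matrices $N_i$.

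First I would substitute this identity to obtain $cH\T=\sum_{i\in\cI}\alpha^{(i)}(N_{i+1}-N_i)$, and then apply Abel summation. Splitting the sum into $\sum_i\alpha^{(i)}N_{i+1}-\sum_i\alpha^{(i)}N_i$ and invoking the path-consistency equations $\alpha^{(i)}N_{i+1}=\alpha^{(i+1)}N_{i+1}$ for $i=0,\ldots,n-2$ (which are precisely the hypotheses that the given edges form a path) causes all interior terms to cancel. What survives is only the boundary contribution, so that $cH\T=\alpha^{(n-1)}N_n-\alpha^{(0)}N_0=\alpha^{(n-1)}N_0-\alpha^{(0)}N_0$, again using $N_n=N_0$.

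Finally, since $\cC=\ker H\T$ by\eqnref{e-Hdata}, the word $c$ belongs to $\cC$ if and only if $cH\T=0$, that is, if and only if $\alpha^{(n-1)}N_0=\alpha^{(0)}N_0$. The latter equation states exactly that the endpoint $\alpha^{(n-1)}N_n$ of the edge at time $n-1$ coincides with the starting point $\alpha^{(0)}N_0$ of the edge at time $0$, which is the condition for the path to close into a cycle. This is the asserted equivalence. The argument is entirely elementary; the only point requiring care is the bookkeeping in the telescoping, namely ensuring that the modular wrap-around at $i=n-1$ is handled correctly through $N_n=N_0$ and that the consistency relations are applied exactly at the interior indices $i=0,\ldots,n-2$, so that the remaining boundary term is precisely the cycle-closure quantity.
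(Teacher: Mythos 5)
Your proof is correct and is essentially the paper's argument in a slightly different dress: the paper builds up the identity $\alpha^{(n-1)}N_0=\alpha^{(0)}N_0+\sum_{j=0}^{n-1}\alpha^{(j)}G_jH_j$ by forward induction along the path, whereas you obtain the same identity by writing $G_iH_i=N_{i+1}-N_i$ and telescoping; both then conclude via $\cC=\ker H\T$. The wrap-around bookkeeping with $N_n=N_0$ is handled correctly, so nothing further is needed.
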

\begin{proof}
From the assumptions as well as the definition of the matrices~$N_i$ we obtain
$\alpha^{(1)}N_1=\alpha^{(0)}N_1=\alpha^{(0)}N_0+\alpha^{(0)}G_0H_0$ and inductively
$\alpha^{(i+1)}N_{i+1}=\alpha^{(0)} N_0+\sum_{j=0}^{i}\alpha^{(j)}G_jH_j$, for $i=0,\ldots,n-2$.
Hence
$\alpha^{(n-1)}N_{n-1}=\alpha^{(0)} N_0+\sum_{j=0}^{n-2}\alpha^{(j)}G_jH_j$ and this implies
\[
  \alpha^{(n-1)}N_0=\alpha^{(n-1)}N_{n-1}+\alpha^{(n-1)}G_{n-1}H_{n-1}=\alpha^{(0)} N_0+\sum_{j=0}^{n-1}\alpha^{(j)}G_jH_j
\]
This leads to
$\alpha^{(n-1)} N_0=\alpha^{(0)}N_0\Longleftrightarrow\sum_{i=0}^{n-1}\alpha^{(i)}G_iH_i=0$ and the latter is the case if and only if
$(\alpha^{(0)}G_0,\ldots,\alpha^{(n-1)}G_{n-1})\in\ker H^{\sf T}=\cC$,
as desired.
\end{proof}

One of the nice properties of the trellis construction in Theorem~\ref{T-NS} is that it respects duality.
Indeed, the following is straightforward; see also \cite[Lemma~12]{NoSh06}.

\begin{prop}\label{P-NSdual}
Let $T=T_{(G,H,D)}$ be as in Theorem~\ref{T-NS}.
Then $\hat{T}:=T_{(H,G,D^{\sf T})}$ is a trellis representing the dual code~$\cC^{\perp}$.
Moreover, the vertex spaces of $\hat{T}$ are given by $\im N_i\T,\,i\in\cI$, and thus the
trellises~$T$ and~$\hat{T}$ have the same SCP, given by $(s_0,\ldots,s_{n-1})$, where $s_i=\rk N_i$.
\\
As a consequence, if~$T$ is a minimal trellis for~$\cC$, then~$\hat{T}$ is a minimal trellis for~$\cC^{\perp}$.
\end{prop}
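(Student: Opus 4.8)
The plan is to verify the three claims of Proposition~\ref{P-NSdual} directly from the definitions, exploiting the symmetry between the roles of $G,H$ and $D,D\T$ in the BCJR-construction of Theorem~\ref{T-NS}. First I would apply Theorem~\ref{T-NS} itself to the triple $(H,G,D\T)$. This requires that $H$ is a generator matrix for $\cC^\perp$ and that $G\T$ plays the role of the parity check matrix; both hold because $\cC^\perp=\im H$ and $\cC=\ker H\T$ means precisely that $H G\T=0$, i.e.\ $\sum_{i=0}^{n-1}H_i G_i\T=0$. The columns of $H$ are the $H_i\in\F^{n-k}$ and the rows of $G\T$ are the $G_i\in\F^k$, exactly matching the data format\eqnref{e-Gdata} and\eqnref{e-Hdata} with the roles of the two matrices swapped. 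Hence Theorem~\ref{T-NS} applies verbatim and already gives that $\hat T=T_{(H,G,D\T)}$ is a linear, reduced, biproper trellis representing $\cC^\perp$.

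Second, I would identify the vertex spaces of $\hat T$. Writing $\hat N_i$ for the matrices defined by the recursion\eqnref{e-NMat} applied to $(H,G,D\T)$, we have $\hat N_0=D\T$ and $\hat N_i=\hat N_{i-1}+H_{i-1}G_{i-1}$, so that $\hat N_i=D\T+\sum_{j=0}^{i-1}H_j G_j$. Comparing with $N_i=D+\sum_{j=0}^{i-1}G_j H_j$ from Theorem~\ref{T-NS}, I would simply transpose: $(G_j H_j)\T=H_j\T G_j\T$, but here the outer product $G_j H_j\in\F^{k\times(n-k)}$ transposes to $H_j\T G_j\T$ of shape $(n-k)\times k$, which — reading $G_j\in\F^k$ and $H_j\in\F^{n-k}$ as the appropriate column/row vectors — matches the summand $H_j G_j$ in $\hat N_i$. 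Thus $\hat N_i=N_i\T$ for all $i$, and the vertex spaces are $\im\hat N_i=\im N_i\T$. Since $\rk N_i\T=\rk N_i$, the SCP of $\hat T$ is $(s_0,\dots,s_{n-1})$ with $s_i=\rk N_i$, which is identical to the SCP of $T$ whose vertex space $V_i=\im N_i$ has dimension $\rk N_i$.

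Finally, the minimality consequence follows formally: if $T$ is minimal for $\cC$ in the sense of Definition~\ref{D-TBTmin}(a) but $\hat T$ were not minimal for $\cC^\perp$, there would be a trellis $\hat T'$ of $\cC^\perp$ with no larger vertex dimensions and a strictly smaller one at some coordinate. Applying the same dualization construction to $\hat T'$ (which is legitimate once one knows every minimal trellis is a product/BCJR trellis, or more simply by dualizing again and using that double-dual returns $\cC$) would produce a trellis of $\cC$ contradicting the minimality of $T$; the SCP is preserved under the involution $N_i\mapsto N_i\T$, so strict improvement is transported back. The cleanest phrasing uses that the dualization is an involution on SCPs together with the symmetry of the minimality condition, which is invariant under swapping $\cC$ and $\cC^\perp$.

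The main obstacle is bookkeeping the transpose carefully: one must check that the rank-one outer-product summands $G_jH_j$ and $H_jG_j$ are genuine transposes of each other given the conventions in\eqnref{e-Gdata}–\eqnref{e-Hdata}, and that no stray transpose is dropped when asserting $\hat N_i=N_i\T$. Everything else is a direct appeal to Theorem~\ref{T-NS} and the elementary fact $\rk N_i\T=\rk N_i$.
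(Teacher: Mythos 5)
Your verification of the first two claims is correct and is exactly the routine check the paper has in mind (the paper offers no proof here, calling the statement straightforward and citing Nori/Shankar): the identity $HG\T=0$, equivalent to $GH\T=0$, lets Theorem~\ref{T-NS} be applied verbatim to the triple $(H,G,D\T)$, and the transposition bookkeeping $\hat{N}_i=D\T+\sum_{j=0}^{i-1}H_jG_j=\bigl(D+\sum_{j=0}^{i-1}G_jH_j\bigr)\T=N_i\T$ is right, giving $\im\hat{N}_i=\im N_i\T$ and the equality of SCPs since $\rk N_i\T=\rk N_i$.

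The minimality consequence, however, is not established by your argument as written. The ``dualize again'' shortcut fails: a trellis $\hat{T}'$ of $\cC^{\perp}$ witnessing non-minimality of $\hat{T}$ is an arbitrary linear trellis and carries no BCJR data to which the involution $N_i\mapsto N_i\T$ could be applied; double-dualizing only returns $\hat{T}$ to $T$ and says nothing about $\hat{T}'$. The route you mention only parenthetically is the correct one, but it must actually be carried out, and with one extra step: $\hat{T}'$ itself need not be minimal, so you first pass (by finiteness of the SCP poset) to a minimal trellis $\hat{T}''$ of $\cC^{\perp}$ whose SCP is dominated by that of $\hat{T}'$, hence strictly dominates that of $\hat{T}$ in the sense of Definition~\ref{D-TBTmin}(a). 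By Theorem~\ref{T-min-trellis} this $\hat{T}''$ is a KV-trellis and by Theorem~\ref{T-KVNS} it is isomorphic to a BCJR-trellis $T_{(H',G',D')}$ of $\cC^{\perp}$; dualizing \emph{that} yields a trellis of $\cC$ with the same SCP as $\hat{T}''$, contradicting the minimality of $T$. Note that this argument leans on Theorems~\ref{T-min-trellis} and~\ref{T-KVNS}, the latter of which appears only after this proposition in the paper --- which is presumably why the paper defers the minimality statement to the literature rather than proving it in place.
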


Later on in the proof of Theorem~\ref{T-dualselect} we will see that in the special case where
$T_{(G,H,D)}$ is a minimal KV-trellis, this trellis dualization coincides with the construction given by Forney in
\cite[Sec.~VII-D and Thm.~8.4]{Fo01} based on dualizing the edge spaces (local constraints) with respect
to a particular inner product.

A particular instance of the trellis $T_{(G,H,D)}$ is obtained by choosing the displacement matrix based on given
spans of the generator matrix~$G$.
This will relate this construction to the product construction of the previous section.

\begin{defi}\label{D-NSspan}
Let~$\cS:=\big[(a_l,b_l],l=1,\ldots,k\big]$ be a span list of~$G$.
Then the trellis $T_{(G,H,\cS)}$ defined as $T_{(G,H,D)}$, where
\begin{equation}\label{e-DMat}
   D=\begin{pmatrix}d_1\\ \vdots\\ d_k\end{pmatrix}\in\F^{k\times(n-k)}\text{ with }d_l=\sum_{j=a_l}^{n-1}g_{lj} H_j,
\end{equation}
is called a {\sl (tail-biting) BCJR-trellis\/} of~$\cC$.
\end{defi}

It should be noted that we define a BCJR-trellis more restrictive than Nori/Shankar in~\cite{NoSh06}:
while we use this terminology only for trellises $T_{(G,H,D)}$, where the displacement matrix is based on a given span list
as in\eqnref{e-DMat}, Nori/Shankar use the term BCJR-trellis for all displacement matrices.

\begin{rem}\label{R-BCJR}
From Remark~\ref{R-NS} it follows that $T_{(G,H,\cS)}$ is isomorphic to $T_{(G,VH,\cS)}$ for all $V\in GL_{n-k}(\F)$.
Indeed, if~$D$ is as in\eqnref{e-DMat} then~$DV\T$ is the according displacement matrix for $T_{(G,VH,\cS)}$.
As a consequence,~$V\T$ restricted to all vertex spaces furnishes an isomorphism between the two trellises.
However, if~$G$ and $UG\in\F^{k\times n},\,U\in GL_k(\F)$, are two generator matrices with the same span
list~$\cS$, then the BCJR-trellises $T_{(G,H,\cS)}$ and $T_{(UG,H,\cS)}$ need not be isomorphic.
This can be seen from Example~\ref{E-CMmintrellis}.
Indeed, the two (non-isomorphic) trellises given in that example are
BCJR-trellises, which will follow from Theorem~\ref{T-KVNS} presented below.
\end{rem}

Notice that if all spans are linear, then~$D$ is the zero matrix (see also Proposition~\ref{P-NSspan} below), and the resulting
trellis is the conventional BCJR-trellis.
It is well-known that the conventional BCJR-trellis is minimal, see \cite[Sec.~IV]{McE96}, and thus non-mergeable.
Later on we will see that in the tail-biting case, BCJR-trellises are non-mergeable as well, but they are not necessarily minimal.

Just like the product trellises in Remark~\ref{R-shiftKV} the BCJR-construction is compatible with the shift.
\begin{rem}\label{R-shiftNS}
Let~$\cS$ and~$D$ be as in Definition~\ref{D-NSspan}.
As in Remark~\ref{R-shiftKV} let $G^*\in\F^{k\times n}$ be the matrix consisting of the shifted rows
$\sigma(g_l),\,l=1,\ldots,k$ and define~$H^*$ accordingly.
Then $\cS^*:=\big[(a_l-1,\, b_l-1],\,l=1,\ldots,k\big]$ is a span list for the rows of~$G^*$.
Let $N_i,\,i\in\cI$, be the vertex matrices as in\eqnref{e-NMat} for the BCJR-trellis $T_{(G,H,\cS)}$.
Then the vertex matrices for the BCJR-trellis $T_{(G^*,H^*,\cS^*)}$ are given by $N_i^*=N_{i+1}$ for $i\in\cI$.
Indeed, by\eqnref{e-DMat} the $l$-th row of $N^*_0$ is defined as
\[
  \sum_{j=a_l-1}^{n-1}g^*_{lj}H^*_j=\sum_{j=a_l-1}^{n-1} g_{l,j+1}H_{j+1}
  =\sum_{j=a_l}^{n-1}g_{lj}H_j+g_{l0}H_0
\]
and the latter is the $l$-th row of~$N_1$.
Now the rest follows inductively due to\eqnref{e-NMat}.
\end{rem}

In the sequel we will show that if the pair~$(G,\cS)$ originates from a characteristic pair of~$\cC$, then
the trellis $T_{(G,H,\cS)}$ is isomorphic to the corresponding $\KV$-trellis $T_{G,\cS}$.

We begin by showing the following.
\begin{prop}\label{P-NSspan}
Let $T_{(G,H,\cS)}$ be as in Definition~\ref{D-NSspan} and let $N_i,\,i\in\cI,$ be defined as in Theorem~\ref{T-NS}.
Then
\begin{alphalist}
\item If $(a_l, b_l]$ is a linear span, then $d_l=0$.
\item If $i\in\cI$ and $l\in\{1,\ldots,k\}$ are such that
      $i\not\in (a_l,\,b_l]$, then the $l$-th row of $N_i$ is zero.
\end{alphalist}
\end{prop}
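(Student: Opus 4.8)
The plan is to reduce both parts to the single relation $g_l H\T = \sum_{j=0}^{n-1} g_{lj} H_j = 0$, which holds because $g_l \in \cC = \ker H\T$, together with the span condition $g_{lj} = 0$ for $j \not\in [a_l, b_l]$. Part (a) is then immediate: if $(a_l, b_l]$ is linear, then $a_l \leq b_l$ and the closed span is $\{a_l, \ldots, b_l\}$, so all terms in \eqnref{e-DMat} with $j > b_l$ vanish, giving $d_l = \sum_{j=a_l}^{n-1} g_{lj} H_j = \sum_{j=a_l}^{b_l} g_{lj} H_j$; since $g_{lj} = 0$ for $j < a_l$ as well, this equals $\sum_{j=0}^{n-1} g_{lj} H_j = g_l H\T = 0$.

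For part (b) I would first unwind the recursion \eqnref{e-NMat} to the closed form $N_i = N_0 + \sum_{j=0}^{i-1} G_j H_j$ and read off its $l$-th row, noting that the $l$-th row of the outer product $G_j H_j$ is $g_{lj} H_j$. Writing $(N_i)_l$ for the $l$-th row of $N_i$, this yields $(N_i)_l = d_l + \sum_{j=0}^{i-1} g_{lj} H_j$. It is convenient to set $P_i := \sum_{j=0}^{i-1} g_{lj} H_j$ for $i = 0, \ldots, n$, so that $P_0 = 0$ and $P_n = g_l H\T = 0$. Since $d_l = \sum_{j=a_l}^{n-1} g_{lj} H_j = P_n - P_{a_l} = -P_{a_l}$, this collapses to the clean expression $(N_i)_l = P_i - P_{a_l}$, and the whole statement reduces to showing $P_i = P_{a_l}$ whenever $i \not\in (a_l, b_l]$. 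The governing principle is that $P_i$ differs from $P_{i-1}$ only when $g_{l,i-1} \neq 0$, so $P$ is constant across any block of indices on which $g_l$ vanishes.

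It then remains to check $P_i = P_{a_l}$ off the span by a case distinction. If $(a_l, b_l]$ is linear, its complement splits into $i \leq a_l$ and $i > b_l$: for $i \leq a_l$ the indices $j \in \{i, \ldots, a_l - 1\}$ lie below $a_l$, so $g_{lj} = 0$ and $P_i = P_{a_l}$; for $i > b_l$ the support of $g_l$, contained in $\{a_l, \ldots, b_l\}$, lies entirely below $i$, so $P_i = P_n = 0$, while $P_{a_l} = 0$ by part (a), whence again $P_i = P_{a_l}$. If $(a_l, b_l]$ is circular, then by \eqnref{e-compint} the complement is $(b_l, a_l] = \{b_l + 1, \ldots, a_l\}$; since $g_l$ vanishes on the gap $\{b_l + 1, \ldots, a_l - 1\}$ of the closed span, $P$ is constant on $\{b_l + 1, \ldots, a_l\}$ and so $P_i = P_{a_l}$ there. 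In all cases $(N_i)_l = P_i - P_{a_l} = 0$.

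I expect the only real obstacle to be the bookkeeping in the linear-span subcase $i > b_l$: there the index range from $a_l$ to $i$ sweeps across the entire support of $g_l$, so $P_i = P_{a_l}$ is not a matter of $P$ being locally constant but rests on the global vanishing $g_l H\T = 0$ (equivalently $P_n = 0$). Correspondingly, the point requiring most care is to keep the genuinely linear, non-cyclic nature of the closed form $N_i = N_0 + \sum_{j < i} G_j H_j$ straight against the cyclic interval notation of \eqnref{e-interval}.
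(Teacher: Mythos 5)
Your proposal is correct and follows essentially the same route as the paper: both reduce everything to $g_lH\mbox{$\!^{\sf T}$}=0$ via the closed form $N_i=N_0+\sum_{j=0}^{i-1}G_jH_j$ and then split into the linear and circular span cases. Your partial-sum bookkeeping $P_i$ is only a notational repackaging of the paper's direct manipulation of the two sums.
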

\begin{proof}
(a) If $(a_l,b_l]$ is linear then all nonzero entries of~$g_l$ are in the interval $[a_l,n-1]$ and therefore
$d_l=\sum_{j=a_l}^{n-1}g_{lj} H_j=\sum_{j=0}^{n-1}g_{lj} H_j=0$ due to $G H\T=0$.
\\
(b) Recall that $N_0=D$ as well as the fact that $(a_l, b_l]$ is a linear span iff $0\not\in (a_l,\,b_l]$.
Therefore, the case $i=0$ has been proven in~(a).
Thus, let $i\geq1$.
Then, by definition, $N_i=D+\sum_{j=0}^{i-1}G_j H_j$.
\\
Suppose first that the span $(a_l,b_l]$ is circular, hence in particular, non-empty.
Then $i\not\in (a_l,\,b_l]$ is equivalent to $i\in(b_l,a_l]$, which is linear; see also\eqnref{e-compint}.
Thus, the $l$-th row of~$N_i$ is
\[
  d_l+\sum_{j=0}^{i-1}g_{lj} H_j=\sum_{j=a_l}^{n-1}g_{lj} H_j+\sum_{j=0}^{i-1}g_{lj} H_j=
  \sum_{j=0}^{n-1}g_{lj} H_j=0,
\]
where the second identity holds true because $g_{lj}=0$ for $j\in [i,a_l-1]$.
\\
Let now the span $(a_l,b_l]$ be linear. In this case $d_l=0$ by part~(a) and the $l$-th row of~$N_i$ is
given by
\begin{equation}\label{e-rowMi}
  \sum_{j=0}^{i-1}g_{lj} H_j.
\end{equation}
Now $i\not\in (a_l,\,b_l]$ amounts to either $i\in[0,a_l]$ or $i\in(b_l,n-1]$.
In the first case, the interval $[0,i-1]$ and the closed span $[a_l,\,b_l]$ of~$g_l$ are disjoint while in the second case
the interval $[0,i-1]$ contains $[a_l,b_l]$.
Hence in both cases the sum in\eqnref{e-rowMi} is zero.
\end{proof}

Now we are in a position to prove the non-mergeability of the BCJR-trellises.
This result generalizes the well-known fact that the conventional BCJR-trellis is minimal (thus non-mergeable),
see~\cite[Sec.~V]{McE96}.
It is worth noticing that, just like in the conventional case, the corollary is true even without
assuming that the starting points $a_1,\ldots,a_k$ of the spans (or the endpoints $b_1,\ldots,b_k$) are distinct.

\begin{cor}\label{C-NSnonmerg}
Any BCJR-trellis $T:=T_{(G,H,\cS)}$ is non-mergeable.
\end{cor}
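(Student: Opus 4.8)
The plan is to show that merging any two distinct vertices creates a genuinely new cycle whose edge label is not a codeword, while destroying none of the old cycles. Fix a time $i\in\cI$ and two distinct vertices $v_1\neq v_2$ in $V_i$. By the shift-compatibility in Remark~\ref{R-shiftNS}, iterating the shift $i$ times turns $T$ into a BCJR-trellis with vertex matrices $N_{i+j}$, under which merging at time $i$ in $T$ becomes merging at time $0$ (and $\cC$ is replaced by the coordinate permutation $\sigma^i(\cC)$). Hence I would assume without loss of generality that $i=0$, so $v_1,v_2\in V_0=\im N_0$.

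The central object is the space $\mathcal{P}$ of all length-$n$ paths through $T$, described by tuples $(\alpha^{(0)},\ldots,\alpha^{(n-1)})\in(\F^k)^n$ satisfying $\alpha^{(j)}N_{j+1}=\alpha^{(j+1)}N_{j+1}$ for $j=0,\ldots,n-2$; this is a vector space, and the boundary map $\pi\colon\mathcal{P}\to V_0\times V_0$, $\pi(\alpha^{(\bullet)})=(\alpha^{(0)}N_0,\,\alpha^{(n-1)}N_0)$ (start vertex, end vertex), is linear. Such a path is a cycle precisely when $\alpha^{(0)}N_0=\alpha^{(n-1)}N_0$, and by Proposition~\ref{P-edgepath} its label $(\alpha^{(0)}G_0,\ldots,\alpha^{(n-1)}G_{n-1})$ lies in $\cC$ if and only if the path is a cycle. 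The key step is therefore to prove the controllability statement $\im\pi=V_0\times V_0$: then I can select a path with $\pi(\alpha^{(\bullet)})=(v_1,v_2)$, which is \emph{not} a cycle (as $v_1\neq v_2$), so its label is not a codeword, yet after identifying $v_1$ and $v_2$ to a single vertex $\hat v$ this path closes up into a cycle of $\hat T$, contributing its label to $\cC(\hat T)$.

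To establish $\im\pi=V_0\times V_0$ I would combine two ingredients. First, Proposition~\ref{P-NSspan}(b) shows the $l$-th row of $N_i$ vanishes whenever $i\notin(a_l,b_l]$, which is exactly the hypothesis of Lemma~\ref{L-pathv0}; that lemma then produces, for every $v\in V_0$, a full path from $v$ to $0\in V_0$, so $V_0\times\{0\}\subseteq\im\pi$. Second, since $T$ is reduced (Theorem~\ref{T-NS}) there is a cycle through $v_2$, giving $(v_2,v_2)\in\im\pi$. By linearity, $(v_1,v_2)=(v_1-v_2,\,0)+(v_2,v_2)\in\im\pi$, because $(v_1-v_2,0)\in V_0\times\{0\}$. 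This yields the desired path from $v_1$ to $v_2$, and the argument of the previous paragraph exhibits a word in $\cC(\hat T)\setminus\cC$, so $\cC(\hat T)\neq\cC$ and $T$ is non-mergeable in the sense of Definition~\ref{D-TBTmin}(b). The main obstacle is the controllability claim $\im\pi=V_0\times V_0$; the reduction to $i=0$ and the verification that Lemma~\ref{L-pathv0} applies via Proposition~\ref{P-NSspan}(b) are the routine surrounding steps.
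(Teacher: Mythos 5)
Your proof is correct and follows essentially the same route as the paper's: both rest on Proposition~\ref{P-NSspan}(b) to invoke Lemma~\ref{L-pathv0} for a path from $v_1-v_2$ to $0$, use linearity (plus reducedness) to translate this into a path from $v_1$ to $v_2$, apply Proposition~\ref{P-edgepath} to see its label cannot be a codeword unless $v_1=v_2$, and finish with the shift argument of Remark~\ref{R-shiftNS}. The only difference is presentational: you argue directly by exhibiting a word in $\cC(\hat T)\setminus\cC$ via the boundary map $\pi$, whereas the paper runs the same ingredients as a proof by contradiction.
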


\begin{proof}
Let~$V_i$ be the $i$-th vertex space of $T_{(G,H,\cS)}$.
Let us first suppose that the vertices~$v,\,v'\in V_0$ are mergeable.
Then the edge-label sequence of each path through~$T$ starting at~$v\in V_0$ and ending in $v'\in V_0$ must be a
codeword in~$\cC$ (because after merging this path would be a cycle).
But then by linearity of the trellis the edge-label sequence of each path through~$T$ starting at~$v-v'\in V_0$ and
ending in $0\in V_0$ must be a codeword in~$\cC$ as well.
Due to Proposition~\ref{P-NSspan}(b) we may apply Lemma~\ref{L-pathv0} and conclude that such a path does indeed exist in~$T$.
But then Proposition~\ref{P-edgepath} implies that this path is a cycle. Hence $v-v'=0$.
All this shows that no vertices in~$V_0$ are mergeable.
Now the non-mergeability of vertices in any~$V_i$ follows from the shift property of~$T$ as described in
Remark~\ref{R-shiftNS}.
\end{proof}

It should be noted that if the displacement matrix~$D$ is not as in Definition~\ref{D-NSspan}, then
the trellises $T_{(G,H,D)}$ as in Theorem~\ref{T-NS} are, in general, mergeable even if the starting points (resp., end points)
of the spans are distinct.
An example is given in \cite[Exa.~9, Fig.~14]{NoSh06}.
One can easily see how to merge that trellis.
Moreover, that trellis has SCP $(2,2,2,2)$ and therefore Proposition~\ref{P-NSspan}(b) implies that the underlying displacement
matrix is not as in Definition~\ref{D-NSspan} and thus the trellis is not a BCJR-trellis in our sense.
Furthermore, that trellis does not contain any paths connecting a nonzero vertex in~$V_0$ with
$0\in V_0$ and therefore the main argument of the proof above does not apply.

At this point it should be mentioned that BCJR-trellises are in general not one-to-one, not even if the starting points
(resp.~endpoints) of the spans are distinct.
This is illustrated by the following example taken from \cite[Exa.~8]{NoSh06}.

\begin{exa}\label{E-BCJRnotone}
Let $\cC=\im G=\ker H\T$, where
\[
    G=\begin{pmatrix}1&0&1\\1&1&0\end{pmatrix}\in\F_2^{2\times3}, \text{ and }
    \cS=\big[(0,2],\,(1,0]\big], \ H=\begin{pmatrix}1&1&1\end{pmatrix}.
\]
The BCJR-trellis $T:=T_{(G,H,\cS)}$ has edge spaces $E_i=\im(N_i,G_i,N_{i+1})$ given by
\[
  E_0=\im\!\!\left(\!\!\begin{array}{c|c|c}
    0&1&1\\1&1&0\end{array}\!\!\right),\
  E_1=\im\!\!\left(\!\!\begin{array}{c|c|c}
    1&0&1\\0&1&1\end{array}\!\!\right),\
  E_2=\im\!\!\left(\!\!\begin{array}{c|c|c}
    1&1&0\\1&0&1\end{array}\!\!\right).
\]
This shows that $1\edge{0}1\edge{0}1\edge{0}1$ is a non-trivial cycle in~$T$ with the zero codeword as
edge-label sequence; see also the figure below.
Hence~$T$ is not one-to-one.
Let us also consider the associated product trellis $\hat{T}=T_{G,\cS}$.
In this case the edge spaces are given by
\[
  \hat{E}_0=\im\!\!\left(\!\!\begin{array}{cc|c|cc}
    0&0&1&1&0\\0&1&1&0&0\end{array}\!\!\right),\
  \hat{E}_1=\im\!\!\left(\!\!\begin{array}{cc|c|cc}
    1&0&0&1&0\\0&0&1&0&1\end{array}\!\!\right),\
  \hat{E}_2=\im\!\!\left(\!\!\begin{array}{cc|c|cc}
    1&0&1&0&0\\0&1&0&0&1\end{array}\!\!\right).
\]
From Theorem~\ref{T-KVtrellis} we know that~$\hat{T}$ is one-to-one; see figure below.
However, the trellis~$\hat{T}$ is mergeable at time $i=2$ and indeed, merging the states $(11),\,(00)$ and
the states $(10),\,(01)$ (that is, replacing~$V_2$ by the quotient space $V_2/\im(11)$)
leads to the (not one-to-one) trellis~$T=T_{(G,H,\cS)}$.
In the following graphs as well as in all later graphs a solid line denotes an edge with label~$1$, while dashed lines denote the label~$0$.
\\
\mbox{}\hspace*{1.5cm}\includegraphics[height=3.7cm]{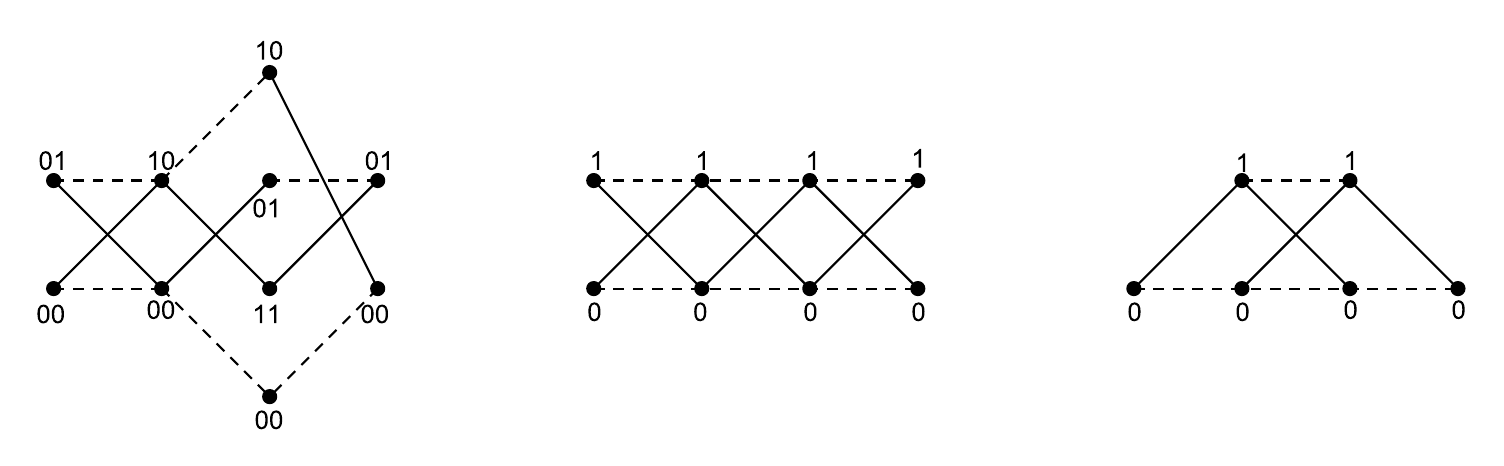}
\\[-1ex]
\mbox{}\hspace*{3cm}{\tiny [Trellis~$\hat{T}$]} \hspace*{2.5cm} {\tiny [Trellis~$T$]}\hspace*{2.2cm}{\tiny [Conventional Minimal Trellis]}

The trellis~$T$, being a BCJR-trellis, is non-mergeable.
However, it is not minimal as the conventional minimal trellis of~$\cC$ at the very right shows.
It should also be noted that~$\hat{T}$ is not (isomorphic to) a KV-trellis.
Indeed, as one can easily verify, the unique characteristic pair of~$\cC$ with span matrix~$S$ is given by
\[
   (X,\,\cT)=\Big(\begin{pmatrix}0&1&1\\1&1&0\\1&0&1\end{pmatrix},\,\big[(1,2],\,(0,1],\,(2,0]\big]\Big),\
   S=\begin{pmatrix}0&0&1\\0&1&0\\1&0&0\end{pmatrix}.
\]
From this it is obvious that the SCP $(1,1,2)$ of~$\hat{T}$ cannot be attained by a KV-trellis and neither can the
SCP $(1,1,1)$ of the trellis~$T$.
Moreover, the trellises~$\hat{T}$ and~$T$ cannot be merged into a KV-trellis of the code.
\\
On the other hand, in Theorem~\ref{T-mergeProd} we will see that the process of merging a product trellis~$\hat{T}$ into the corresponding
BCJR-trellis~$T$ is always possible (if~$\hat{T}$ is mergeable at all) and can be achieved by taking a suitable quotient trellis.
\end{exa}

\begin{theo}\label{T-isomorphic}
Let $\cS:=\big[(a_l,b_l],\,l=1,\ldots,k\big]$ and $T:=T_{(G,H,\cS)}$ be as in Definition~\ref{D-NSspan}.
Let~$\hat{T}$ be the product trellis $\hat{T}:=T_{G,\cS}$.
Then the vertex spaces $V_i:=\im N_i$ and $\hat{V}_i:=\im M_i,\,i\in\cI,$ of~$T$ and~$\hat{T}$, respectively,
satisfy
\begin{equation}\label{e-VVhat}
  \dim V_i\leq \dim \hat{V}_i\text{ for }i\in\cI.
\end{equation}
If equality holds true for all~$i\in\cI$ in\eqnref{e-VVhat}, that is,~$T$ and~$\hat{T}$ have the same SCP, then
one has the following.
\begin{alphalist}
\item $T$ and~$\hat{T}$ are isomorphic, thus the product trellis~$\hat{T}$ is non-mergeable.
\item $a_1,\ldots,a_k$ are distinct and so are $b_1,\ldots,b_k$. Therefore, the formulas for the ECP and SCP given
in Proposition~\ref{P-SCPECP} apply.
\end{alphalist}
\end{theo}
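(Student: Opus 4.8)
The plan is to extract everything from the row structure of the vertex matrices $N_i$, using Proposition~\ref{P-NSspan}(b) as the only structural input. Write $\cL_i=\{l\in\underline{k}\mid i\in(a_l,b_l]\}$ as in\eqnref{e-Li}, and recall from\eqnref{e-MMat} that $M_i$ is the diagonal $0/1$ matrix with ones exactly in the positions $l\in\cL_i$, so that $\dim\hat{V}_i=\rk M_i=|\cL_i|$, whereas $\dim V_i=\rk N_i$. For the inequality\eqnref{e-VVhat} I would argue directly: Proposition~\ref{P-NSspan}(b) states that the $l$-th row of $N_i$ vanishes whenever $l\notin\cL_i$, so $N_i$ has at most $|\cL_i|$ nonzero rows and $\rk N_i\le|\cL_i|=\dim\hat{V}_i$. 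No computation is needed.

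Assume now equality holds for all~$i$. For part~(a) I would build the isomorphism from the common parametrisation by $\alpha\in\F^k$. Define $\phi_i\colon\hat{V}_i\to V_i$ by $\phi_i(\alpha M_i):=\alpha N_i$. This is well defined and linear: if $\alpha M_i=0$ then $\alpha_l=0$ for every $l\in\cL_i$, and since the rows of $N_i$ indexed outside $\cL_i$ are zero by Proposition~\ref{P-NSspan}(b), also $\alpha N_i=0$; thus $\{\alpha\mid\alpha M_i=0\}\subseteq\{\alpha\mid\alpha N_i=0\}$. By construction $\phi_i$ maps onto $V_i=\im N_i$, so under the equality hypothesis $\dim V_i=\dim\hat{V}_i$ it is a linear isomorphism. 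To promote the $\phi_i$ to a trellis isomorphism, observe that $\Phi_i:=(\phi_i,\id,\phi_{i+1})$ sends the edge $(\alpha M_i,\alpha G_i,\alpha M_{i+1})$ to $(\alpha N_i,\alpha G_i,\alpha N_{i+1})$, hence carries $\hat{E}_i=\im(M_i,G_i,M_{i+1})$ onto $E_i=\im(N_i,G_i,N_{i+1})$; as $\Phi_i$ is a bijection of $\hat{V}_i\times\F\times\hat{V}_{i+1}$ onto $V_i\times\F\times V_{i+1}$, an edge lies in $\hat{E}_i$ if and only if its image lies in $E_i$. This is precisely Definition~\ref{D-TBTbasics}(d), so $T$ and~$\hat{T}$ are isomorphic. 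Non-mergeability of $\hat{T}$ then comes for free: $T$ is non-mergeable by Corollary~\ref{C-NSnonmerg}, an isomorphism turns a merging of two vertices of $\hat{V}_i$ into the merging of their images in $V_i$, and the two merged trellises are again isomorphic, so one represents~$\cC$ if and only if the other does.

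Part~(b) is where the real work lies, and I expect it to be the main obstacle. The equality $\rk N_i=|\cL_i|$ forces the rows of $N_i$ indexed by $\cL_i$ to be linearly independent, in particular all nonzero. The key is to read the recursion\eqnref{e-NMat} one row at a time: the $l$-th row of $N_{i+1}$ equals the $l$-th row of $N_i$ plus $g_{li}H_i$. If $a_l=i$, then $l\notin\cL_i$ but $l\in\cL_{i+1}$, the row of $N_i$ is zero, and the row of $N_{i+1}$ equals $g_{l,a_l}H_i$, a scalar multiple of~$H_i$ with $g_{l,a_l}\ne0$. Hence two coincident starts $a_l=a_m=i$ with $l\ne m$ would make rows~$l$ and~$m$ of $N_{i+1}$ two scalar multiples of the same vector~$H_i$, thus linearly dependent, while $l,m\in\cL_{i+1}$; this contradicts $\rk N_{i+1}=|\cL_{i+1}|$, so the $a_l$ are distinct. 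The endpoints are symmetric: if $b_l=i$ then $l\in\cL_i$, $l\notin\cL_{i+1}$, the row of $N_{i+1}$ is zero, so the recursion gives the $l$-th row of $N_i$ as $-g_{l,b_l}H_i$, again a nonzero multiple of~$H_i$; coincident endpoints $b_l=b_m=i$ then produce two dependent rows within $\cL_i$, contradicting $\rk N_i=|\cL_i|$. Thus $b_1,\ldots,b_k$ are distinct and Proposition~\ref{P-SCPECP} applies. The only point requiring extra care is a degenerate single-coordinate span $(a_l,a_l]$: such a generator is a nonzero multiple of a unit vector lying in~$\cC$, which forces the corresponding row $H_{a_l}$ of $H^{\sf T}$ to vanish; the multiples of~$H_i$ above then become zero, which only makes the rows indexed by $\cL_{i+1}$ (resp.~$\cL_i$) more patently dependent, so the conclusion is unchanged.
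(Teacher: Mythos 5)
Your proof is correct and follows essentially the same route as the paper: the inequality from the vanishing of the rows of $N_i$ outside $\cL_i$, the isomorphism $\alpha M_i\mapsto\alpha N_i$ for part~(a), and for part~(b) the observation that coincident starting (resp.\ end) points force two rows of $N_{a_l+1}$ (resp.\ $N_{b_l}$) to be scalar multiples of the same column $H_j$, contradicting the linear independence of the rows indexed by $\cL_{a_l+1}$ (resp.\ $\cL_{b_l}$) that equality in\eqnref{e-VVhat} guarantees. The only micro-omission is the doubly degenerate case $a_l=b_l=a_m=b_m$, where neither index lies in $\cL_{i+1}$ and your ``more patently dependent'' remark has nothing to act on; the paper dispatches it by noting that two generators supported on the same single coordinate cannot be linearly independent.
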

It is worth noting that $T\preceq_{\Theta}\hat{T}$ with respect to the $\Theta$-ordering defined in \cite[p.~2084]{KoVa03}.
While we have seen in Proposition~\ref{P-SCPECP} that if both $a_1,\ldots,a_k$ and $b_1,\ldots,b_k$  are distinct,
the formulas for the ECP do always apply to product trellises, this is not the case for the BCJR-trellis;
see for instance Example~\ref{E-BCJRnotone}.

\begin{proof}
First of all, by Proposition~\ref{P-NSspan}(b) and the very definition of the matrices~$M_i$ in\eqnref{e-MMat}
we have
\begin{equation}\label{e-colNM}
   \col(N_i)\subseteq\col(M_i),
\end{equation}
where $\col(M):=\{Mw\mid w\in \F^r\}$ denotes the column space of the matrix $M\in\F^{k\times r}$.
Hence $\rk(N_i)\leq\rk(M_i)$ and this proves\eqnref{e-VVhat}.
\\
Let us now assume that we have equality in\eqnref{e-VVhat} for all $i\in\cI$.
\\
(a) The inclusions\eqnref{e-colNM} imply
\begin{equation}\label{e-colNMeq}
   \col(N_i)=\col(M_i) \text{ for }i\in\cI
\end{equation}
and hence $\ker N_i=\ker M_i$, where, as usual the
kernel denotes the left kernel and thus is simply the orthogonal of the column space with respect to the standard inner product on~$\F^k$.
As a consequence, the map $\phi_i: \hat{V}_i\longrightarrow V_i,\quad \alpha M_i\longmapsto \alpha N_i$
is a well-defined isomorphism.
But then the maps~$\phi_i$ give rise to an isomorphism between~$\hat{T}$ and~$T$ because
the edge spaces of~$\hat{T}$ are given by $\hat{E}_i=\im(N_i,G_i, N_{i+1})$
while those of~$T$ are $E_i=\im(M_i,G_i, M_{i+1})$.
The non-mergeability of~$\hat{T}$ follows from Corollary~\ref{C-NSnonmerg}.
\\
(b) Let the index sets~$\cL_i$ be as in\eqnref{e-Li} again. Then $\rk(M_i)=|\cL_i|$ by Proposition~\ref{P-SCPECP}.
Since by assumption $\rk(N_i)=\rk(M_i)$ Proposition~\ref{P-NSspan}(b) implies
\begin{equation}\label{e-linind}
  \text{the rows of~$N_i$ with indices from~$\cL_i$ are linearly independent.}
\end{equation}
Assume first that for some $l\not=m$ we have $a_l=a_m$.
Put $i=a_l+1=a_m+1$.
Then the $l$-th row of $N_i$ is given by
\[
   \sum_{j=a_l}^{n-1}g_{lj}H_j+\sum_{j=0}^{i-1}g_{lj}H_j=g_{la_l}H_{a_l}
\]
and likewise the $m$-th row of $N_i$ is
\begin{equation}\label{e-Nim}
  \sum_{j=a_m}^{n-1}g_{mj}H_j+\sum_{j=0}^{i-1}g_{mj}H_j=g_{ma_l}H_{a_l}.
\end{equation}
Hence these two rows are linearly dependent.
Observe that if $a_l\not=b_l$ and $a_m\not=b_m$ then $l,\,m\in\cL_i$ and therefore this linear dependence contradicts\eqnref{e-linind}.
If $a_l=b_l=a_m\not=b_m$, then $(a_l,b_l]=\emptyset$ and $m\in\cL_i$, but $l\not\in\cL_i$. However, in this case
the row~$g_l$ is nonzero only at the $a_l$-th position and this forces $H_{a_l}=0$. Again,\eqnref{e-Nim} contradicts\eqnref{e-linind}.
The case $b_l\not=a_l=a_m=b_m$ is symmetric and the case $a_l=b_l=a_m=b_m$ cannot occur due to the linear independence
of the rows $g_l,\,g_m$.
\\
Finally assume that for some $l\not=m$ we have $b_l=b_m$.
Then for $i=b_l=b_m$ the $l$-th row of~$N_i$ is
$\sum_{j=a_l}^{n-1}g_{lj}H_j+\sum_{j=0}^{i-1}g_{lj}H_j=-g_{lb_l}H_{b_l}$
and likewise the $m$-th row of $N_i$ is $-g_{mb_l}H_{b_l}$. Now we may argue as in the first case.
\end{proof}

In the proof we used the fact that if equality holds in\eqnref{e-VVhat} then we have\eqnref{e-linind}.
With the aid of Proposition~\ref{P-NSspan}(b) this yields
\begin{equation}\label{e-kerNi}
  \ker N_i=\{\alpha\in\F^k\mid \alpha_l=0\text{ for }l\in\cL_i\}.
\end{equation}
Now we are in a position to prove that KV-trellises are BCJR-trellises.

\begin{theo}\label{T-KVNS}
Let $(X,\cT)$ be a characteristic pair of~$\cC$ as in Definition~\ref{D-CharMat}.
Let $G\in\F^{k\times n}$ consist of~$k$ linearly independent rows of~$X$ and let $\cS$ be the list of
corresponding spans taken from~$\cT$.
Then the trellises $T_{G,\cS}$ and $T_{(G,H,\cS)}$ are isomorphic.
As a consequence, $\KV$-trellises are BCJR-trellises and thus non-mergeable.
\end{theo}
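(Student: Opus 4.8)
The plan is to reduce everything to a statement about state complexity profiles and then invoke Theorem~\ref{T-isomorphic}. That theorem already provides $\dim V_i\leq\dim\hat V_i$ for all $i\in\cI$ together with the sought isomorphism \emph{in the case of equality}, so it suffices to prove that $T_{(G,H,\cS)}$ and the product trellis $\hat T:=T_{G,\cS}$ have the same SCP, i.e.\ that $\rk N_i=\rk M_i=|\cL_i|$ for every $i$. By Proposition~\ref{P-NSspan}(b) the only rows of $N_i$ that can be nonzero are those indexed by $\cL_i$, so $\rk N_i\leq|\cL_i|$ holds automatically; the entire problem collapses to showing that these $|\cL_i|$ rows are \emph{linearly independent}.

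First I would reduce to the case $i=0$. By Remark~\ref{R-shiftNS} the vertex matrix $N_i$ of $T_{(G,H,\cS)}$ agrees with the matrix $N_0$ of the $i$-fold cyclically shifted trellis, while by Remark~\ref{R-CharMat}(b) the shift sends a characteristic pair of $\cC$ to one of $\sigma^i(\cC)$. Since the shift is a linear bijection it preserves linear independence of the selected rows and carries $\cL_i$ onto the set of circular spans of the shifted selection. Thus it is enough to prove that the rows $d_l$ of the displacement matrix $D$ indexed by the circular spans $l\in\cL_0$ are linearly independent.

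The heart of the argument is then a codeword computation. Suppose $\sum_{l\in\cL_0}c_l d_l=0$. Expanding $d_l=\sum_{j=a_l}^{n-1}g_{lj}H_j$ and collecting the coefficients of the rows $H_j$ of $H\T$, this relation reads $w H\T=0$, where $w=\sum_{l\in\cL_0}c_l\,g_l|_{[a_l,\,n-1]}$ is the sum of the generators truncated to positions $\geq a_l$; hence $w\in\ker H\T=\cC$. Assume $c\neq0$ and put $a^\ast=\min\{a_l\mid l\in\cL_0,\ c_l\neq0\}$. Because the starting points $a_1,\ldots,a_n$ are distinct (Definition~\ref{D-CharMat}(iii)) there is a unique index $l^\ast$ attaining this minimum, so $w_{a^\ast}=c_{l^\ast}g_{l^\ast,a^\ast}\neq0$ while $w_j=0$ for $j<a^\ast$; as every circular span has $a_l\geq1$, the codeword $w$ is supported in $[a^\ast,n-1]$ and has its ordinary leading (leftmost) nonzero entry at $a^\ast$.

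The contradiction now comes from the characteristic structure, and this is the step I expect to be the main obstacle, since it is where the essential content of the characteristic pair enters. The $n$ distinct starting points exhaust $\cI$, and by Remark~\ref{R-CharMat}(iii) the $k$ linear spans among them correspond to an MSGM of $\cC$; consequently the linear-span starts are exactly the positions at which some nonzero codeword can have its leading nonzero entry, i.e.\ the pivots of $\cC$, cf.~\cite[Thm.~6.11]{McE96}, and the $n-k$ circular-span starts are precisely the non-pivots. But $a^\ast$ is a circular-span start, hence a non-pivot, contradicting the fact that the nonzero codeword $w$ has its leading nonzero at $a^\ast$. Therefore $c=0$, the rows are linearly independent, the two SCPs coincide, and Theorem~\ref{T-isomorphic}(a) yields $T_{G,\cS}\cong T_{(G,H,\cS)}$. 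The consequence is then immediate: every KV-trellis has the form $T_{G,\cS}$ for a selection of linearly independent rows from a characteristic pair, hence is isomorphic to the BCJR-trellis $T_{(G,H,\cS)}$, which is non-mergeable by Corollary~\ref{C-NSnonmerg}; as non-mergeability is invariant under isomorphism, KV-trellises are non-mergeable.
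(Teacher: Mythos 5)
Your proposal is correct and follows essentially the same route as the paper: reduce to $i=0$ via the shift remarks, show that the circular-span rows of the displacement matrix are linearly independent by observing that a vanishing combination forces a truncated sum of generators to lie in $\ker H\T=\cC$, and then derive a contradiction from the predictable span property of the MSGM contained in $X$ together with the distinctness of the $n$ starting points. The only (harmless) difference is that the paper proves nonsingularity of the full $(n-k)\times(n-k)$ matrix $Z$ built from all circular-span rows of $X$ and then restricts to the selected rows, whereas you argue directly on the selected rows.
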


\begin{proof}
The last statement follows from Corollary~\ref{C-NSnonmerg}.
For the first part it suffices to prove that $\rk(M_i)=\rk(N_i)$ for all $i\in\cI$.
Indeed, then the result is immediate with Theorem~\ref{T-isomorphic}.
We will proceed in several steps.
\\
1) By Definition~\ref{D-CharMat}(iv) we may assume without loss of generality that the first~$k$ rows of~$X$ have linear span
(thus $a_l\leq b_l$ for $l=1,\ldots,k$) and the last $n-k$ rows have circular spans.
Then those first~$k$ rows $x_1,\ldots,x_k$ are linearly independent and form an MSGM of~$\cC$, see \cite[Def.~6.2, Thm.~6.11]{McE96}.
Define the matrix $Z\in\F^{(n-k)\times(n-k)}$ via
\begin{equation}\label{e-Z}
    Z=\begin{pmatrix} z_1\\ \vdots\\ z_{n-k}\end{pmatrix},\ \text{ where }
    z_l=\sum_{j=a_{k+l}}^{n-1}x_{k+l,j}H_j
\end{equation}
(where $x_{k+l}=(x_{k+l,0},\ldots,x_{k+l,n-1})$ is the $(k+l)$-th row of~$X$).
We claim that~$Z$ is nonsingular.
\\
In order to see this, assume $\beta Z=0$ for some $\beta=(\beta_1,\ldots,\beta_{n-k})\in\F^{n-k}$.
Hence
\begin{equation}\label{e-betaZ}
  \sum_{l=1}^{n-k}\beta_l\sum_{j=a_{k+l}}^{n-1}x_{k+l,j}H_j=0.
\end{equation}
For $l=1,\ldots,n-k$ define $\hat{x}_{k+l}:=(0,\ldots,0,x_{k+l,a_{k+l}},\ldots,x_{k+l,n-1})\in\F^n$, that is,
$\hat{x}_{k+l}$ is the vector with first $a_{k+l}$ entries equal to zero and the last ones identical to those of $x_{k+l}$.
Notice that by definition of the span of a vector $x_{k+l,a_{k+l}}\not=0$ and remember that $a_{k+1},\ldots,a_{n}$
are distinct by definition of a characteristic pair.
Therefore, $\hat{x}_{k+1},\ldots,\hat{x}_{n}$ are linearly independent.
Moreover,\eqnref{e-betaZ} implies
\[
  0=\sum_{l=1}^{n-k}\beta_l\sum_{j=0}^{n-1}\hat{x}_{k+l,j}H_j=\sum_{j=0}^{n-1}\big(\sum_{l=1}^{n-k}\beta_l\hat{x}_{k+l,j}\big)H_j.
\]
But this means that $\tilde{x}\in\ker H\T=\cC$, where $\tilde{x}:=\sum_{l=1}^{n-k}\beta_l\hat{x}_{k+l}$.
If we can show that $\tilde{x}=0$, then the linear independence of $\hat{x}_{k+1},\ldots,\hat{x}_{n}$
implies~$\beta=0$, which in turn proves that~$Z$ is nonsingular.
We proceed by contradiction and assume $\tilde{x}\not=0$.
Then $\tilde{x}$ has a unique linear span, say $(\tilde{a},\tilde{b}]$, where,
by definition of~$\tilde{x}$,
\begin{equation}\label{e-tildea1}
  \tilde{a}=\min\{a_{k+l}\mid \beta_l\not=0\}.
\end{equation}
On the other hand, the first~$k$ rows of~$X$ form an MSGM of ~$\cC$ and therefore the codeword~$\tilde{x}$ is of the form
$\tilde{x}=\sum_{l=1}^k\gamma_l x_l$
for some $\gamma_l\in\F$.
Now the predictable span property of an MSGM \cite[Lem.~6.6]{McE96} implies that
$\tilde{a}=\min\{a_l\mid \gamma_l\not=0\}$.
Along with\eqnref{e-tildea1} this contradicts the fact that $a_1,\ldots,a_{n}$ are distinct.
Thus $\tilde{x}=0$ as desired.
\\
2)
Consider now the matrix~$D$ as defined in\eqnref{e-DMat}.
The rows of~$D$ corresponding to linear spans are zero (see Proposition~\ref{P-NSspan}), while the
other rows, corresponding to circular spans, are certain rows of the matrix~$Z$ in\eqnref{e-Z}.
Hence the nonsingularity of~$Z$ implies that those rows are linearly independent.
All this shows that $\rk(D)=\rk(N_0)=\big|\{l \mid (a_l,b_l]\text{ circular}\}\big|=|\cL_0|=\rk(M_0)$, by virtue of
Proposition~\ref{P-SCPECP}.
\\
3)
From~2) we have $\rk(N_0)=\rk(M_0)$.
Now the identities $\rk(N_i)=\rk(M_i)$ for general~$i\in\cI$ follow from applying the shift~$\sigma$ and making use of
Remarks~\ref{R-shiftKV},~\ref{R-CharMat}(b), and~\ref{R-shiftNS}.
\end{proof}

Theorem~\ref{T-KVNS} shows that the set of $\KV$-trellises is a subset of the set of (one-to-one) BCJR-trellises.
The following example illustrates that this containment is proper, that is, the set of one-to-one BCJR-trellises is
strictly larger than the set of KV-trellises.
As a consequence, the class of non-mergeable one-to-one trellises is strictly larger than the class of KV-trellises.
At the end of this section we will sketch a proof showing that the class of non-mergeable (one-to-one) trellises coincides
with the class of (one-to-one) BCJR-trellises.

\begin{exa}\label{E-nonchiNS}
Consider
\[
    G=\begin{pmatrix}0&1&1&1&0\\1&0&0&1&0\\0&1&1&0&1\end{pmatrix}\in\F_2^{3\times 5}
    \text{ with span list }
    \cS=\big[(1,3],\, (3,0],\,(2,1]\big].
\]
Since the rows of~$G$ are linearly independent, the product trellis $T:=T_{G,\cS}$ is one-to-one.
Moreover, it has vertex spaces $V_i=\im M_i$, where
\[
  M_0=\!\begin{pmatrix}0\!&\!0\!&\!0\\0\!&\!1\!&\!0\\0\!&\!0\!&\!1\end{pmatrix}\!,\,
  M_1=\!\begin{pmatrix}0\!&\!0\!&\!0\\0\!&\!0\!&\!0\\0\!&\!0\!&\!1\end{pmatrix}\!,\,
  M_2=\!\begin{pmatrix}1\!&\!0\!&\!0\\0\!&\!0\!&\!0\\0\!&\!0\!&\!0\end{pmatrix}\!,\,
  M_3=\!\begin{pmatrix}1\!&\!0\!&\!0\\0\!&\!0\!&\!0\\0\!&\!0\!&\!1\end{pmatrix}\!,\,
  M_4=\!\begin{pmatrix}0\!&\!0\!&\!0\\0\!&\!1\!&\!0\\0\!&\!0\!&\!1\end{pmatrix}\!.
\]
Hence the SCP of~$T$ is given by $s=(2,1,1,2,2)$.
One can straightforwardly verify that a characteristic pair of the code $\cC=\im G\subseteq\F_2^5$ is given by
$(X,\cT)$ with span matrix~$S$, where
\[
  X=\begin{pmatrix}1&1&1&0&0\\0&1&1&1&0\\0&0&0&1&1\\1&0&0&0&1\\0&1&1&1&0\end{pmatrix},\;
  \cT=\big[(0,2],\,(1,3],\,(3,4],\,(4,0],\,(2,1]\big],\;
  S=\begin{pmatrix}0&1&1&0&0\\0&0&1&1&0\\0&0&0&0&1\\1&0&0&0&0\\1&1&0&1&1\end{pmatrix}.
\]
Using Remark~\ref{R-chi-trellis} one can see directly that no three rows of~$X$ and their according spans will lead to the
SCP above.
Hence~$T$ is not isomorphic to a $\KV$-trellis.
Let us use the parity check matrix
\[
   H=\begin{pmatrix}1&0&1&1&1\\0&1&1&0&0\end{pmatrix}
\]
in order to compute the BCJR-trellis $T_{(G,H,\cS)}$ of~$\cC$.
Applying\eqnref{e-DMat} and\eqnref{e-NMat} leads to
\[
  N_0=\begin{pmatrix}0&0\\1&0\\0&1\end{pmatrix},\;
  N_1=\begin{pmatrix}0&0\\0&0\\0&1\end{pmatrix},\;
  N_2=\begin{pmatrix}0&1\\0&0\\0&0\end{pmatrix},\;
  N_3=\begin{pmatrix}1&0\\0&0\\1&1\end{pmatrix},\;
  N_4=\begin{pmatrix}0&0\\1&0\\1&1\end{pmatrix}.
\]
Hence $\col(N_i)=\col(M_i)$ for all~$i\in\{0,\ldots,4\}$ and due to Theorem~\ref{T-isomorphic}  the trellis~$T$ is isomorphic to the BCJR-trellis
$T_{(G,H,\cS)}$ and, as a consequence,~$T$ is non-mergeable.
This can also be seen directly from the concrete trellis~$T_{(G,H,\cS)}$ in the figure below.
Let us also briefly verify the statements about the ECP as given in Proposition~\ref{P-SCPECP}.
The edge spaces $E_i$ are given by $E_i=\im(N_i,G_i,N_{i+1})$, where~$G_i$ is the $i$-th column of~$G$.
For sake of space we display the matrix $(N_0|G_0|N_1|\ldots|N_4|G_4|N_0)$.
It is given by
\[
   \left(\!\!\begin{array}{cc|c|cc|c|cc|c|cc|c|cc|c|cc}
   0&0&0&0&0&1&0&1&1&1&0&1&0&0&0&0&0\\
   1&0&1&0&0&0&0&0&0&0&0&1&1&0&0&1&0\\
   0&1&0&0&1&1&0&0&1&1&1&0&1&1&1&0&1\end{array}\!\!\right)
\]
and shows that the ECP is given by $e=(2,2,2,3,2)$.
Hence $e=s+(0,1,1,1,0)$ and the latter is indeed the indicator function of the set $\cA=\{1,2,3\}$ of
starting points of the spans.
Likewise, we have $e=(1,1,2,2,2)+(1,1,0,1,0)$ and the latter is the indicator function of $\cB=\{0,1,3\}$.
All this illustrates the formulas given in Proposition~\ref{P-SCPECP}.
\\
Finally, from the graph below it can easily be checked that the dual trellis $T_{(H,G,N_0^{\sf T})}$, representing $\cC^{\perp}=\im H$
(see Proposition~\ref{P-NSdual}), is mergeable at time $i=0$.
As a consequence, the dual of a BCJR-trellis need not be a BCJR-trellis.
Even more, the dual graph is disconnected and there does not exist a path from $(011)\in\im N_0^{\sf T}$ to $(000)\in\im N_0^{\sf T}$.
Hence the dual trellis does not satisfy Lemma~\ref{L-pathv0}.
\\
\mbox{}\hspace*{1cm} \includegraphics[height=3.7cm]{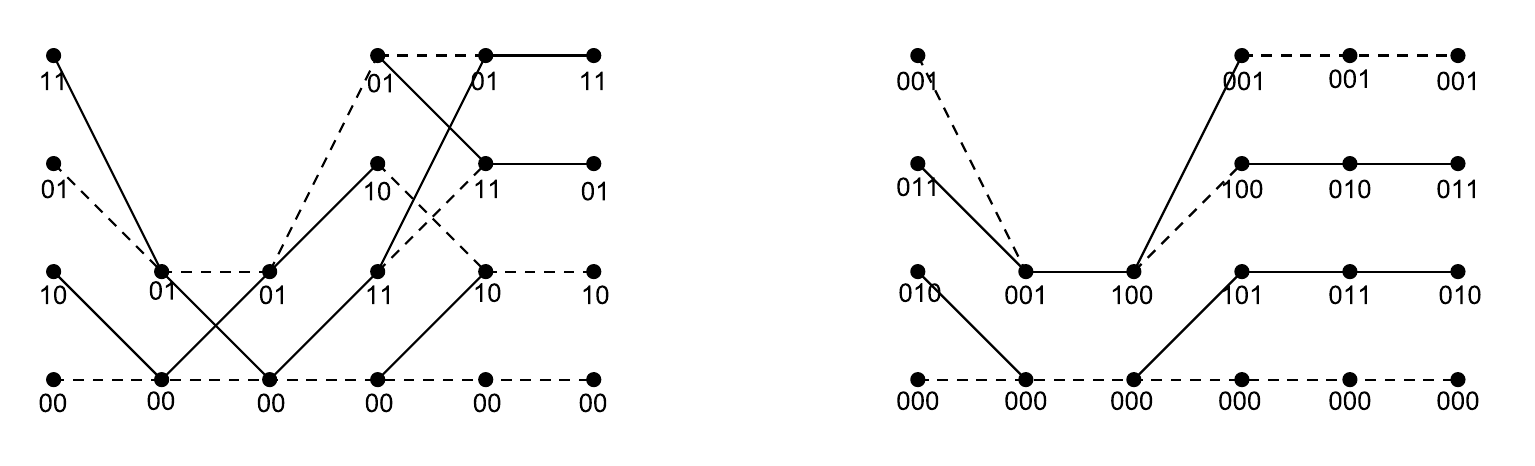}
     \\[-1ex]
\mbox{}\hspace*{2.4cm}{\tiny [Trellis~$T_{(G,H,\cS)}$]} \hspace*{5cm} {\tiny [Trellis~$T_{(H,G,N_0^{\sf T})}$]}
\end{exa}

In \cite[Thm.~3.1]{NoSh06} it has been shown that a non-mergeable product trellis is isomorphic to
the corresponding BCJR-trellis.
The following theorem generalizes this result.
It shows that every (one-to-one) product trellis $T_{G,\cS}$ can be merged to the corresponding BCJR-trellis
and merging is accomplished by taking suitable quotients of the vertex spaces.
An instance of this process has been given in Example~\ref{E-BCJRnotone}.
It also reminds us to be aware of the fact that the resulting BCJR-trellis may not be one-to-one.

\begin{theo}\label{T-mergeProd}
Let~$\cS$ be as in\eqnref{e-S} and consider the trellises $T_{(G,H,\cS)}$ and $T_{G,\cS}$ along with the
matrices~$N_i$ and~$M_i$ as in\eqnref{e-DMat}/\!\!\eqnref{e-NMat} and\eqnref{e-MMat}, respectively.
Then
\begin{alphalist}
\item there exist subspaces~$W_i\subseteq\im M_i$ such that $\ker M_i\oplus W_i=\ker N_i$,
\item the quotient trellis~$T'$ of~$T_{G,\cS}$ defined as $T'=(V',E')$, where $V'=\cup_{i=0}^{n-1}V'_i$ and $E'=\cup_{i=0}^{n-1}E'_i$ with
      \[
          V'_i= \im M_i/W_i,\quad E'_i=\{(\alpha M_i+W_i,\,\alpha G_i,\alpha M_{i+1}+W_{i+1})\mid \alpha\in\F^k\},
      \]
      is isomorphic to the BCJR-trellis $T_{(G,H,\cS)}$.
      In particular, $\cC(T')=\cC$.
\end{alphalist}
\end{theo}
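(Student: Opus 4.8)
The plan is to read everything off from the index sets $\cL_i$ of\eqnref{e-Li}, which simultaneously describe the diagonal matrices $M_i$ and control the rows of the BCJR-matrices $N_i$. By\eqnref{e-MMat} the $l$-th diagonal entry of $M_i$ equals $1$ precisely when $l\in\cL_i$, so $\ker M_i=\{\alpha\in\F^k\mid\alpha_l=0\text{ for }l\in\cL_i\}$ and $\im M_i=\{\alpha\in\F^k\mid\alpha_l=0\text{ for }l\notin\cL_i\}$ are complementary coordinate subspaces, whence $\F^k=\ker M_i\oplus\im M_i$. Proposition~\ref{P-NSspan}(b) says the $l$-th row of $N_i$ vanishes whenever $l\notin\cL_i$, so every $\alpha\in\ker M_i$ satisfies $\alpha N_i=0$; this yields the crucial inclusion $\ker M_i\subseteq\ker N_i$. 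For part~(a) I then set $W_i:=\ker N_i\cap\im M_i$ and apply the modular law for subspaces: since $\ker M_i\subseteq\ker N_i$,
\[
  \ker N_i=\ker N_i\cap(\ker M_i\oplus\im M_i)=\ker M_i\oplus(\ker N_i\cap\im M_i)=\ker M_i\oplus W_i,
\]
the sum being direct because $\ker M_i\cap\im M_i=\{0\}$. As $W_i\subseteq\im M_i$ by construction, this is exactly assertion~(a).

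For part~(b) I would construct the isomorphism one vertex space at a time. Let $\psi_i\colon\im M_i\to\im N_i$ be the map $\beta\mapsto\beta N_i$. Using the splitting $\F^k=\ker M_i\oplus\im M_i$ together with $\ker M_i\subseteq\ker N_i$, any $\alpha N_i\in\im N_i$ equals $\alpha''N_i=\psi_i(\alpha'')$ for the $\im M_i$-component $\alpha''$ of $\alpha$, so $\psi_i$ is onto; its kernel is $\im M_i\cap\ker N_i=W_i$. Thus $\psi_i$ induces a linear isomorphism $\overline{\psi}_i\colon\im M_i/W_i\to\im N_i$, that is, an isomorphism $V'_i\to V_i$ between the vertex spaces of the quotient trellis $T'$ and of the BCJR-trellis $T_{(G,H,\cS)}$.

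It remains to verify that $(\overline{\psi}_i)_{i\in\cI}$ is a trellis isomorphism in the sense of Definition~\ref{D-TBTbasics}(d). The computational heart is the identity $(\alpha M_i)N_i=\alpha N_i$ for all $\alpha\in\F^k$: the vector $\alpha M_i$ is supported on $\cL_i$, and the rows of $N_i$ indexed outside $\cL_i$ are zero, so the two products agree. Therefore $\overline{\psi}_i(\alpha M_i+W_i)=\alpha N_i$, and the pair $(\overline{\psi}_i,\overline{\psi}_{i+1})$ sends a generic edge
\[
  (\alpha M_i+W_i,\,\alpha G_i,\,\alpha M_{i+1}+W_{i+1})\in E'_i
  \longmapsto
  (\alpha N_i,\,\alpha G_i,\,\alpha N_{i+1})\in E_i,
\]
which is precisely the defining parametrization of $E_i=\im(N_i,G_i,N_{i+1})$. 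Both edge sets are swept out by $\alpha\in\F^k$ in matching fashion and the $\overline{\psi}_i$ are bijective on vertices, so this correspondence preserves labels and endpoints and its inverse is obtained by reading the parametrization backwards; hence the equivalence required in Definition~\ref{D-TBTbasics}(d) holds. Consequently $T'\cong T_{(G,H,\cS)}$, and since isomorphic trellises represent the same code, $\cC(T')=\cC(T_{(G,H,\cS)})=\cC$ by Theorem~\ref{T-NS}.

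I do not expect a genuine obstacle here: once Proposition~\ref{P-NSspan}(b) supplies $\ker M_i\subseteq\ker N_i$, the whole argument reduces to the linear algebra of coordinate subspaces. The points that need the most care are the internal direct sum in~(a) — where $\ker M_i\cap\im M_i=\{0\}$ is essential — and, for~(b), the identity $(\alpha M_i)N_i=\alpha N_i$, since it is exactly this that forces the induced vertex isomorphisms to be compatible with the edge spaces and not merely with the state spaces.
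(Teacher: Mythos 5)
Your proposal is correct and follows essentially the same route as the paper: both rest on Proposition~\ref{P-NSspan}(b) (equivalently the identity $M_iN_i=N_i$), the splitting $\F^k=\ker M_i\oplus\im M_i$ coming from $M_i^2=M_i$, and the induced vertex isomorphism $\alpha M_i+W_i\mapsto\alpha N_i$, which then matches the two $\alpha$-parametrizations of the edge spaces. The only cosmetic difference is in part~(a), where you exhibit $W_i=\ker N_i\cap\im M_i$ directly via the modular law, while the paper obtains the same subspace as $W'_iM_i$ for an arbitrary complement $W'_i$ of $\ker M_i$ in $\ker N_i$.
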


\begin{proof}
(a) First notice that by the very definition of~$M_i$ in\eqnref{e-MMat} along with Proposition~\ref{P-NSspan}(b) we have $M_iN_i=N_i$ and thus
$\ker M_i\subseteq\ker N_i$.
Hence there exists a subspace $W_i'\subseteq\F^k$ such that $\ker M_i\oplus W'_i=\ker N_i$.
Define the subspace $W_i:=W'_iM_i:=\{\alpha M_i\mid \alpha\in W'_i\}\subseteq\im M_i$.
Notice that $(M_i)^2=M_i$, thus $M_i$ defines a projection and hence $\ker M_i\oplus \im M_i=\F^k$.
This implies $\ker M_i\cap W_i=\{0\}$.
The inclusion $W_i\subseteq\ker N_i$ follows immediately from the identity $M_iN_i=N_i$ along with the fact that $W'_i\subseteq\ker N_i$.
Now let $\beta\in\ker N_i$. Then there exist $\alpha\in\ker M_i$ and $\gamma\in W'_i$ such that
$\beta=\alpha+\gamma$.
But then $\beta=\alpha+\gamma(I-M_i)+\gamma M_i$ and $\gamma M_i\in W_i$ while $\alpha+\gamma(I-M_i)\in\ker M_i$
due to $(I-M_i)M_i=0$.
All this proves~(a).
\\
(b) First of all, by virtue of~(a) we have
$\dim V'_i=\dim(\im M_i)-\dim W_i=k-(\dim\ker M_i+\dim W_i)=k-\dim\ker N_i=\rk N_i$.
Moreover, it is easy to see that
\[
  \psi_i:\; V'_i\longrightarrow \im N_i,\quad \alpha M_i+W_i\longmapsto\alpha N_i
\]
is a well-defined, surjective homomorphism and thus an isomorphism between the vertex spaces $V'_i$ of~$T'$ and
$\im N_i$ of $T_{(G,H,\cS)}$.
But then it is clear from the very definition of the edge spaces $E'_i$ above and $E_i$ as in Theorem~\ref{T-NS}
that $T'$ and $T_{(G,H,\cS)}$ are isomorphic as defined in~(e) in the introduction.
\end{proof}

We close this section with the following observation.

\begin{rem}\label{R-nonmerg}
In this remark we sketch a proof showing that the class of linear\footnote{Recall from Definition~\ref{D-TBTbasics}(a)
that linear trellises are reduced.}, non-mergeable (not necessarily one-to-one)
trellises coincides with the class of BCJR-trellises as in Definition~\ref{D-NSspan}, but where the rows of~$G$ maybe linearly dependent.
\\
First of all, let $G\in\F^{k\times n}$ where $\rk G<k$.
It not hard to show that Theorem~\ref{T-KVtrellis} remains valid (with the exception that~$T_{G,\cS}$ is not one-to-one anymore)
and that BCJR-trellises as in Theorem~\ref{T-NS} and~\ref{D-NSspan} can be defined in the same way.
The resulting trellis $T_{(G,H,\cS)}$ still represents~$\cC=\im G=\ker H\T$.
Moreover, one can easily verify that Propositions~\ref{P-edgepath},~\ref{P-NSspan}, Corollary~\ref{C-NSnonmerg},
Theorem~\ref{T-isomorphic}, and Theorem~\ref{T-mergeProd} remain valid with the same proofs (with the exception
of part~\ref{T-isomorphic}(b); see also the footnote on Page~\pageref{PageFootnote}).
For sake of clarity let us call the resulting BCJR-trellises genBCJR-trellises.
\\
Now the above implies in particular that every genBCJR-trellis is non-mergeable.
For the converse, let~$T$ be a linear and non-mergeable trellis of~$\cC$.
By \cite[Thm.~4.2]{KoVa03}~$T$ is the product of elementary trellises $T_{x_l,(a_l,b_l]}, l=1,\ldots,m$, where
$\{x_1,\ldots, x_m\}$ is a (not necessarily linearly independent) generating set of~$\cC$ and where
$(a_l,b_l]$ is a generalized span of~$x_l$, that is, $x_{l,j}=0$ for $j\not\in[a_l,b_l]$ (notice the difference to our
definition of a span in Def.~\ref{D-vectorspan}).
By \cite[Lem.~4.3]{KoVa03} each factor $T_{x_l,(a_l,b_l]}$ is non-mergeable.
But then it is easy to see that $(a_l,b_l]$ is a span of~$x_l$ in the sense of Def.~\ref{D-vectorspan} (that is, $x_l$ is nonzero
at the endpoints~$a_l$ and~$b_l$) for otherwise the elementary trellis could be merged at time~$a_l$ or~$b_l$.
Hence $T=T_{G,\,\cS}$ in the sense of Def.~\ref{D-KVtrellis} but where the rows of~$G$ are possibly linearly dependent.
By Theorem~\ref{T-mergeProd}~$T$ can be merged to the corresponding genBCJR-trellis, hence, by non-mergeability,~$T$
is a genBCJR-trellis.
As a consequence, if~$T$ is one-to-one, then~$T$ is a BCJR-trellis.
\end{rem}

\Section{On a Duality Conjecture by Koetter/Vardy}\label{S-DualSelect}

Koetter/Vardy have shown the following.

\begin{lemma}[\mbox{\cite[Lem.~5.11]{KoVa03}}]\label{L-DualCharMat}
Let $\cC\subseteq\F^n$ and $\cC^{\perp}$ both be codes with support~$\cI$ and
let the characteristic span list of $\cC$ be given by $\cT=\big[(a_l,b_l],\,l=1,\ldots,n]$.
Then the characteristic span list of $\cC^{\perp}$ is given by $\big[(b_l,a_l],\,l=1,\ldots,n]$.
\end{lemma}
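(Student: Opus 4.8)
The plan is to reduce the claim, using the uniqueness of the characteristic span list and its behaviour under cyclic shifts, to a single statement about conventional (linear-span) generators, and then to settle that statement by a duality count followed by a short pairing argument. \emph{Reduction.} Since $\cC$ and $\cC^{\perp}$ both have support $\cI$, the code $\cC$ has no weight-one word (such a word would annihilate a whole coordinate of $\cC^{\perp}$), so each characteristic span $(a_l,b_l]$ is non-empty, i.e.\ $a_l\neq b_l$. By\eqnref{e-compint} this yields $(b_l,a_l]=\cI\setminus(a_l,b_l]$, so the proposed dual span matrix is obtained from $S$ by interchanging zeros and ones. Its columns then have weight $k=n-\dim\cC^{\perp}$, and its starting points $b_l$ and end points $a_l$ are distinct because those of $S$ are; hence the list $\big[(b_l,a_l]\big]$ already satisfies Definition~\ref{D-CharMat}(iii),(iv). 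By Theorem~\ref{T-CharMat} it therefore suffices to produce, for each $l$, a word $y_l\in\cC^{\perp}$ whose span is exactly $(b_l,a_l]$; the $n-k$ of these having linear span have distinct starting points, so they are automatically independent and form a basis of $\cC^{\perp}$.

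\emph{Shift to one conventional statement.} The cyclic shift $\sigma$ is an isometry for the standard inner product, so $\sigma^j(\cC)^{\perp}=\sigma^j(\cC^{\perp})$; combining this with Remark~\ref{R-CharMat}(b) and the shift mechanism used in the proof of Theorem~\ref{T-CharMat}, it is enough to identify the \emph{conventional} MSGM of $\cC^{\perp}$, i.e.\ its linear characteristic spans (Remark~\ref{R-CharMat}(c)), the circular spans then being read off from the shifted codes. Thus the crux is the claim that the conventional MSGM spans of $\cC^{\perp}$ are exactly the reversals $(b_l,a_l]$ of the circular characteristic spans $(a_l,b_l]$ of $\cC$.

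\emph{Proof of the crux.} First I would locate the starting and end positions. Using the shortening/puncturing duality $\dim\big(\cC^{\perp}\cap\F^{S}\big)=|S|-\dim\pi_S(\cC)$, where $\pi_S$ punctures $\cC$ to the coordinates in $S$, with $S=[p,n-1]$ and telescoping in $p$, the number of MSGM generators of $\cC^{\perp}$ starting at $p$ is $1$ precisely when no word of $\cC$ has last nonzero coordinate $p$. The conventional end positions of $\cC$ are the $b_l$ of its linear spans, and since $\{b_1,\dots,b_n\}=\cI$ their complement is $\{b_l\mid(a_l,b_l]\text{ circular}\}$; symmetrically the dual end positions are $\{a_l\mid(a_l,b_l]\text{ circular}\}$. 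It then remains to pair them correctly: for a circular $(a_l,b_l]$ the word $x_l$ vanishes on the open interval $(b_l,a_l)$ and has $x_{l,b_l}\neq0$, so if the dual generator $y$ starting at $b_l$ ended before $a_l$ then $0=\langle x_l,y\rangle=x_{l,b_l}y_{b_l}$ would force $y_{b_l}=0$, a contradiction; hence $y$ ends at or after $a_l$. As the dual end positions are exactly the distinct numbers $a_l$, a bijection with $\mathrm{end}(y)\ge a_l$ termwise must be the identity, so $y$ has span exactly $(b_l,a_l]$.

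I expect the genuine work to sit in the first step of the crux: converting the puncturing duality into the clean statement that the dual MSGM starting (resp.\ ending) positions are exactly the positions at which \emph{no} codeword of $\cC$ ends (resp.\ starts), and correctly using that $\{a_1,\dots,a_n\}$ and $\{b_1,\dots,b_n\}$ each exhaust $\cI$. Once the two sets are in hand, the inner-product pairing above and the shift-patching of the circular spans are routine.
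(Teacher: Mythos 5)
The paper does not prove this lemma at all: it is imported verbatim as \cite[Lem.~5.11]{KoVa03}, so there is no in-paper argument to compare against. Your proposal supplies a self-contained proof, and after checking each step I believe it is correct. The reduction is sound: the hypothesis that $\cC^{\perp}$ has support $\cI$ rules out weight-one words in $\cC$, hence empty spans, so\eqnref{e-compint} applies and the complemented span matrix automatically satisfies Definition~\ref{D-CharMat}(iii),(iv); by the uniqueness in Theorem~\ref{T-CharMat} it then indeed suffices to exhibit dual codewords with the prescribed spans, and the $n-k$ of them with linear span are independent by the distinct-leading-position argument, giving property~(i). The crux is also right: the shortening/puncturing duality $\dim(\cC^{\perp}\cap\F^{S})=|S|-\dim\pi_S(\cC)$ with $S=[p,n-1]$ telescopes to show that the number of dual MSGM generators starting at $p$ is $1-[\text{some word of }\cC\text{ ends at }p]$, and since $\{b_1,\dots,b_n\}=\cI$ and the conventional end positions of $\cC$ are the $b_l$ with linear span, the dual start positions are exactly the $b_l$ with circular span (and symmetrically for end positions). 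The inner-product computation is valid because for circular $(a_l,b_l]$ the supports of $x_l$ and of a putative short dual generator meet only in $\{b_l\}$, and the ``termwise $\geq$ plus bijection implies identity'' step is a correct finite-sum argument. The only places you lean on unstated facts are standard MSGM theory (the predictable start/end property of \cite[Lem.~6.6, Thm.~6.11]{McE96}, needed both for the telescoping identification of the subcodes $\cC^{\perp}\cap\F^{[p,n-1]}$ and for the claim that the set of first/last nonzero positions of codewords equals the set of MSGM start/end points), and the routine verification that for each $l$ a shift $j$ (e.g.\ $j=b_l$) makes $(a_l,b_l]$ circular so the crux can be transported back via $\sigma^j(\cC)^{\perp}=\sigma^j(\cC^{\perp})$. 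Both are easily filled, so I consider the proof complete; it is a reasonable and fairly economical independent derivation of a result the paper simply cites.
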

Moreover, in \cite[Prop.~5.13]{KoVa03} they proved the following:
select~$k$ linearly independent rows of a characteristic matrix of~$\cC$, with spans, say,
$(a_{l_i},b_{l_i}],\,i=1,\ldots,k$, and construct the resulting KV-trellis~$T$.
Pick the $n-k$ rows of the characteristic matrix of~$\cC^{\perp}$ that do \underline{not} have spans $(b_{l_i},a_{l_i}],\,i=1,\ldots,k$
and construct the resulting product trellis~$\hat{T}$.
Then the trellises~$T$ and~$\hat{T}$ share the same SCP.
However, while the trellis~$T$ for~$\cC$ is, by construction, a KV-trellis in the sense of Definition~\ref{D-KVtrellis},
the trellis~$\hat{T}$ might not even represent~$\cC^{\perp}$ because it is not a priori clear whether
the underlying~$n-k$ rows are linearly independent.
Koetter/Vardy conjecture that those rows are indeed always linearly independent and thus the trellis~$\hat{T}$ is a KV-trellis
for~$\cC^{\perp}$.
For all this one has to have in mind that Koetter/Vardy define the characteristic matrix in a way which makes it
unique for a given code and their conjecture makes sense as phrased.
However, in our definition a characteristic matrix is not unique and therefore the linear independence of the selected rows
might even depend on the choice of that matrix.
This is indeed the case as the following example shows.
\begin{exa}\label{E-dualchoice}
Consider again Example~\ref{E-chitrellisnonunique}.
The underlying binary code and its dual are
$\cC=\{0000,\,1100,\,0111,\,1011\}$ and $\cC^{\perp}=\{0000,\,1101,\,0011,\,1110\}$.
A characteristic pair for the dual code~$\cC^{\perp}$ is given by
\[
   (\hat{X},\,\hat{\cT})=\Big(\begin{pmatrix}1&1&0&1\\1&1&0&1\\1&1&1&0\\0&0&1&1\end{pmatrix},\,
                            \big[(1,0],\,(3,1],\,(0,2],\,(2,3]\big]\Big).
\]
In Example~\ref{E-chitrellisnonunique} we observed that the last two rows of~$X$ are linearly independent and thus
give rise to a KV-trellis of~$\cC$.
The according spans are $(2,0]$ and $(3,2]$.
Hence the complementary rows for~$\cC^{\perp}$ are the first two rows of~$\hat{X}$, but they are linearly dependent.
Replacing the first row of~$\hat{X}$ with the codeword $(1,1,1,0)\in\cC^{\perp}$ and keeping the span $(1,0]$, one obtains another characteristic
matrix for~$\cC^{\perp}$ for which that row selection is indeed linearly independent.
All this shows that the conjecture of Koetter/Vardy does depend on the choice of the characteristic matrix.
One might also want to observe that both~$X$ and~$\hat{X}$ are the lexicographically first characteristic matrix of their respective code and
therefore the ones Koetter/Vardy singled out in their approach.
Thus the example also shows that their conjecture does not hold for this particular choice.
\end{exa}

For the specific classes of self-dual codes and cyclic codes the Koetter/Vardy conjecture has been proven in~\cite{KaSh05}.

In the sequel we will show that for row selections leading to minimal
KV-trellises (in the sense of Definition~\ref{D-TBTmin}(a)) there always exists a characteristic matrix of~$\cC^{\perp}$ such that
Koetter/Vardy's conjecture holds true.

\begin{theo}\label{T-dualselect}
Let $\cC,\,\cC^{\perp}\subseteq\F^n$ both have support~$\cI$.
Furthermore, let~$(X,\cT)$ be a characteristic pair of~$\cC$, where $\cT=[(a_l,b_l],l=1,\ldots,n]$.
Suppose the rows of~$X$ with indices $l_1,\ldots,l_k$ are linearly independent and give rise to a minimal
KV-trellis.
Then there exists a characteristic matrix~$\hat{X}$ of~$\cC^{\perp}$ such that
the $n-k$ rows of~$\hat{X}$ that do not have spans $(b_{l_i},a_{l_i}],i=1\ldots,k$,
are linearly independent and give rise to a minimal KV-trellis of~$\cC^{\perp}$.
\end{theo}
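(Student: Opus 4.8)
The plan is to realize the given minimal KV-trellis as a BCJR-trellis, dualize it by the BCJR-duality of Proposition~\ref{P-NSdual}, and then recognize the dual as the KV-trellis built from the complementary characteristic generators of~$\cC^{\perp}$. First I would fix a parity check matrix~$H$ of~$\cC$ and let $G\in\F^{k\times n}$ consist of the rows $x_{l_1},\ldots,x_{l_k}$ with span list $\cS=[(a_{l_i},b_{l_i}],\,i=1,\ldots,k]$. By Theorem~\ref{T-KVNS} the minimal KV-trellis $T:=T_{G,\cS}$ is isomorphic to the BCJR-trellis $T_{(G,H,\cS)}$, so its SCP is $(s_0,\ldots,s_{n-1})$ with $s_i=\rk N_i$. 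Proposition~\ref{P-NSdual} then produces the dual trellis $\hat{T}:=T_{(H,G,D^{\sf T})}$, which represents~$\cC^{\perp}$, has the same SCP (since $\rk N_i^{\sf T}=\rk N_i$), and is minimal for~$\cC^{\perp}$ because~$T$ is minimal for~$\cC$. As $\hat{T}$ is minimal, Theorem~\ref{T-min-trellis} shows it is one-to-one and a $\KV_{(\hat{X},\hat{\cT})}$-trellis for some characteristic pair of~$\cC^{\perp}$; by the uniqueness of the characteristic span list (Theorem~\ref{T-CharMat}) together with Lemma~\ref{L-DualCharMat}, this span list is forced to be $\hat{\cT}=[(b_l,a_l],\,l=1,\ldots,n]$.

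The decisive step is to pin down which of these dual spans~$\hat{T}$ actually uses. Writing $\cA=\{a_{l_i}\}$ and $\cB=\{b_{l_i}\}$ for the (by property~(iii) distinct) starting and ending points of~$\cS$, and $\hat{\cA},\hat{\cB}$ for those of~$\hat{T}$, equality of the two SCPs forces the jump sets to agree, i.e. $\hat{\cA}\setminus\hat{\cB}=\cA\setminus\cB$ and $\hat{\cB}\setminus\hat{\cA}=\cB\setminus\cA$ via Proposition~\ref{P-SCPECP}(c). This determines $\hat{\cA},\hat{\cB}$ only up to their common part $\hat{\cA}\cap\hat{\cB}$, which is exactly the set of positions where a dual span simultaneously starts and ends and therefore leaves the SCP unchanged.

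To resolve this ambiguity I would bring in the edge complexity profile. Combining $\hat{e}_i=s_i+I^{\hat{\cA}}_i$ for~$\hat{T}$ (Proposition~\ref{P-SCPECP}(a)) with the identities $e_i=s_i+I^{\cA}_i$ and $e_i=s_{i+1}+I^{\cB}_i$ for~$T$ (Proposition~\ref{P-SCPECP}(a),(b)) gives $I^{\hat{\cA}}_i=\hat{e}_i-s_i$; the desired $\hat{\cA}=\cI\setminus\cB$, and symmetrically $\hat{\cB}=\cI\setminus\cA$, then follow once one knows that the ECP of the BCJR-dual satisfies $\hat{e}_i=s_i+s_{i+1}+1-e_i$. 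That ECP relation is precisely where the Mittelholzer/Forney edge-space dualization enters: the edge-space dual of~$T$ has SCP $(s_i)$ and ECP $(s_i+s_{i+1}+1-e_i)$, and for a \emph{minimal} trellis it coincides with the BCJR-dual $T_{(H,G,D^{\sf T})}$. I expect establishing this coincidence (equivalently, the ECP formula for~$\hat{T}$) to be the main obstacle, as it is the one place where minimality is genuinely used and where the concrete local-constraint duality must be matched against the recursive definition of the matrices~$N_i$.

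Granting the ECP relation, the conclusion comes out cleanly. Since the $b_l$ are distinct and exhaust~$\cI$, we have $\hat{\cA}=\cI\setminus\cB=\{b_l:l\neq l_i\}$, and because in a characteristic span list each starting point determines its span uniquely (Theorem~\ref{T-CharMat}), the spans of~$\hat{T}$ are exactly $\{(b_l,a_l]:l\neq l_i\}$, that is, the rows of~$\hat{X}$ that do \emph{not} carry the spans $(b_{l_i},a_{l_i}]$. As these rows are the generators of~$\hat{T}$, which is a genuine KV-trellis, they are linearly independent, and~$\hat{T}$ is minimal. Thus~$\hat{X}$ is a characteristic matrix of~$\cC^{\perp}$ for which the complementary row selection is linearly independent and yields a minimal KV-trellis, which is precisely the assertion of the theorem.
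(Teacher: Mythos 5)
Your overall strategy is exactly the paper's: pass from the KV-trellis to the BCJR-trellis $T_{(G,H,\cS)}$ via Theorem~\ref{T-KVNS}, dualize with Proposition~\ref{P-NSdual}, use minimality and Theorem~\ref{T-min-trellis} to see that $\hat{T}=T_{(H,G,N_0^{\sf T})}$ is a $\KV_{(\hat X,\hat\cT)}$-trellis, and then identify its span selection through the edge complexity profiles. But the step you yourself flag as ``the main obstacle'' --- the relation $\hat e_i=s_i+s_{i+1}+1-e_i$, equivalently the coincidence of the BCJR-dual with the Mittelholzer/Forney edge-space dual --- is left unproved, and it is the entire content of the argument. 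As written, your proof does not go through: you cannot invoke that coincidence as a known fact (the paper only establishes it \emph{as a consequence} of this very theorem, and Remark~\ref{R-localdual} shows it genuinely fails for non-minimal BCJR-trellises), so ``granting the ECP relation'' begs the question.

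The way the paper closes this gap is worth noting, because it is weaker and therefore easier than what you are asking for. One writes down the bilinear form $\Pi\bigl((\alpha N_i,a,\tilde\alpha N_{i+1}),(\beta N_i^{\sf T},b,\tilde\beta N_{i+1}^{\sf T})\bigr)=\alpha N_i\beta^{\sf T}+ab-\tilde\alpha N_{i+1}\tilde\beta^{\sf T}$ and checks, using only the recursion $N_{i+1}=N_i+G_iH_i$, that $\hat E_i\subseteq (E_i)^{\circ}$; this is a one-line computation and gives merely the \emph{inequality} $\hat e_i\le s_i+s_{i+1}+1-e_i=s_i+1-I^{\cB}_i$. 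Combined with $\hat e_i=s_i+I^{\hat\cA}_i$ this yields $I^{\hat\cA}_i+I^{\cB}_i\le 1$, i.e.\ $\hat\cA\cap\cB=\emptyset$, and then the cardinality count $|\hat\cA|=n-k$, $|\cB|=k$ forces $\hat\cA=\cI\setminus\cB$. The exact equality $\hat E_i=(E_i)^{\circ}$ falls out afterwards rather than being needed as input. Two smaller points: your detour through Proposition~\ref{P-SCPECP}(c) and the ``jump sets'' is unnecessary once you use the ECP directly, and the appeal to distinct starting points determining spans is already subsumed in Lemma~\ref{L-DualCharMat} plus the uniqueness of the characteristic span list.
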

In the proof we will use a dualization technique for trellises going back to Mittelholzer~\cite{Mi95} and
Forney~\cite[Sec.~VII.D and Thm.~8.4]{Fo01}.
Their construction of a dual trellis amounts to dualizing the edge spaces (called local constraints in~\cite{Fo01})
with respect to a particular bilinear form.
In our case this bilinear form reads as in\eqnref{e-bilinearform} below.
As the proof will show for minimal BCJR-trellises the resulting dual trellis coincides with the BCJR-dual defined in
Proposition~\ref{P-NSdual}.

\begin{proof}
Without loss of generality let us assume $(l_1,\ldots,l_k)=(1,\ldots,k)$ and let $G\in\F^{k\times n}$
be the matrix consisting of the first~$k$ rows of~$X$.
Then~$G$ is a generator matrix of~$\cC$.
As before, let the parity check matrix~$H$ be as in\eqnref{e-Hdata}.
By Theorem~\ref{T-KVNS} we know that the corresponding KV-trellis $T_{G,\cS}$ is isomorphic to the BCJR-trellis $T:=T_{(G,H,\cS)}$,
where $\cS=\big[(a_l,b_l],l=1,\ldots,k\big]$.
Denote the sets of starting and end points by $\cA=\{a_1,\ldots,a_k\}$ and $\cB=\{b_1,\ldots,b_k\}$.
The vertex and edge spaces of~$T$ are given by $\im N_i$ and $E_i:=\im(N_i,G_i,N_{i+1}),\,i\in\cI$,
where~$N_i$ is as in\eqnref{e-NMat},\eqnref{e-DMat}.
Let $s_i:=\rk(N_i)$, thus $(s_0,\ldots,s_{n-1})$ is the SCP of~$T$.

By Proposition~\ref{P-NSdual} the BCJR-trellis $\hat{T}:=T_{(H,G,N_0^{\sf T})}$ represents~$\cC^{\perp}$ and has the same SCP.
Its vertex and edge spaces are given by
$\im\hat{N}_i$ and $\hat{E}_i:=\im(\hat{N}_i,H_i\T,\hat{N}_{i+1})$, where $\hat{N}_i=N_i\T$.
Since~$T$ is a minimal trellis the same is true for~$\hat{T}$; see also \cite[Thm.~4.1]{NoSh06}.
Therefore, by virtue of Theorem~\ref{T-min-trellis}, the trellis~$\hat{T}$ must be a KV-trellis for~$\cC^{\perp}$.
This means, there exists a characteristic pair~$(\hat{X},\,\hat{\cT})$ of~$\cC^{\perp}$ such that
$\hat{T}$ is a $\KV_{(\hat{X},\,\hat{\cT})}$-trellis, which
in turn implies that there exists a selection of~$n-k$ rows of~$\hat{X}$ giving rise to the trellis~$\hat{T}$.
Let~$\hat{\cS}$ be the list of associated characteristic spans and let~$\hat{\cA}$ be the set of
starting points of~$\hat{\cS}$.
Thus, $|\hat{\cA}|=n-k$.

By virtue of Lemma~\ref{L-DualCharMat} we have to show that $\hat{\cA}=\{b_{k+1},\ldots,b_n\}$.
From Theorem~\ref{T-isomorphic} and Proposition~\ref{P-SCPECP} we know
\begin{equation}\label{e-ei}
   \hat{e}_i:=\dim\hat{E}_i=s_i+I^{\hat{\cA}}_i\ \text{ and }e_i:=\dim E_i=s_{i+1}+I^{\cB}_i,\,i\in\cI.
\end{equation}
Using that $\hat{N}_i=N_i\T$ one obtains the following well-defined non-degenerate bilinear form
\begin{equation}\label{e-bilinearform}
\begin{split}
  \Pi:\;(\im N_i\times\F\times\im N_{i+1})\times(\im\hat{N}_i\times\F\times\im\hat{N}_{i+1})&\longrightarrow\qquad \F\\
       \big((\alpha N_i,a,\tilde{\alpha}N_{i+1}),\,(\beta\hat{N}_i,b,\tilde{\beta}\hat{N}_{i+1})\big)\qquad&\longmapsto
          \alpha N_i\beta\T+ab-\tilde{\alpha}N_{i+1}\tilde{\beta}\T.
\end{split}
\end{equation}
Then we have for all $\alpha\in\F^k$ and $\beta\in\F^{n-k}$ and $i\in\cI$
\begin{align*}
  \Pi\big((\alpha N_i,\alpha G_i,\alpha N_{i+1}),\,(\beta\hat{N}_i,\beta H_i\T,\beta\hat{N}_{i+1})\big)
  &=\alpha N_i\beta\T+\alpha G_i H_i\beta\T-\alpha N_{i+1}\beta\T\\
  &=\alpha(N_i+G_i H_i-N_{i+1})\beta\T=0,
\end{align*}
where the last identity follows from the very definition of $N_{i+1}$ in\eqnref{e-NMat}.
Hence $\hat{E_i}\subseteq (E_i)^\circ$, where $(E_i)^{\circ}$ denotes the dual space of~$E_i$ in $\im\hat{N}_i\times\F\times\im\hat{N}_{i+1}$
with respect to the bilinear form~$\Pi$.
Using that $\dim(\im\hat{N}_i\times\F\times\im\hat{N}_{i+1})=\dim(\im N_i\times\F\times\im N_{i+1})=s_i+s_{i+1}+1$ and
applying the second identity in\eqnref{e-ei}, we conclude
\begin{equation}\label{e-ehate}
  \hat{e}_i=\dim\hat{E}_i\leq\dim(E_i)^{\circ}= s_i+s_{i+1}+1-e_i=s_i+1-I^{\cB}_i.
\end{equation}
Along with the first identity in\eqnref{e-ei} this yields
\begin{equation}\label{e-indfunc}
   I^{\hat{\cA}}_i+I^{\cB}_i\leq 1 \text{ for all }i\in\cI
\end{equation}
for the indicator functions on~$\hat{\cA}$ and~$\cB$.
Hence the sets $\hat{\cA}$ and~$\cB$ are disjoint and since the two sets are subsets of~$\cI$ of cardinality~$n-k$ and~$k$,
respectively, we may conclude $\hat{\cA}=\cI\,\backslash\,\cB=\{b_{k+1},\ldots,b_n\}$ as desired.
This completes the proof.
\\
One should notice that we now also have equality in\eqnref{e-indfunc} and\eqnref{e-ehate} and thus derived the identity $\hat{E_i}= (E_i)^\circ$,
which is an instance of \cite[Cor.~8.2]{Fo01}.
\end{proof}

It is worth noting that the proof does not explicitly require the minimality of the trellis, but only that the dual
trellis $\hat{T}=T_{(H,G,N_0^{\sf T})}$ be a KV-trellis.
We strongly believe that Theorem~\ref{T-dualselect} is true for all KV-trellises (and not just minimal ones).
Moreover, we believe that the BCJR-dual of a KV-trellis is isomorphic to the trellis obtained from
the dual row selection of a suitably chosen dual characteristic matrix.

\begin{rem}\label{R-localdual}
The proof above shows that for KV-trellises whose BCJR-dual is a KV-trellis again, this BCJR-dual coincides
with the dual obtained by dualizing the edge spaces.
In other words, $\hat{E}_i=(E_i)^{\circ}$ for $i\in\cI$ in the notation of the proof above.
We wish to point out that this is not always the case for BCJR-trellises, not even if they are one-to-one and
isomorphic to the corresponding product trellis (recall that BCJR-trellises are always mergeable).
An example can be drawn from Example~\ref{E-nonchiNS}.
The trellis $T_{G,\cS}\cong T_{(G,H,\cS)}$ has SCP $(2,1,1,2,2)$ and ECP $(2,2,2,3,2)$.
The dual trellis $\hat{T}=T_{(H,G,N_0^{\sf T})}$ has, of course, the same SCP and its ECP is $(2,1,2,2,2)$.
According to\eqnref{e-ehate} the trellis~$T^{\circ}$ with vertex spaces $\im\hat{N}_i$
and edge spaces $(E_i)^{\circ}$ has ECP $(2,1,2,2,3)$, hence $\hat{E}_4\subsetneq(E_4)^{\circ}$
and $\hat{E}_i=(E_i)^{\circ}$ for $i=0,\ldots,3$.
One computes
\[
   (E_4)^{\circ}=\im\left(\!\!\begin{array}{ccc|c|ccc}
           0&1&1&1&0&1&0\\0&0&1&0&0&0&1\\0&0&0&1&0&0&1\end{array}\!\!\right)\in\im\hat{N}_4\times\F\times\im\hat{N}_0,
\]
which results in the following trellis
\begin{center}
    \includegraphics[height=3.7cm]{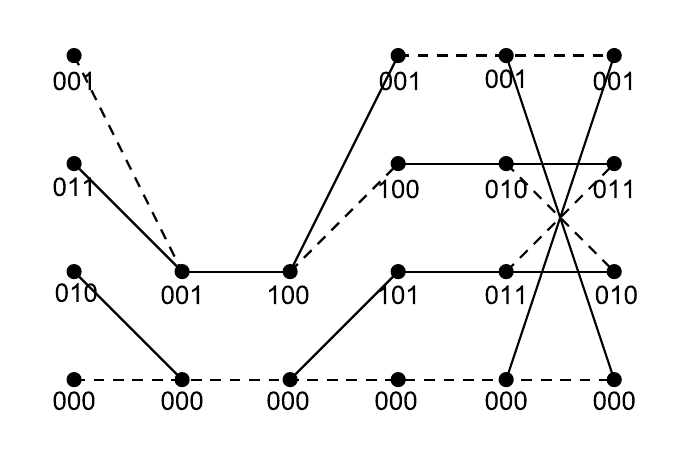}
    \\[-1ex]
    {\tiny [Trellis~$T^{\circ}$]}
\end{center}
As one can easily see, this trellis does indeed represent~$\cC^{\perp}$, is one-to-one, but it is not reduced!
As a consequence, it is not considered a linear trellis due to Definition~\ref{D-TBTbasics}(a).
Removing the~$4$ cross-diagonal edges in~$(E_4)^{\circ}$ results in the BCJR-dual $T_{(H,G,N_0^{\sf T})}$
displayed in Example~\ref{E-nonchiNS}.
\end{rem}

\section*{Some Open Problems}
\begin{liste}
\item In Theorem~\ref{T-min-trellis} we have seen that every minimal trellis is a $\KV_{(X,\cT)}$-trellis for a particular choice
      of the characteristic pair $(X,\cT)$.
      Example~\ref{E-CMmintrellis} showed that indeed a suitable choice of the characteristic matrix is needed for this result to be true.
      In that particular example a different choice of the characteristic matrix leads to a trellis that is (only) structurally isomorphic to the given trellis.
      All this gives rise to the question whether every minimal trellis arises, up to structural isomorphism, from a fixed characteristic matrix?
      It is easy to see that this amounts to investigating whether if a certain row selection of one characteristic matrix is linearly independent and
      gives rise to a minimal trellis of~$\cC$, then the same row selection is linearly independent for each characteristic matrix of the code.
      If this turns out to be true then this implies that in Theorem~\ref{T-dualselect} the statement would be true for each characteristic
      matrix of the dual code.
\item Does Theorem~\ref{T-dualselect} hold true for all KV-trellises? This would be answered in the affirmative if one can show that the BCJR-dual
      of a KV-trellis is a KV-trellis again.
\item From Examples~\ref{E-chitrellisnonunique}/\ref{E-dualchoice} it is clear that the duality result of
      Theorem~\ref{T-dualselect} does indeed depend on the choice of the characteristic matrix of the dual code
      (at least, if the trellises are not minimal).
      However, several examples suggest that one may ask whether there exist a dual pairing of characteristic matrices~$X$ and~$\hat{X}$
      for~$\cC$ and~$\cC^{\perp}$, respectively, such that Koetter/Vardy's duality conjecture holds true for these pairs,
      that is, such that the Koetter/Vardy-dual of a $\KV_{(X,{\mathcal T})}$-trellis is a $\KV_{(\hat{X},\hat{\mathcal T})}$-trellis.
      For instance, for Examples~\ref{E-chitrellisnonunique}/\ref{E-dualchoice} this is true for the pairing $(X',\,\hat{X})$ and  when
      pairing~$X$ with the (unique) remaining characteristic matrix of~$\cC^{\perp}$.
\item Compare the BCJR-dualization with the dualization of trellises via dualizing the edge spaces as in~\cite{Mi95,Fo01}.
      More specifically, under which conditions do these dual trellises coincide when using
      the bilinear form\eqnref{e-bilinearform}?
\end{liste}

\bibliographystyle{abbrv}

\end{document}